\title{Encoding impredicative hierarchy of type universes with variables} 
\author{Yoan Géran}{Mines Paris - PSL, Centre de Recherche en Informatique \and Université Paris-Saclay, Laboratoire Méthodes Formelles, ENS Paris-Saclay}{yoan.geran@minesparis.psl.eu}{}{}
\authorrunning{Y. Géran}
\keywords{type theory, logical framework, rewriting theory, type universes} 
\newoperator{\prodrule}{imax}
\renewoperator{\R}{imax}
\newoperator{\Var}{\mathcal{V}}
\newcommand*{\lambdapimod}{\lambda\Pi/\equiv}
\newcommand*{\X}{\mathcal{X}}
\newcommand*{\Term}{\mathcal{L}}
\newcommand{\SubTerm}{\Term_{\text{s}}}
\newcommand{\Sub}{\text{s}}
\newcommand{\NFTerm}{\Term_{r}}
\newcommand{\NF}{\text{r}}
\newcommand{\eqTerm}{=_{\Term}}
\newcommand{\leqTerm}{\leqslant_{\Term}}
\newoperator{\A}{\mathcal{A}}
\newoperator{\Ap}{\mathcal{A}'}
\newoperator{\B}{\mathcal{B}}
\newoperator{\nf}{nf}
\newoperator{\simplify}{simplify}
\newoperator{\simpl}{simpl}
\let\rewrites\hookrightarrow
\newcommand*{\enlarge}{\vphantom{\mathord{\big(}}}
\newcommand*{\interprete}[1]{
	\left\llbracket #1 \right\rrbracket
}
\newcommand*{\valuation}[2]{
	\interprete{#1}_{#2}
}
\newcommand{\I}{\text{\textnormal{\texttt{I}}}}
\newcommand{\Nat}{\text{\textnormal{\texttt{N}}}}
\newoperator{\s}{\text{\textnormal{\texttt{s}}}}
\newcommand{\zero}{\text{\textnormal{\texttt{0}}}}
\newcommand{\plus}{\mathbin{\text{\textnormal{\texttt{+}}}}}
\newoperator{\maxN}{\text{\textnormal{\texttt{max}}}_{\Nat}}
\newcommand{\Bool}{\text{\textnormal{\texttt{B}}}}
\newcommand{\true}{\mathalpha{\text{\textnormal{\texttt{true}}}}}
\newcommand{\false}{\text{\textnormal{\texttt{false}}}}
\newoperator{\ands}{\text{\textnormal{\texttt{and}}}}
\newoperator{\ors}{\text{\textnormal{\texttt{or}}}}
\newoperator{\nots}{\text{\textnormal{\texttt{not}}}}
\newcommand*{\Set}[1]{\text{\textnormal{\texttt{S[$#1$]}}}}
\newcommand*{\cons}[1]{\mathtt{::}_{#1}}
\newcommand*{\nil}[1]{\text{\textnormal{\texttt{\{\}}}}_{#1}}
\newcommand*{\add}[1]{\mathbin{\text{\textnormal{\texttt{<<}}}}_{#1}}
\newcommand*{\union}[1]{\mathbin{\text{\textnormal{\texttt{++}}}}_{#1}}
\newcommand*{\included}[1]{\mathbin{\text{\textnormal{\texttt{in}}}}_{#1}}
\newcommand*{\sub}[1]{\mathbin{\text{\textnormal{\texttt{sub}}}}_{#1}}
\newcommand*{\del}[1]{\mathbin{\text{\textnormal{\texttt{\textbackslash}}}}_{#1}}
\newcommand*{\setS}[2]{\text{\textnormal{\texttt{\{$#2$\}}}}_{#1}}
\newcommand{\NatSet}{\Set{\Nat}}
\newoperator{\iteop}{\text{\textnormal{\texttt{ite}}}}
\newcommand*{\eqs}[1]{\mathbin{\text{\textnormal{\texttt{=}}}_{#1}}}
\newcommand*{\leqs}[1]{\mathbin{\text{\textnormal{\texttt{<=}}}_{#1}}}
\newcommand*{\lqs}[1]{\mathbin{\text{\textnormal{\texttt{<}}}_{#1}}}
\newcommand{\Level}{\text{\textnormal{\texttt{L}}}}
\newcommand{\zeroL}{\text{\textnormal{\texttt{0}}}_{\Level}}
\newoperator{\sL}{\text{\textnormal{\texttt{s}}}_{\Level}}
\newoperator{\maxL}{\text{\textnormal{\texttt{max}}}_{\Level}}
\newoperator{\varL}{\text{\textnormal{\texttt{var}}}_{\Level}}
\newoperator{\ruleL}{\text{\textnormal{\texttt{imax}}}_{\Level}}
\newcommand{\SLevel}{\text{\textnormal{\texttt{L}}}_{\text{\textnormal{\texttt{S}}}}}
\newoperator{\sSL}{\text{\textnormal{\texttt{s}}}_{\SLevel}}
\newoperator{\ruleSL}{\text{\textnormal{\texttt{imax}}}_{\SLevel}}
\newoperator{\As}{\text{\textnormal{\texttt{a}}}}
\newoperator{\Bs}{\text{\textnormal{\texttt{b}}}}
\newoperator{\addable}{\text{\textnormal{\texttt{add?}}}}
\newcommand{\SLSet}{\Set{\SLevel}}
\newoperator{\maxhelper}{\text{\textnormal{\texttt{maxhelper}}}}
\newoperator{\rulehelper}{\text{\textnormal{\texttt{imax\_aux}}}}
\newoperator{\maxS}{\text{\textnormal{\texttt{max}}}_{\SLevel}}
\newoperator{\Prf}{\text{\textnormal{\texttt{Prf}}}}
\newcommand*{\order}[1]{\mathbin{\text{\textnormal{\texttt{leq}}}}_{#1}}
\newcommand*{\strictOrder}[1]{\mathbin{\text{\textnormal{\texttt{lq}}}}_{#1}}
\newcommand{\voidP}{\text{\texttt{\_}}}
\newoperator{\evalS}{\text{\textnormal{\texttt{eval}}}_{\SLevel}}
\newoperator{\evalL}{\text{\textnormal{\texttt{eval}}}_{\Level}}
\newcommand*{\ifs}[1][]{\mathop{\text{\textnormal{\texttt{if}}}}}
\newcommand*{\thens}[1][]{\mathop{\text{\textnormal{\texttt{then}}}}}
\newcommand*{\elses}[1][]{\mathop{\text{\textnormal{\texttt{else}}}}}
\newoperator{\setPart}{\mathcal{P}}
\newoperator{\succSub}{s_{\Sub}}
\newoperator{\maxSub}{max_{\Sub}}
\newcommand*{\translation}[1]{\left|#1\right|}
\newcommand*{\subst}[2]{\left[#1\right]#2}
\newcommand{\Dedukti}{\textsc{Dedukti}}
\newcommand{\rewriteDef}{\longrightarrow}
\newcommand{\tese}{\vdash}
\newcommand{\Axiom}{\mathcal{A}}
\newcommand{\Rule}{\mathcal{R}}
\newcommand{\Sort}{\mathcal{S}}
\newoperator{\repr}{repr}
\newoperator{\Type}{Type}
\newoperator{\Kind}{Kind}
\newoperator{\Prop}{Prop}
\newcommand{\sType}{\text{\texttt{Type}}}
\newoperator{\UType}{\text{\texttt{U}}}
\newoperator{\UTerm}{\text{\texttt{u}}}
\newoperator{\El}{\text{\texttt{El}}}
\newcommand*{\Agda}{\textsc{Agda}}
\newcommand*{\Coq}{\textsc{Coq}}
\newcommand*{\Lean}{\textsc{Lean}}
\newcommand*{\Matita}{\textsc{Matita}}
\newcommand*{\newrule}[1]{\left(#1\right)}
\newoperator{\imax}{imax}
\newoperator{\simax}{\text{\textnormal{\texttt{imax}}}}
\newoperator{\smax}{\text{\textnormal{\texttt{max}}}}
\newcommand*{\app}[2]{\mathalpha{#1}\;\mathalpha{#2}}
\newcommand{\ICC}{\text{CC}^{\infty}}
\newcommand\TTTT{%
	\textsf{T\kern-0.15em\raisebox{-0.55ex}T\kern-0.15emT\kern-0.15em\raisebox{-0.55ex}2}%
}
\newcommand{\Lambdapi}{\textsc{Lambdapi}}
\newcommand{\Kontroli}{\textsc{Kontroli}}
\newcommand{\HOL}{\textsc{HOL-Light}}
\begin{document}

\maketitle

\begin{abstract}
Logical frameworks can be used to translate proofs from a proof system to another one. 
For this purpose, we should be able to encode the theory of the proof system in the logical framework. 
The Lambda Pi calculus modulo theory is one of these logical frameworks. Powerful theories such as 
pure type systems with an infinite hierarchy of universes have been encoded, leading to partial encodings 
of proof systems such as Coq, Matita or Agda. In order to fully represent systems such as Coq and Lean, we 
introduce a representation of an infinite universe hierarchy with an impredicative universe and universe
variables where universe equivalence is equality, and implement it as a terminating and confluent rewrite system.
\end{abstract}

\section{Introduction}

The formalization of mathematical theorems and the verification of softwares
are done in several tools, and many logical systems and theories are developed
as the research on proof-checking makes progress. Interoperability is then
a big challenge which aims to avoid the redevelopment of the same proof in
each system. Instead of developing translators from each system to another
ones, \emph{logical frameworks} proposes to define theories in a common language, which makes makes translation easier. Thus, the logical framework should be
expressive enough and work should be done to define the wanted theories in 
the framework. 

In this paper, our goal is to show how to define the universe levels of 
the theory of the \Coq{} proof system in one of these framework, the $\lambda\Pi$-calculus 
modulo rewriting. Since a lot of theories are expressed as extensions of \emph{Pure Type Systems}, 
the first part of this introduction will define them. Then, we will present the 
$\lambda\Pi$-calculus modulo rewriting and the type system behind Coq, in particular the
universe levels.

\paragraph*{Pure Type Systems}

A lot of theories are based on extensions on Church's simply-typed $\lambda$-calculus (STLC). In \cite{BarendregtCube}, Barendregt introduced the $\lambda$-cube which classifies type systems depending on the possibility to quantify on types or terms to build new types or new terms. It captures systems such as System F (with type polymorphism), $\lambda\underline{\omega}$ (with type operators), $\lambda\Pi$ (with dependent types), or the Calculus of Constructions (CC) which allows all these quantifications.

More generally, these constructions can be extended, leading to more powerful systems called \emph{Pure Type Systems} \cite{BarendregtPTS, berardiPTS}.

\begin{definition}
	A Pure Type System (PTS) is defined by a set of sorts $\Sort$ (that we will also call universes), a set of axioms $\Axiom \subseteq \Sort^2$
	and a set of rules $\Rule \subseteq \Sort^3$. 
\end{definition}
$\Axiom$ describes the sorts typing ($s_1$ has the type $s_2$ when $(s_1, s_2) \in \Axiom$),
and $\Rule$ describes the possible quantifications and their typing rules. 
The terms are the following, where $s \in \Sort$ and $x$ is an element of a countable
set of variables $\X$.
\[
t \coloneqq s \mid x \mid (x \colon t) \to t \mid 
(\lambda x\colon t \cdot u) \mid tt
\]
and the typing rules are given in \cref{fig-pts}.
\begin{center}
	(\textsc{Empty})
	\begin{prooftree}
		\infer0{[\phantom{\cdot}] \text{ WF}}
	\end{prooftree}	
	\hfill 	
	(\textsc{Decl}) \begin{prooftree}
		\hypo{\Gamma \tese A \colon s}
		\hypo{x \not\in \Gamma}
		\infer2{\Gamma, x \colon A \text{ WF}}
	\end{prooftree}	
	\hfill 	
	(\textsc{Var}) \begin{prooftree}
		\hypo{\Gamma \text{ WF}}
		\hypo{(x \colon A) \in \Gamma}
		\infer2{\Gamma \tese x \colon A}
	\end{prooftree}	
	
	\vspace{1em}
	
	(\textsc{Sort}) \begin{prooftree}
		\infer0[$(s_1, s_2) \in \Axiom$]
		{\tese s_1 \colon s_2}
	\end{prooftree}	
	\hfill 
	(\textsc{Prod}) \begin{prooftree}
		\hypo{\Gamma \tese A \colon s_1}
		\hypo{\Gamma, x \colon A \tese A \colon s_2}
		\infer2[$(s_1, s_2, s_3) \in \Rule$]
		{\Gamma \tese \Pi x \colon A \cdot B \colon s_3}
	\end{prooftree}	
	
	\vspace{1em}
	
	(\textsc{App}) \begin{prooftree}
		\hypo{\Gamma \tese t \colon \Pi x \colon A \cdot B}
		\hypo{\Gamma, \tese u \colon A}
		\infer2{\Gamma \tese \app{t}{u} \colon B}
	\end{prooftree}	
	\hfill	
	(\textsc{Abs}) \begin{prooftree}
		\hypo{\Gamma, x \colon A \tese t \colon B}
		\hypo{\Gamma \tese \Pi x \colon A \cdot B \tese s}
		\infer2{\Gamma \tese \lambda x \cdot t \colon B}
	\end{prooftree}	
	
	\vspace{1em}
	
	(\textsc{Conv}) \begin{prooftree}
		\hypo{\Gamma \tese A \colon s}
		\hypo{\Gamma, \tese t \colon A}
		\hypo{A \equiv_{\beta} B}
		\infer3[$s \in \Sort$]{\Gamma \tese t \colon B}
	\end{prooftree}	
	\captionof{figure}{Typing rules}
	\label{fig-pts}
\end{center}

\begin{definition}[Functional and full PTS]
	A PTS is said functional if $\Axiom$ and $\Rule$
	are functional relations from $\Sort$ and $\Sort \times \Sort$ to $\Sort$, that is to say
	$(s_1, s_2) \in \Axiom \land (s_1, s_3) \in \Axiom \implies s_2 = s_3$ and
	$(s_1, s_2, s_3) \in \Rule \land (s_1, s_2, s_4) \in \Rule 
	\implies s_3 = s_4$.
	
	A PTS is called full if $\Axiom$ and $\Rule$ are total
	functions from $\Sort$ and $\Sort \times \Sort$ to $\Sort$.
\end{definition}

\paragraph*{The $\lambda\Pi$-calculus modulo rewriting}


$\lambda\Pi$, the extension of STLC with dependent types, is the language of the 
\emph{Edinburgh Logical Framework} (ELF) \cite{ELF}. However, computation plays 
an essential role in type theories, then in modern proof assistant, and 
$\lambda\Pi$ is not well-suited for this. To address this point, the $\lambda\Pi$-calculus 
modulo rewriting ($\lambdapimod$) \cite{lambdapimoduniversal} extends $\lambda\Pi$ by allowing 
user-defined higher-order rewrite rules \cite{jouannaudRewrite, terese} 
that can be used to define functions but also types. Types are then identified
modulo $\beta$ and these rewrite rules. 

 
In order to have good properties such as the decidability of the type-checking, the rewrite rules introduced 
should preserves typing and form a confluent and strongly normalizing rewrite system, which adds some 
restrictions and requires more efforts to show that these properties are respected.

\begin{remark}
In the rest of this article, we will use the syntax $u \rewriteDef v$ (where $u$ may contains free variables used for matching) to define a rewrite 
rule and the syntax $u \rewrites v$ to indicate that the term $u$ rewrites to the term $v$.
\end{remark}

The $\lambdapimod$ can express CC and its subtheories \cite{theoryU}, and, in \cite{DowekCousineau}, 
Cousineau and Dowek show how to embed functional PTS (with a possibly infinite number of symbols and rules):
\begin{enumerate}
	\item for each sort $s$, symbols $\UType_{\mathtt{s}} \colon \sType$ and $\El_{\mathtt{s}} \colon \UType_{\mathtt{s}} \to \Type$,
	\item for each axiom $(s_1, s_2)$, a symbol $\UTerm_{\mathtt{s_1}} \colon \UType_{\mathtt{s_2}}$ and a rewrite
	      rule $\El_{\mathtt{s_2}}(s_1) \rewriteDef \UType_{\mathtt{s_1}}$,
	\item for each rule $(s_1, s_2, s_3)$, a symbol 
	      $\pi_{\mathtt{(s_1, s_2)}} \colon (x \colon \UType_{\mathtt{s_1}}) \to (\El_{\mathtt{s_1}}(x) \to \UType_{\mathtt{s_2}}) \to \UType_{\mathtt{s_3}}$
	      and a rewrite rule 
	      $\El_{\mathtt{s_3}}(\pi_{\mathtt{(s_1, s_2)}}(A, B)) \rewriteDef 
	      (x \colon \El_{\mathtt{s_1}}(A)) \to \El_{\mathtt{s_2}}(B)$.
\end{enumerate} 
$\UTerm_{\mathtt{s}}$ corresponds to the sort $s$ as a term of type $s'$, 
$\UType_{\mathtt{s}}$ to the sort $s$ as a type, and $\El_{\mathtt{s}}$ associates a sort (as term) of type $s$ to its corresponding type, hence
the rewrite rule added for each axiom. And $\pi_{\mathtt{(s_1, s_2)}} A B$ is the term corresponding to the types  of the function from $A$ to $B(A)$, hence the rewrite rule added to obtain the type associated to this term.

Several systems have been encoded in $\lambdapimod$: \HOL{} \cite{Thire, theseAssaf}, 
\Agda{} \cite{genestier_universes}, \Matita{} \cite{theseAssaf}, but also parts of \Coq, on which we will come back to later. Besides, 
since there exists multiple implementations of the $\lambdapimod$ such as \Dedukti{} \cite{dedukti}, 
\Lambdapi{} \cite{lambdapi}, or \Kontroli{} \cite{kontroli}, these embeddings have been effectively
implemented leading to translations from the proofs systems to these implementations, but also to 
translations from these implementations of $\lambdapimod$ to proofs assistants \cite{Thire, theseThire}.

\paragraph*{\Coq's type system}

The theory of \Coq{} is based on CC extended with an infinite hierarchy of universes and an
impredicative universe $\Prop$. It corresponds to a slightly different version of the following 
PTS (where $\Prop$ is denoted as $\Type_0$).

\begin{definition}[Impredicative max]
	We define $\imax \colon \N \to \N \to \N$ by $\imax[i, 0] = 0$ and
	$\imax[i, j + 1] = \max[i, j + 1]$.
\end{definition}

\begin{definition}[$\ICC$]
	$\ICC$ is the full PTS defined with an infinite sequence of sorts $\Type_i$ indexed on $\N$, the 
	axioms $\newrule{\Type_i, \Type_{i + 1}}$, and the rules $\newrule{\Type_i, \Type_j, \Type_{\imax[i, j]}}$.
\end{definition}

\begin{remark}
One can also define a predicative PTS where the products from $\Type_i$ to $\Type_j$ are 
elements of $\Type_{\max[i, j]}$ instead of $\Type_{\imax[i, j]}$. This latter is the building on \Agda{}
proof system while $\ICC$ is the one of \Coq{} but also of \Lean{} or \Matita.
\end{remark}

In both cases, the sorts are characterized by \emph{levels} indexed on $\N$;
the functions $\Rule$ and $\Axiom$ can be defined in the $\lambdapimod$, and
we can adapt the general embedding of Cousineau and Dowek to a finite
embedding. For that, we define the natural numbers $\Nat$ with the successor function $\s$, $\smax \colon \Nat \to \Nat \to \Nat$, $\simax \colon \Nat \to \Nat \to \Nat$, and
\begin{itemize}
	\item symbols $\UType \colon \Nat \to \sType$ and 
	      $\El \colon (i \colon \Nat) \to \UType(i) \to \Type$,
	\item a symbol $\UTerm \colon (i \colon \Nat) \to \UType (\s(i))$ and a rewrite
	rule $\app{\app{\El}{\voidP}}{i} \rewriteDef \app{\UType}{i}$,
	\item a symbol $\pi \colon (i \colon \Nat) \to (j \colon \Nat) \to (A \colon 
	\app{\UType}{i}) \to (\app{\app{\El}{i}}{A} \to \app{\UType}{j}) \colon \app{\UType}{(\app{\app{\simax}{i}}{j})}$
	and a rewrite rule 
	$\app{\app{\El}{\voidP}}{(\app{\app{\app{\app{\pi}{i}}{j}}{A}}{B})} \rewriteDef 
	(x \colon \app{\app{\El}{i}}{A}) \to \app{\app{\El}{j}}{(\app{B}{x})}$.
\end{itemize}

\Coq{} extend $\ICC$ with other features. Some of them have been encoded, 
leading to a partial translator from \Coq{} to \Dedukti{} \cite{Burel, theseFerey}, and to 
the sharing of developements of the GeoCoq library \cite{geocoq}, a formalization of geometry,
to other proof assistants \cite{sttfageocoq}. Extensions such as inductive types \cite{PaulinCoquand,Paulin}
and cumulativity \cite{luocumul} have been widely covered: Burel and Boespflug in
\cite{Burel}, then Férey in his thesis \cite{theseFerey} propose embeddings of inductive constructions 
and cumulativity have been studied by Assaf \cite{assaf_explicit_subtyping, theseAssaf},
Férey \cite{theseFerey} or Thiré \cite{theseThire}. 

In this paper, we are interested in another feature, the level variables, which permits to extend $\ICC$ with floating universes \cite{HuetType} or 
with universe polymorphism \cite{SozeauTabareau,HarperPollack,Courant}.  

\paragraph*{Level variables}

We extend the syntax of the levels with variables.

\begin{definition}[Levels]
	A level is a term of the grammar 
	\[
		t \coloneqq 0 \mid s(t) \mid \max[t, t] \mid \imax[t, t]
	\]
	where $x$ is an element of a countable set of variables $\X$.
	We denote by $\Term$ the set of the levels, and we say that $t$ is a concrete levels if $t$
	does not contain any variable.
\end{definition}

\begin{definition}[Valuation]
	A valuation is a function $\sigma \colon \X \to \N$.
\end{definition}

\begin{definition}
	Let $\sigma \colon \X \to \N$ be a a valuation. We define 
	inductively the value of a level $t$ over $\sigma$,
	denoted as $\valuation{t}{\sigma}$
	with
	\begin{gather*}
		\valuation{0}{\sigma} = 0 \qquad
		\valuation{s(t)}{\sigma} = s(\valuation{t}{\sigma}) \qquad
		\valuation{x}{\sigma} =  \sigma(x)\\
		\valuation{\max[t_1, t_2]}{\sigma} = 
		\max[\enlarge\valuation{t_1}{\sigma}, \valuation{t_2}{\sigma}]\qquad
		\valuation{\prodrule[t_1, t_2]}{\sigma} = 
		\prodrule[\enlarge\valuation{t_1}{\sigma}, \valuation{t_2}{\sigma}]
	\end{gather*}
\end{definition}

We use the same symbol $s$, $\max$, and $\imax$ for the syntax of the levels and the functions of the
natural numbers. However, levels are abstract terms and are interpreted through valuations.  
Besides, the concrete levels can clearly be identified as the natural numbers which justifies 
the use of the same symbol and permits to see the interpretation as a function that
\emph{concretizes} a level, turning it into a concrete level. 

\begin{definition}[Level comparison]
	Let $t_1, t_2 \in \Term$. We say that $t_1 \leqTerm t_2$
	if for all valuations $\sigma$,
	$\valuation{t_1}{\sigma} \leqslant \valuation{t_2}{\sigma}$.
	In the same way, we say that $t_1 \eqTerm t_2$ if for all valuations
	$\sigma$, $\valuation{t_1}{\sigma} = \valuation{t_2}{\sigma}$.
	Hence $t_1 \eqTerm t_2$ if and only if $t_1 \leqTerm t_2$ and $t_2 \leqTerm t_1$.
\end{definition}

With level variables, the equivalence is no more the syntactic equality and the above embedding does not reflect it anymore: $\smax[x, y]$ and $\smax[y, x]$ are not convertible and adding rules for that would lead to a non-terminating system. In the same way, commutativity and
equivalences such as $\max[x, x] \eqTerm x$ or $\max[s(x), x)] \eqTerm s(x)$ are hard to express, and the impredicatity introduces other:$\prodrule[x, x] \eqTerm x$, $\max[\R[x, y], x] \eqTerm \max[x, y]$, etc.

And yet, a correct embedding of the levels should reflect these equivalences. For instance, a term of the 
universe $\Type_x$ is also a term of the universe $\Type_{\imax[x, x]}$, and then we should be able to 
identify the universes such as $\Type_x$ and $\Type_{\imax[x, x]}$. This paper presents a new embedding 
that faithfully represents levels with variables.

\paragraph*{Related work}

Some solutions have been studied in the predicative case. The big issue is the associativiy and commutativity of the $\max$ symbol. In \cite{genestier_universes}, Genestier solved this problem to encode \Agda's universe polymorphism. For that, he used rewriting modulo associativity and commutativity (AC). The idea,
also mentionned in a draft of Voevodsky \cite{voevodsky}, is to represent each level as 
$\max[n, n_1 + x_1, \ldots, n_k + x_k]$ where $n \geqslant \max[n_1, \ldots, n_k]$. Besides, if there exists $i \neq j$ such that $x_j = x_i$, we simplify the term and keep only
$\max[n_i, n_j] + x_i$. Then, we obtain a minimal representation of terms of the $\max$-successor algebra.

Blanqui gives another presentation of this algebra in \cite{blanqui_universes}, with an encoding without matching modulo AC. However, this solution requires to keep the level in 
some AC canonical form, and can then require the modification of the $\lambdapimod$ type-checker.

The $\imax$-successor algebra is less studied and we do not know easy ways to reflect its equalities.
A confluent encoding is proposed in \cite{assaf_universes}, but it does not fully reflect the equalities; for instance, the levels $\max[\prodrule[x, y], x]$ and $\max[x, y]$ are not convertible. 
Besides, Férey designed a non-confluent encoding of universe polymorphism in \cite{theseFerey}.

\paragraph*{Contribution and outline}

We introduce a new representation for the levels, using the idea presented above in the predicative case: find a set of subterms such that any level can be expressed as a maximum of subterms. They should be easily comparable to simplify $\max[u, v]$ into $u$ if $u \leqTerm v$ and obtain minimal
representations, and they should ensure the uniqueness property: 
\begin{gather*}
	\max[u_1, \ldots, u_n] \eqTerm \max[v_1, \ldots, v_m] \iff
	\set{u_1, \ldots, u_n} = \set{v_1, \ldots, v_m}.
\end{gather*}
Intuitively, the subterms should be very basic and simple: a subterm $u$ must not be equivalent to a maximum of other subterms. With this representation, we obtain a deep understanding of the $\imax$-successor algebra,
and an easy procedure-decision for the level inequality problem.

In the \cref{sec-level-transformation}, we study the semantic of the $\prodrule$ operator and establish a suitable set of subterms. Then, in the \cref{sec-uniqueness}, we introduce the minimal representation and shows that equivalent terms have the same minimal representation. And the \cref{sec-implementation} is dedicated to the implementation of this representation into the $\lambdapimod$ as a first-order confluent and terminating rewrite system. An implementation in \Dedukti{} is available on \url{https://gitlab.crans.org/geran/dedukti-level-implementation}.



\section{Universe representation in impredicative hierarchy} \label{sec-level-transformation}

In this section, we study the $\prodrule$ operators and its interaction with $\max$ and 
the successor and  establish semantic equalities that permits to simplify the levels
in order to find a set of sublevels for the desired representation.

\subsection{Levels as maximum}

The very first step is to show that any level can be express as a maximum of levels 
that do not contain any $\max$, that is the principle of our idea of representation.
The succesor can be distributed over $\max$, the two next propositions show
how to distribute $\prodrule$ over $\max$.

\begin{restatable}{proposition}{propmaxright} \label{prop-max_right}
	For all $u, v, w \in \Term$, $\prodrule[u, \max[v, w]] \eqTerm \max[\prodrule[u, v], \prodrule[u, w]]$.
\end{restatable}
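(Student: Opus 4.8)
The plan is to unfold the definition of $\eqTerm$ and reduce the claim to a pointwise identity about natural numbers. By definition, $\prodrule[u, \max[v, w]] \eqTerm \max[\prodrule[u, v], \prodrule[u, w]]$ means that for every valuation $\sigma$ we have $\valuation{\prodrule[u, \max[v, w]]}{\sigma} = \valuation{\max[\prodrule[u, v], \prodrule[u, w]]}{\sigma}$. Pushing the valuation through the syntax using the defining clauses, and writing $a = \valuation{u}{\sigma}$, $b = \valuation{v}{\sigma}$, $c = \valuation{w}{\sigma}$ (all natural numbers), the left-hand side evaluates to $\imax[a, \max[b, c]]$ and the right-hand side to $\max[\imax[a, b], \imax[a, c]]$. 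So it suffices to prove the arithmetic identity
\[
	\imax[a, \max[b, c]] = \max[\imax[a, b], \imax[a, c]] \qquad \text{for all } a, b, c \in \N.
\]

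First I would dispatch this by a case split on whether $b$ and $c$ are zero, using the definition $\imax[i, 0] = 0$ and $\imax[i, j+1] = \max[i, j+1]$. If $b = c = 0$, both sides are $0$. If exactly one of them is zero, say $c = 0$ and $b \geq 1$, then $\max[b, c] = b \geq 1$, so the left side is $\max[a, b]$; the right side is $\max[\imax[a,b], \imax[a,0]] = \max[\max[a,b], 0] = \max[a,b]$, and symmetrically if $b = 0$. If both $b \geq 1$ and $c \geq 1$, then $\max[b,c] \geq 1$, so the left side is $\max[a, \max[b,c]]$ while the right side is $\max[\max[a,b], \max[a,c]]$; both equal $\max[a,b,c]$ by associativity, commutativity and idempotence of $\max$ on $\N$.

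This proof is entirely routine; I do not anticipate any real obstacle. The only point requiring a little care is the bookkeeping in the case analysis — making sure the clause $\imax[i,0]=0$ is applied on the correct argument position (it is the second argument that triggers the collapse to $0$, not the first), and that the "exactly one is zero" case is handled on both sides. A cleaner alternative, if one wants to avoid the three-way split, is to observe that $\imax[a, t] = \max[a, t]$ whenever $t \geq 1$ and $\imax[a, 0] = 0$, then note that $\max[b,c] = 0 \iff b = 0 \land c = 0$, so the two sides agree on the "all zero" locus and elsewhere reduce to an identity purely about $\max$ on $\N$.
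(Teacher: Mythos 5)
Your proof is correct and follows essentially the same route as the paper's: fix a valuation, push it through the syntax, and split into the four cases according to whether $\valuation{v}{\sigma}$ and $\valuation{w}{\sigma}$ are zero. The only cosmetic difference is that you phrase the case analysis as an arithmetic identity over $\N$ rather than directly on the valuations, which changes nothing of substance.
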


\begin{restatable}{proposition}{propmaxleft} \label{prop-max_left}
	For all $u, v, w \in \Term$,
	$\prodrule[\max[u, v], w] \eqTerm \max[\prodrule[u, w], \prodrule[v, w]]$.
\end{restatable}

Then, any level can then be expressed as a maximum of levels without $\max$.
Note that for this, we consider that $\max$ takes a set of levels as argument.
We obtain this theorem.

\begin{theorem} \label{th-distr_rul_max}
	For all $t \in \Term$, there exists $u_1, \ldots, u_n$ in the grammar 
	$t \coloneqq 0 \mid s(t) \mid \prodrule[t, t]$ such that
	$t \eqTerm \max[u_1, \ldots, u_n]$.
\end{theorem}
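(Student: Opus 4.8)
The plan is to prove this by structural induction on the level $t$, using the two distributivity propositions (\cref{prop-max_right} and \cref{prop-max_left}) together with the associativity/commutativity/flattening of $\max$ as an operator on finite sets of levels. The invariant is that we can always push all occurrences of $\max$ to the outermost position, writing $t \eqTerm \max[u_1, \ldots, u_n]$ with each $u_i$ in the $\max$-free grammar $u \coloneqq 0 \mid s(u) \mid \prodrule[u, u]$.

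First I would set up the base cases. If $t = 0$, then $t = \max[0]$ with $0$ already in the target grammar, so $n = 1$ works. (If variables $x$ are included in the grammar of levels — the definition lists $0 \mid s(t) \mid \max[t,t] \mid \imax[t,t]$ but mentions variables $x \in \X$, so presumably $x$ is an implicit production — then $x = \max[x]$ is likewise already $\max$-free and handled the same way.) For the inductive step I would distinguish three cases according to the head constructor of $t$:

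\emph{Successor.} If $t = s(t')$, the induction hypothesis gives $t' \eqTerm \max[u_1, \ldots, u_n]$ with each $u_i$ $\max$-free. Since $s$ distributes over $\max$ — that is, $s(\max[v, w]) \eqTerm \max[s(v), s(w)]$, which follows directly from the definition of the valuation and the monotonicity of the successor on $\N$ — we get $t = s(t') \eqTerm s(\max[u_1, \ldots, u_n]) \eqTerm \max[s(u_1), \ldots, s(u_n)]$, and each $s(u_i)$ is again $\max$-free, as required.

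\emph{Maximum.} If $t = \max[t_1, t_2]$, apply the induction hypothesis to $t_1$ and $t_2$ to obtain $t_1 \eqTerm \max[u_1, \ldots, u_n]$ and $t_2 \eqTerm \max[v_1, \ldots, v_m]$ with all $u_i, v_j$ $\max$-free. Then $t \eqTerm \max[u_1, \ldots, u_n, v_1, \ldots, v_m]$ by associativity and flattening of $\max$ (i.e.\ $\max[\max[\vec a], \max[\vec b]] \eqTerm \max[\vec a, \vec b]$, immediate from the valuation), which is the desired form. \emph{Impredicative max.} If $t = \prodrule[t_1, t_2]$, again write $t_1 \eqTerm \max[u_1, \ldots, u_n]$ and $t_2 \eqTerm \max[v_1, \ldots, v_m]$. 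Using \cref{prop-max_left} to distribute $\prodrule$ over the $\max$ in its first argument, then \cref{prop-max_right} to distribute over the $\max$ in the second argument (and congruence of $\eqTerm$ under $\prodrule$, which is immediate from the valuation definition), we get
\[
	t \eqTerm \max_{1 \le i \le n,\ 1 \le j \le m} \prodrule[u_i, v_j],
\]
a maximum of the $nm$ levels $\prodrule[u_i, v_j]$, each of which is $\max$-free since $u_i$ and $v_j$ are. This completes the induction.

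I do not expect a genuine obstacle here; the proof is essentially bookkeeping. The one point requiring a little care is the precise treatment of $\max$ as a variadic/set-valued operator: the statement says ``we consider that $\max$ takes a set of levels as argument,'' so I would either fix at the outset the conventions $\max[\,] $ is disallowed (every expansion has $n \ge 1$), $\max[u] \eqTerm u$, $\max[\vec a, \vec b] \eqTerm \max[\vec b, \vec a]$, and $\max[\max[\vec a], \vec b] \eqTerm \max[\vec a, \vec b]$ — all trivially valid under the valuation semantics — and then use them freely, or phrase the induction in terms of finite lists and quotient at the end. The only mild subtlety is ensuring the $\prodrule$ case genuinely needs \emph{both} distributivity lemmas applied in sequence rather than just one, and that $\eqTerm$ is a congruence for all the syntactic constructors (which it is, since each constructor's valuation is defined pointwise from the valuations of its subterms). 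Everything else is routine.
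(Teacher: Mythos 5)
Your proof is correct and follows exactly the route the paper intends (the paper omits an explicit proof, but sets it up as precisely this structural induction: distributing $s$ over $\max$, flattening nested $\max$, and applying \cref{prop-max_left,prop-max_right} in sequence for the $\prodrule$ case). Your side remarks — that the variable production is implicitly part of both grammars, and that $\eqTerm$ must be a congruence — are accurate and consistent with the paper's conventions.
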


\subsection{Simplification of the levels}

We can now focus on levels without maximum. The uniqueness property sought for the representation requires the subterms to be very basic, and then search to simplify
the levels.

The main issue is $\prodrule$: its asymmetry complicates its interaction with other symbols. The previous equalities show how to remove the interaction between $\prodrule$ and $\max$, now, we will study the interactions between $\prodrule$ and the other symbols. 
The goal is to restrict the localisation of the $\prodrule$ symbols to specific parts of the levels in order to understand and control their influence on the levels semantic.

Firstly, we recall these equalities that are direct consequences of the
semantic of $\prodrule$. They permit to deal with $0$ and the successor.

\begin{proposition} \label{prop-r_def}
For all $u, v \in \Term$, $\prodrule[u, 0] \eqTerm 0$ and
$\prodrule[u, s(v)] = \max[u, s(v)]$.
\end{proposition}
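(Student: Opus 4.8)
The plan is to derive both equalities directly from the definition of the valuation together with the defining equations of $\imax$ on $\N$. Recall that for every valuation $\sigma$ one has $\valuation{\prodrule[t_1, t_2]}{\sigma} = \prodrule[\enlarge\valuation{t_1}{\sigma}, \valuation{t_2}{\sigma}]$, where on the right-hand side $\prodrule$ now denotes the impredicative max on natural numbers, and that $t_1 \eqTerm t_2$ holds as soon as $\valuation{t_1}{\sigma} = \valuation{t_2}{\sigma}$ for every $\sigma$. Both statements are then a matter of unfolding these definitions and performing the case analysis of $\imax$.

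First I would fix an arbitrary valuation $\sigma$ and compute, for the first equality, $\valuation{\prodrule[u, 0]}{\sigma} = \prodrule[\enlarge\valuation{u}{\sigma}, \valuation{0}{\sigma}] = \prodrule[\enlarge\valuation{u}{\sigma}, 0]$, which equals $0 = \valuation{0}{\sigma}$ by the defining equation $\imax[i, 0] = 0$. Since $\sigma$ is arbitrary, this proves $\prodrule[u, 0] \eqTerm 0$.

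For the second equality I would again fix $\sigma$ and observe that $\valuation{s(v)}{\sigma} = s(\valuation{v}{\sigma})$ is a successor, hence of the form $j + 1$; therefore the defining equation $\imax[i, j+1] = \max[i, j+1]$ applies and gives $\valuation{\prodrule[u, s(v)]}{\sigma} = \prodrule[\enlarge\valuation{u}{\sigma}, s(\valuation{v}{\sigma})] = \max[\enlarge\valuation{u}{\sigma}, s(\valuation{v}{\sigma})] = \valuation{\max[u, s(v)]}{\sigma}$. As $\sigma$ was arbitrary we conclude $\prodrule[u, s(v)] \eqTerm \max[u, s(v)]$, which is how I read the ``$=$'' of the statement, since the two levels are not syntactically equal.

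Since everything reduces to a mechanical unfolding of definitions, there is no real obstacle here; the single point worth making explicit is that $\valuation{s(v)}{\sigma}$ is \emph{always} strictly positive, so that the case split in the definition of $\imax$ is resolved uniformly in favour of the second branch, independently of the values of $u$ and $v$.
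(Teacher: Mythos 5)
Your proof is correct and matches the paper's treatment: the paper states this proposition without a separate proof, presenting it as a direct consequence of the semantics of $\prodrule$, which is exactly the unfolding of the valuation and the defining equations of $\imax$ that you carry out. Your reading of the second ``$=$'' as $\eqTerm$ is also the intended one.
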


And we show how to remove $\prodrule$ symbol in second argument of
$\prodrule$.

\begin{restatable}{proposition}{proprright} \label{prop-r_right}
	For all $u, v, w \in \Term$, 
	$\prodrule[u, \prodrule[v, w]] \eqTerm \max[\prodrule[u, w], \prodrule[v, w]]$.
\end{restatable}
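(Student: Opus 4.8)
The plan is to verify the equality pointwise: since, by definition of $\eqTerm$, it suffices to show that $\valuation{\prodrule[u, \prodrule[v, w]]}{\sigma} = \valuation{\max[\prodrule[u, w], \prodrule[v, w]]}{\sigma}$ for every valuation $\sigma$, I would fix such a $\sigma$ and abbreviate $a = \valuation{u}{\sigma}$, $b = \valuation{v}{\sigma}$, $c = \valuation{w}{\sigma}$, three natural numbers. The whole computation hinges on the defining clauses of $\prodrule$ read at concrete values (this is \cref{prop-r_def} at the level of $\N$): for $m, n \in \N$ one has $\prodrule[m, n] = 0$ when $n = 0$ and $\prodrule[m, n] = \max[m, n] \geqslant n$ when $n > 0$. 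In particular $\valuation{\prodrule[v,w]}{\sigma} = \prodrule[b, c]$ is zero exactly when $c = 0$ and strictly positive otherwise, so the natural case split is on whether $c = 0$.

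If $c = 0$, then $\valuation{\prodrule[v, w]}{\sigma} = 0$, so the left-hand side evaluates to $\prodrule[a, 0] = 0$; on the right-hand side $\valuation{\prodrule[u, w]}{\sigma} = \prodrule[a, 0] = 0$ as well and $\valuation{\prodrule[v, w]}{\sigma} = 0$, hence it evaluates to $\max[0, 0] = 0$. Both sides are $0$. If $c > 0$, then $\valuation{\prodrule[v, w]}{\sigma} = \max[b, c] \geqslant c > 0$, so the outer $\prodrule$ on the left also falls in the ``$\max$'' branch and the left-hand side equals $\max[a, \max[b, c]] = \max[a, b, c]$. On the right-hand side, since $c > 0$ we get $\valuation{\prodrule[u, w]}{\sigma} = \max[a, c]$ and $\valuation{\prodrule[v, w]}{\sigma} = \max[b, c]$, so it equals $\max[\max[a, c], \max[b, c]] = \max[a, b, c]$ by associativity and idempotence of $\max$ on $\N$. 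The two sides agree in both cases, and since $\sigma$ was arbitrary this gives $\prodrule[u, \prodrule[v, w]] \eqTerm \max[\prodrule[u, w], \prodrule[v, w]]$.

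The argument is entirely routine; the only point that deserves a moment's attention is that, in the case $c > 0$, the intermediate value $\valuation{\prodrule[v, w]}{\sigma}$ is \emph{strictly positive} (it equals $\max[b, c] \geqslant c$), so that the outer $\prodrule$ does not collapse to $0$ and genuinely contributes a $\max$; once this is noted, no real obstacle remains. (If one prefers a syntactic derivation, the same identity follows by combining \cref{prop-r_def} with the distributivity of $\prodrule$ over $\max$ from \cref{prop-max_right} after a case analysis on $w$, but the valuation argument above is shorter and more transparent.)
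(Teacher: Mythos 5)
Your proof is correct and follows essentially the same route as the paper's: fix a valuation, case split on whether $\valuation{w}{\sigma} = 0$, and check that both sides evaluate to $0$ in the first case and to $\max[\valuation{u}{\sigma}, \valuation{v}{\sigma}, \valuation{w}{\sigma}]$ in the second. The paper's version is just terser; your added remark about the intermediate value being strictly positive is exactly the point that makes the second case go through.
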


Thus, we can consider that the second argument of a $\prodrule$ is always
a variable. It is more complicated to directly enforce the form of its
first arguments, but we can obtain one restriction by distributing
the successor over the $\prodrule$. However, we cannot do it as directly
as we distribute the successor over the $\max$, as shown in the 
next example.

\begin{example}
	We show that $s(\prodrule[y, x]) \not\eqTerm \prodrule[s(y), s(x)]$
	by considering a valuation $\sigma$ such that $\sigma(x) = 0$ and 
	$\sigma(y) = 1$.
\end{example}

\begin{restatable}{proposition}{propdistrplus} \label{prop-distr_plus_r}
	For all $u, v, w \in \Term$, $s(\prodrule[v, w]) \eqTerm \max[s(w), \prodrule[s(v), w]]$.
\end{restatable}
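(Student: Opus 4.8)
The plan is to unfold the definition of $\eqTerm$ and argue pointwise. Fix an arbitrary valuation $\sigma$, and set $a = \valuation{v}{\sigma}$ and $b = \valuation{w}{\sigma}$, two natural numbers. Using the inductive clauses defining the valuation ($\valuation{s(t)}{\sigma} = s(\valuation{t}{\sigma})$, $\valuation{\max[t_1,t_2]}{\sigma} = \max[\valuation{t_1}{\sigma},\valuation{t_2}{\sigma}]$, and likewise for $\prodrule$), the value of the left-hand side is $s(\imax[a,b])$ and the value of the right-hand side is $\max[s(b), \imax[s(a), b]]$, where now $\imax$, $s$ and $\max$ denote the corresponding operations on $\N$. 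So it suffices to establish this identity on natural numbers for all $a, b$.

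I would then case-split on whether $b = 0$, mirroring the definition of $\imax$ (and \cref{prop-r_def}). If $b = 0$, then $\imax[a, 0] = 0$ and $\imax[s(a), 0] = 0$, so the left-hand side equals $s(0) = 1$ and the right-hand side equals $\max[s(0), 0] = 1$. If $b > 0$, then $\imax[a,b] = \max[a,b]$ and $\imax[s(a), b] = \max[s(a), b]$; since the successor is monotone and distributes over $\max$ on $\N$, the left-hand side equals $s(\max[a,b]) = \max[s(a), s(b)]$, while the right-hand side equals $\max[s(b), \max[s(a), b]] = \max[s(a), s(b), b] = \max[s(a), s(b)]$ because $s(b) \geqslant b$. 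In both cases the two values coincide, which gives $s(\prodrule[v, w]) \eqTerm \max[s(w), \prodrule[s(v), w]]$.

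There is no genuine obstacle here: the only point requiring care is that the case analysis must be on the \emph{value} of $w$ under $\sigma$, not on the syntactic shape of $w$ (which is an arbitrary level), exactly the asymmetry already visible in \cref{prop-r_def} and illustrated by the preceding example. A slightly more syntactic alternative would chain \cref{prop-r_def} and \cref{prop-r_right} after relating $\prodrule[s(v), w]$ to $\max[s(v), w]$, but the pointwise computation above is shorter and self-contained.
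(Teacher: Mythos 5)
Your proof is correct and follows essentially the same route as the paper's: fix a valuation, case-split on whether $\valuation{w}{\sigma} = 0$, and check that both sides evaluate to $s(0)$ in the first case and to $\max[s(\valuation{v}{\sigma}), s(\valuation{w}{\sigma})]$ in the second. Your version merely spells out the arithmetic on $\N$ in slightly more detail.
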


Finally, all of these propositions leads to this grammar restriction.

\begin{theorem} \label{th-first_grammar}
	For all $t \in \Term$, there exists $u_1, \ldots, u_n$ in the grammar
	$t \coloneq s^k(x) \mid s^k(0) \mid \prodrule[t, x]$ such that
	$t \eqTerm \max[u_1, \ldots, u_n]$.
\end{theorem}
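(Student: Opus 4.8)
The plan is to build on \cref{th-distr_rul_max}, which already writes an arbitrary level as $\max[u_1,\dots,u_n]$ with each $u_i$ produced by the grammar $t \coloneqq 0 \mid s(t) \mid \prodrule[t,t]$ (call it $G_1$). Since $\max$ is read as an operation on a \emph{set} of levels, nested maxima flatten, so it suffices to prove the following claim and then apply it to each $u_i$: every $u \in G_1$ is $\eqTerm$-equivalent to a maximum of levels produced by the grammar of the statement, $t \coloneqq s^k(x) \mid s^k(0) \mid \prodrule[t,x]$ (call it $G_2$). I would prove this claim by structural induction on $u$, with the real content packaged into two auxiliary facts proved beforehand. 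The invariant used everywhere is that $\prodrule[c,x] \in G_2$ as soon as $c \in G_2$ (immediate from the grammar), together with the fact that the successor distributes over $\max$ (recalled in the text) and that $\prodrule$ distributes over $\max$ on both sides (\cref{prop-max_left} and \cref{prop-max_right}).

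\emph{First auxiliary fact: for every $v \in G_2$, $s(v)$ is $\eqTerm$-equivalent to a maximum of $G_2$-levels.} I would prove this by structural induction on $v$. If $v = s^k(x)$ or $v = s^k(0)$, then $s(v)$ is $s^{k+1}(x)$ or $s^{k+1}(0)$, already in $G_2$. If $v = \prodrule[v',x]$ with $v' \in G_2$, then \cref{prop-distr_plus_r} gives $s(\prodrule[v',x]) \eqTerm \max[s(x),\prodrule[s(v'),x]]$; here $s(x) = s^1(x) \in G_2$, the induction hypothesis applied to the proper subterm $v'$ rewrites $s(v')$ as $\max[c_1,\dots,c_\ell]$ with each $c_i \in G_2$, and distributing $\prodrule[\cdot,x]$ over that maximum via \cref{prop-max_left} yields $\max[\prodrule[c_1,x],\dots,\prodrule[c_\ell,x]]$, again a maximum of $G_2$-levels.

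\emph{Second auxiliary fact: for all $a,b \in G_2$, $\prodrule[a,b]$ is $\eqTerm$-equivalent to a maximum of $G_2$-levels.} This is a case split on the shape of $b$, using \cref{prop-r_def} and \cref{prop-r_right}: if $b = s^k(0)$ then $\prodrule[a,b] \eqTerm 0$ when $k=0$ and $\prodrule[a,b] \eqTerm \max[a,b]$ when $k \ge 1$; if $b = s^k(x)$ with $k \ge 1$ then again $\prodrule[a,b] \eqTerm \max[a,b]$, while if $b = x$ then $\prodrule[a,x] \in G_2$ directly; and if $b = \prodrule[b',y]$ with $b' \in G_2$ then $\prodrule[a,\prodrule[b',y]] \eqTerm \max[\prodrule[a,y],\prodrule[b',y]]$ with both components in $G_2$. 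With these two facts, the main induction on $u \in G_1$ is routine: $u=0$ is already in $G_2$; for $u = s(u')$ the induction hypothesis expresses $u'$ as a maximum of $G_2$-levels, the successor distributes over it, and the first fact handles each summand; for $u = \prodrule[u_1,u_2]$ the induction hypothesis gives $u_1 \eqTerm \max[a_1,\dots,a_p]$ and $u_2 \eqTerm \max[b_1,\dots,b_q]$ with $a_i,b_j \in G_2$, distributing $\prodrule$ over both maxima (\cref{prop-max_left}, \cref{prop-max_right}) gives $u \eqTerm \max_{i,j}\prodrule[a_i,b_j]$, and the second fact handles each $\prodrule[a_i,b_j]$. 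Flattening the nested maxima and combining with \cref{th-distr_rul_max} yields the theorem.

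The main obstacle is keeping the inductions well-founded and the grammar invariant intact, rather than any hard computation. In the first auxiliary fact, applying \cref{prop-distr_plus_r} to $s(\prodrule[v',x])$ reintroduces a successor in front of $s(v')$, so the induction must be structural in $v$ (with $v'$ a genuine proper subterm) and not, say, an induction on the number of stacked successors, which would not decrease. One must also check at each rewrite step that the produced levels still have a variable as the second argument of every $\prodrule$-node — this is exactly what certifies membership in $G_2$ — and that the distributions over $\max$ only ever create finitely many summands, which they do; everything else is bookkeeping of finite maxima.
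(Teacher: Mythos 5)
Your proof is correct and follows exactly the route the paper intends: the paper gives no explicit proof of \cref{th-first_grammar}, merely asserting that it follows from \cref{th-distr_rul_max} together with \cref{prop-r_def,prop-r_right,prop-distr_plus_r,prop-max_left,prop-max_right}, and your structural induction with the two auxiliary facts is precisely the expected instantiation of that argument (the only nit being the trivial variable case $u=x=s^0(x)$, which both you and the paper's grammars leave implicit).
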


\begin{remark}
	For all $t$ in this grammar, there exists $x_1, \ldots, x_n \in \X$,
	and $v = s^k(0)$ or $v = s^k(x)$ such that
	$t = \prodrule[\prodrule[\prodrule[\cdots\prodrule[v, x_1], x_2)\cdots)],x_{n - 1}],x_n]$.
\end{remark}

\subsection{Introducting new levels}

Here, we continue the simplification of the levels under $\max$ in order to 
find simple enough terms to reach the uniqueness property. Indeed, the terms of the 
grammar of \cref{th-first_grammar} are still not enough.

\begin{example} \label{ex-a_without}
	Let us consider $t = \max[\prodrule[x, y], \prodrule[y, x]]$ Then,
	$t \eqTerm \max[x, y]$.
\end{example}

To better understand the issue, we should have a deep understanding of the semantic of
$\prodrule[x, y]$. It means that we always consider the value of $y$, but we only consider 
the value of $x$ if $y$ is not zero: the position of the variables are essential in the levels. But, taking into account $\imax[y, x]$ leads to offset this, and the value of $x$ and $y$
are always considered.

Here, we come to a solution that consists to introduce new symbols that permits to represent the current levels, without taking into account the order of the variables.

\begin{definition}[Sublevels] \label{def-sublevel}
	We define new symbols $\A$ and $\B$. $\A$ takes as argument a set of variables,
	an integer and a variable, and $\B$ takes as argument a set of variables
	and an integer. They have the following semantic.
	\begin{align*}
		\valuation{\A[E, x, S]}{\sigma} &= \begin{cases}
			0 & \text{if there exists $y \in E$ such that $\sigma(y) = 0$}\\
			\valuation{x}{\sigma} + S & \text{else}
		\end{cases}
		\\
		\valuation{\B(E, S)}{\sigma} &= \begin{cases}
			0 & \text{if there exists $y \in E$ such that $\sigma(y) = 0$}\\
			S & \text{else}
		\end{cases}
	\end{align*}
\end{definition}

In the next proposition, we show that any level of the grammar of 
\cref{th-first_grammar} can be written as a maximum of terms of this new grammar.

\begin{restatable}{proposition}{propassym} \label{prop-assym_a}
	If $t = \prodrule[\prodrule[\prodrule[\cdots\prodrule[s^k(y), x_1])\cdots)],x_{n - 1}],x_n]$, then 
	\[
	t \eqTerm 
	\max[
	\A[\emptyset, x_n, 0],
	\A[\set{x_n}, x_{n - 1}, 0], 
	\ldots, 
	\A[\set{x_2, \ldots, x_n}, x_n, 0],
	\A[\set{x_1, \ldots, x_n}, y, k]
	].
	\]
	And if $t = \prodrule[\prodrule[\prodrule[\cdots\prodrule[s^k(0), x_1])\cdots)],x_{n - 1}],x_n]$,
	then 
	\[
	t \eqTerm 
	\max[
	\A[\emptyset, x_n, 0],
	\A[\set{x_n}, x_{n - 1}, 0], 
	\ldots, 
	\A[\set{x_2, \ldots, x_n}, x_n, 0],
	\B[\set{x_1, \ldots, x_n}, k]
	].
	\]
\end{restatable}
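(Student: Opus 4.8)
The plan is to reason semantically. Since $\A$ and $\B$ are newly introduced symbols whose only specification is the valuation given in \cref{def-sublevel}, there are as yet no rewrite rules to exploit, so I would fix an arbitrary valuation $\sigma$ and prove that the two sides of each equality take the same value under $\sigma$; by definition of $\eqTerm$ this is exactly what is needed.

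First I would pin down $\valuation{t}{\sigma}$. Write $t_0$ for the innermost level ($s^k(y)$ in the first case, $s^k(0)$ in the second), put $c = \valuation{t_0}{\sigma}$ (so $c = \sigma(y) + k$, resp.\ $c = k$), and set $t_i = \prodrule[t_{i-1}, x_i]$, so that $t = t_n$. From the valuation of $\prodrule$ we get $\valuation{t_i}{\sigma} = 0$ if $\sigma(x_i) = 0$, and $\valuation{t_i}{\sigma} = \max[\valuation{t_{i-1}}{\sigma}, \sigma(x_i)]$ otherwise. A straightforward induction on $n$ then shows that $\valuation{t}{\sigma}$ is the maximum of the following quantities: the value $c$, counted only when $\sigma(x_1), \dots, \sigma(x_n)$ are all nonzero; and, for each $i \in \set{1, \dots, n}$, the value $\sigma(x_i)$, counted only when $\sigma(x_{i+1}), \dots, \sigma(x_n)$ are all nonzero (for $i = n$ this side condition is vacuous). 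Concretely: if some $\sigma(x_i)$ vanishes and $j$ is the largest such index, then $\valuation{t}{\sigma} = \max[\sigma(x_{j+1}), \dots, \sigma(x_n)]$ (which is $0$ when $j = n$); otherwise $\valuation{t}{\sigma} = \max[c, \sigma(x_1), \dots, \sigma(x_n)]$.

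Then I would read off the right-hand side from \cref{def-sublevel}. By definition, $\valuation{\A[\set{x_{i+1}, \dots, x_n}, x_i, 0]}{\sigma}$ is $0$ if $\sigma(x_j) = 0$ for some $i < j \le n$ and equals $\sigma(x_i)$ otherwise — i.e.\ it is exactly the $i$-th contribution isolated above, with $i = n$ giving the term $\A[\emptyset, x_n, 0]$. Similarly $\valuation{\A[\set{x_1, \dots, x_n}, y, k]}{\sigma}$ is $0$ if some $\sigma(x_j) = 0$ and is $\sigma(y) + k = \valuation{s^k(y)}{\sigma}$ otherwise, which is the contribution of $t_0$ in the first case; and $\valuation{\B[\set{x_1, \dots, x_n}, k]}{\sigma}$ is $0$ if some $\sigma(x_j) = 0$ and is $k = \valuation{s^k(0)}{\sigma}$ otherwise, which is the contribution of $t_0$ in the second case. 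Hence the maximum of the listed $\A$-terms (resp.\ $\A$-terms together with the $\B$-term) equals $\valuation{t}{\sigma}$, and since $\sigma$ was arbitrary we conclude $t \eqTerm \max[\dots]$.

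The only real work is the induction establishing the closed form for $\valuation{t}{\sigma}$: one has to be careful with the collapse to $0$ caused by the outermost variables being sent to $0$, and to check that the ``all later variables nonzero'' side condition attached to $\sigma(x_i)$ matches exactly the variable set $\set{x_{i+1}, \dots, x_n}$ occurring in the corresponding $\A$-term. Once that is in place, the rest is a routine unfolding of the valuations of $\prodrule$, $s$, $\max$, $\A$ and $\B$.
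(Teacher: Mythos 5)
Your proposal is correct and follows essentially the same route as the paper's proof: fix an arbitrary valuation $\sigma$, split on whether all the $\sigma(x_i)$ are nonzero or else take the largest index $j$ with $\sigma(x_j)=0$, compute $\valuation{t}{\sigma}$ in closed form, and check that each $\A$ (and the final $\A$ or $\B$) contributes exactly the matching term. The only difference is presentational — you make the induction giving the closed form for $\valuation{t}{\sigma}$ explicit, whereas the paper states it directly — so there is nothing to add.
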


The two equivalences are quite similar, the difference being in the very last subterm
of the $\max$ which is a $\A$ in the first case (we have to take into account $s^k(y)$
that is to say $k + y$) and a $\B$ in the second one ($k$ is taken into account with
a $\B$).

The sublevels that we search for our representation are elements of this grammar.
We just have two slightly modifications, two simplifications to make.
The first one is illustrated by this example.

\begin{example}
	With $t_1 = \A[\emptyset, x, 0]$ and $t_2 = \A[\set{x}, x, 0]$, we have
	$t_1 \eqTerm t_2$ since for all substitution $\sigma$, $\valuation{t_1}{\sigma} = \sigma(x) = \valuation{t_2}{\sigma}$.
\end{example}

The issue here is the fact that the second argument of a $\A$ symbol does not necesarily
appear in its first argument. This creates equalities as shown in the next proposition.

\begin{restatable}{proposition}{propavariable} \label{prop-a_variable}
	Let $x \in \X$, $E \subset \X \setminus \set{x}$ and $S\in \N$. Then
	\[
	\A[E, x, S] \eqTerm \max[\A[E \cup \set{x}, x, S], \B[E, S]].
	\]
\end{restatable}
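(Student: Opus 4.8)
The plan is to unfold the definition of $\eqTerm$: it suffices to fix an arbitrary valuation $\sigma \colon \X \to \N$ and to check that $\valuation{\A[E, x, S]}{\sigma}$ equals $\valuation{\max[\A[E \cup \set{x}, x, S], \B[E, S]]}{\sigma}$, where the latter is $\max(\valuation{\A[E \cup \set{x}, x, S]}{\sigma}, \valuation{\B[E, S]}{\sigma})$ by the semantics of $\max$. I would then carry out a case analysis driven by the guards appearing in \cref{def-sublevel}.

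The first case is when some $y \in E$ satisfies $\sigma(y) = 0$. Then the left-hand side is $0$ directly by the definition of $\A$. Since $y \in E \subseteq E \cup \set{x}$, both $\A[E \cup \set{x}, x, S]$ and $\B[E, S]$ also evaluate to $0$ under $\sigma$, so the maximum on the right is $0$ as well, and the two sides agree.

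The second case is when $\sigma(y) \neq 0$ for every $y \in E$, so that $\valuation{\A[E, x, S]}{\sigma} = \sigma(x) + S$. I would split further on whether $\sigma(x)$ is zero. If $\sigma(x) = 0$, the left-hand side is $S$; on the right, $x \in E \cup \set{x}$ vanishes so $\A[E \cup \set{x}, x, S]$ gives $0$, while no variable of $E$ vanishes so $\B[E, S]$ gives $S$, and the maximum is $S$. If $\sigma(x) > 0$, then no variable of $E \cup \set{x}$ vanishes, so $\A[E \cup \set{x}, x, S]$ gives $\sigma(x) + S$, and since $\B[E, S]$ gives $S \leqslant \sigma(x) + S$, the maximum is $\sigma(x) + S$, again matching the left-hand side. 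As $\sigma$ was arbitrary, this establishes the equivalence.

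There is essentially no deep obstacle here; the one point that deserves attention is the sub-case $\sigma(x) = 0$ with $E$ non-vanishing, which is precisely where the hypothesis $x \notin E$ is used: $\A[E, x, S]$ still contributes $S$ there, but once $x$ is moved into the first argument the term $\A[E \cup \set{x}, x, S]$ collapses to $0$, and it is exactly the extra summand $\B[E, S]$ that recovers the lost contribution. Thus the proof reduces to verifying that the three-way split ($E$ vanishes / $E$ does not vanish and $\sigma(x) = 0$ / neither) is exhaustive and that the roles of $E$ and $E \cup \set{x}$ are tracked carefully throughout.
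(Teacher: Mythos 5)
Your proof is correct and follows essentially the same route as the paper's: fix an arbitrary valuation and perform the same three-way case split (some variable of $E$ vanishes; none vanishes and $\sigma(x)=0$; none vanishes and $\sigma(x)>0$), computing both sides in each case. The paper's proof is just a terser tabulation of the same three cases.
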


Then, we will only consider elements of the form $\A[E, x, S]$ such that $x \in E$.

The second modification is related to the representation of $0$. Indeed, for all 
$E \subset \X$, $\B[E, 0] \eqTerm 0$, and we should then keep at most one of them.
However, since we already have $0 \eqTerm \max[\emptyset]$, we can remove all of them.

And we end up with this set of sublevels which permits to express any level of $\Term$.

\begin{definition}
	We denote by $\SubTerm$ the set of the sublevels that check the following conditions.
	\begin{enumerate}
		\item $\A[E, x, S] \in \SubTerm \iff x \in E$,
		\item $\B[E, S] \in \SubTerm \iff S > 0$. 
	\end{enumerate}
\end{definition}

\begin{theorem} \label{th-sublevels}
	Let $t \in \Term$. Then there exists $u_1, \ldots, u_n \in \SubTerm$ such
	that $t \eqTerm \max[u_1, \ldots, u_n]$. 
\end{theorem}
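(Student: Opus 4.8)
The plan is to chain together the results already established. Given an arbitrary $t \in \Term$, \cref{th-first_grammar} provides levels $t_1, \ldots, t_m$ in the grammar $s^k(x) \mid s^k(0) \mid \prodrule[t,x]$ with $t \eqTerm \max[t_1, \ldots, t_m]$, and the Remark following that theorem tells us each $t_i$ has the shape $\prodrule[\prodrule[\cdots\prodrule[v, x_1]\cdots], x_n]$ with $v$ equal to $s^k(0)$ or $s^k(x)$ and $n \ge 0$. Since $\eqTerm$ is a congruence for $\max$ viewed as an operation on finite \emph{sets} of levels, it suffices to rewrite each individual $t_i$ as a maximum of elements of $\SubTerm$ and then take the union of the resulting sets.

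For $n \ge 1$, \cref{prop-assym_a} already expresses $t_i$ as a maximum of terms $\A[E, x, 0]$ followed by a last summand $\A[\set{x_1,\ldots,x_n}, y, k]$ or $\B[\set{x_1,\ldots,x_n}, k]$. For the degenerate case $n = 0$, a direct computation of valuations gives $s^k(y) \eqTerm \A[\emptyset, y, k]$ and $s^k(0) \eqTerm \B[\emptyset, k]$ (equivalently, this is \cref{prop-assym_a} read with an empty list of leading $\A$'s). Thus in all cases $t$ is $\eqTerm$ to a maximum of summands of the form $\A[E,x,S]$ or $\B[E,S]$, where we do not yet know that $x \in E$, nor that $S > 0$.

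Two clean-up passes then land everything in $\SubTerm$. First, for each summand $\A[E,x,S]$ with $x \notin E$, apply \cref{prop-a_variable} to replace it by $\max[\A[E \cup \set{x}, x, S],\, \B[E,S]]$; after doing this for every such summand, every remaining $\A$-summand $\A[E,x,S]$ satisfies $x \in E$ and hence lies in $\SubTerm$. Second, for every summand $\B[E,S]$ with $S = 0$ — whether inherited from \cref{prop-assym_a} or the $n=0$ case, or freshly produced by the previous pass — its valuation is identically $0$, so $\B[E,0] \eqTerm 0 \eqTerm \max[\emptyset]$ and it may simply be dropped from the maximum; the surviving $\B$-summands have $S > 0$ and lie in $\SubTerm$. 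Collecting the summands yields $u_1, \ldots, u_n \in \SubTerm$ with $t \eqTerm \max[u_1, \ldots, u_n]$.

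Everything here is routine once the earlier propositions are granted; the only points deserving a little care are the degenerate case $n = 0$, which the statement of \cref{prop-assym_a} does not literally cover, and the ordering of the clean-up passes — one must invoke \cref{prop-a_variable} \emph{before} discarding the $S = 0$ occurrences of $\B$, since that proposition can itself create such occurrences — together with the (straightforward) congruence of $\eqTerm$ under $\max$ that legitimizes rewriting individual summands in place.
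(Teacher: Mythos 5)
Your proof is correct and follows exactly the route the paper intends (and sketches in the prose surrounding the theorem, which is left without a formal proof): chain \cref{th-first_grammar} and its remark with \cref{prop-assym_a}, then normalize the summands using \cref{prop-a_variable} and discard the $\B[E,0]$ occurrences. Your added care about the degenerate $n=0$ case and about applying \cref{prop-a_variable} before dropping the zero $\B$'s is a welcome tightening of details the paper glosses over, but it is not a different approach.
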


\begin{remark}
By convenience, we will note $u_i \in t$ if there exists $u_1, \ldots, u_n \in \SubTerm$ such
that $t \eqTerm \max[u_1, \ldots, u_n]$.
\end{remark}


\section{A minimal representation}
\label{sec-uniqueness}

In the previous section, we find a set $\SubTerm$ and show that any level can be represented
as a maximum of elements of $\SubTerm$. The goal of this one is to show that any level
has a minimal representation as maximum of elements of $\SubTerm$  and that this representation
is unique. Intuitevely, the sublevels of a minimal representation should be incomparable 
(else one of the sublevels could be removed).

\begin{definition}[Minimal representation] \label{def-normal}
	Let $t$ be a term. We say that $t$ is a minimal representation if and
	only if there exists $u1 \ldots, u_n \in \SubTerm$ such
	that 
	\begin{enumerate}
		\item $t = \max[u_1, \ldots, u_n]$,
		\item forall $i \neq j$, $u_i$ and $u_j$ are incomparable.
	\end{enumerate} 
	We denote by $\NFTerm$ the set of the minimal representations. 
\end{definition}

Of course, any level $t$ has a minimal representation. We can express
$t$ as a maximum of elements of $\SubTerm$ (by \cref{th-sublevels}) and we
remove elements while thy are not incomparabe. The challenging part 
is its uniqueness. To show it, we study the core of the definition 
of a minimal representation: the sublevel comparison.

\begin{restatable}[Sublevels comparison]{theorem}{thcomparisonsublevel}\label{th-comparison_sublevel}
	Elements of $\SubTerm$ are compared as follows.
	\begin{gather}
		\A[E, x, S] \not\leqTerm \B[F, K]\\
		\B[E, S] \leqTerm \B[F, K] \iff F \subset E \land S \leqslant K\\
		\B[E, S] \leqTerm \A[F, x, K] \iff (F \subset E \land S \leqslant K + 1)\\
		\A[E, x, S] \leqTerm \A[F, y, K] \iff F \subset E \land x = y \land S \leqslant K
	\end{gather}
\end{restatable}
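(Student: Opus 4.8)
The plan is to prove each of the four comparison statements in \cref{th-comparison_sublevel} separately, in each case by exhibiting concrete valuations to establish the forward implications (or the non-comparability) and by a direct case analysis on valuations for the converse. The key observation throughout is that the semantics of $\A$ and $\B$ have a single "collapse" condition: $\valuation{\A[E, x, S]}{\sigma}$ (resp. $\valuation{\B[E, S]}{\sigma}$) is $0$ as soon as some $y \in E$ has $\sigma(y) = 0$, and otherwise it is $\sigma(x) + S$ (resp. $S$). So comparing two sublevels amounts to comparing these two-branch functions, and the only relevant data are (i) which variable sets trigger the collapse, and (ii) the affine value $\sigma(x) + S$ or constant $S$ on the non-collapsing valuations.

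For the non-comparability $\A[E, x, S] \not\leqTerm \B[F, K]$: I would pick a valuation $\sigma$ with $\sigma(y) = N$ for all $y \in E \cup F$ where $N$ is large, and $\sigma(z) = 0$ for all other variables; then the left side evaluates to $\sigma(x) + S \geqslant N + S$ (since $x \in E$, so $x$ gets value $N$) while the right side is at most $K$, so for $N > K$ we get a violation. For the three genuine equivalences, the forward direction of each "$\Leftarrow$" is a routine check of the two branches: if $F \subset E$ then any valuation collapsing the right side already collapses the left (more generators can only help collapse), and on non-collapsing valuations the stated inequality on $S$, $K$ (and the equality $x = y$ in the last case) gives exactly the needed numeric inequality — including the $+1$ slack in the $\B \leqTerm \A$ case, which comes from the fact that a concrete level is a natural number $\geqslant 0$, so $\sigma(x) + K \geqslant K + 1$ is wrong in general but $\sigma(x) \geqslant 1$ holds on any valuation not collapsing $\A[F, x, K]$ when $x \in F$... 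I need to be careful here: actually $x \in F$ is guaranteed since $\A[F,x,K] \in \SubTerm$, but $x$ could still take value $0$ — wait, no: if $\sigma(x) = 0$ and $x \in F$ then the right side collapses to $0$, so on non-collapsing valuations $\sigma(x) \geqslant 1$, giving $\sigma(x) + K \geqslant K + 1 \geqslant S$.

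The converse directions ("$\Rightarrow$") are where the real work is, and I expect the extraction of the set inclusion $F \subset E$ to be the main obstacle. The strategy is: suppose $u \leqTerm v$; to show $F \subset E$, take any $z \in F$ and build a valuation $\sigma$ with $\sigma(z) = 0$ and $\sigma(y)$ large for all $y \neq z$. Then $v$ collapses to $0$, so $u$ must also evaluate to $0$, which forces some element of $E$ to be sent to $0$ by $\sigma$ — but the only variable sent to $0$ is $z$, hence $z \in E$. Getting the numeric inequality (and the variable equality $x = y$ in case (4)) then follows by choosing valuations that avoid collapse on both sides: set every variable to the same large value $N$, so $u$ evaluates to $N + S$ or $S$ and $v$ to $N + K$ or $K$; letting $N \to \infty$ separates the linear-coefficient data (forcing $x = y$, since otherwise one can make $\sigma(x)$ huge while $\sigma(y)$ stays fixed, or vice versa) and then comparing constant terms yields $S \leqslant K$ (or $S \leqslant K + 1$). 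One subtlety to handle cleanly: in case (3), to rule out that the right side could be an $\A$ with its variable forced, I use that on non-collapsing valuations the right side is $\sigma(x) + K$ which can be as small as $K + 1$ (taking $\sigma(x) = 1$, which is still non-collapsing) but no smaller, while the left side is exactly $S$; this pins down $S \leqslant K + 1$ and shows the bound is tight, matching the statement. I would organize the whole proof as four short lemmas matching the four displayed lines, each with a "build a witnessing valuation" paragraph for the forward direction and a "two-branch check" paragraph for the backward direction.
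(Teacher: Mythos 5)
Your proposal is correct and follows essentially the same route as the paper: for each of the four lines, the ``if'' direction is a two-branch check on whether the right-hand sublevel collapses, and the ``only if'' direction is established by witness valuations that send one variable of $F$ to $0$ (to extract $F \subset E$), make $\sigma(x)$ large against $\sigma(y)$ fixed (to force $x = y$), and set all variables to a common nonzero value (to compare $S$ and $K$, with the $+1$ slack coming from $\sigma(x) \geqslant 1$ on non-collapsing valuations exactly as you note). The paper merely phrases the converse by contraposition on which conjunct fails and splits lines (2)--(4) into three auxiliary propositions, which is a presentational rather than a mathematical difference.
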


As a corollary, we get that the sublevel equivalence is a syntactic equality, which is quite
natural; the uniqueness property would be impossible otherwise. 

\begin{corollary} \label{th-sub_uniqueness}
	Let $t_1, t_2 \in \SubTerm$. Then $t_1 \eqTerm t_2 \iff t_1 = t_2$.
\end{corollary}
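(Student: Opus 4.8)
The plan is to derive \cref{th-sub_uniqueness} directly from the sublevel comparison theorem \cref{th-comparison_sublevel}, together with the elementary observation that $t_1 \eqTerm t_2$ holds if and only if $t_1 \leqTerm t_2$ and $t_2 \leqTerm t_1$ (this is recorded in the definition of level comparison). The reverse implication $t_1 = t_2 \implies t_1 \eqTerm t_2$ is immediate since syntactically equal levels have the same value under every valuation, so all the work is in the forward direction.

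For the forward direction I would do a case analysis on the shapes of $t_1$ and $t_2$, using the four clauses of \cref{th-comparison_sublevel}. First, by the clause $\A[E, x, S] \not\leqTerm \B[F, K]$, a sublevel of the form $\A[\cdot,\cdot,\cdot]$ can never be equivalent to one of the form $\B[\cdot,\cdot]$: if $t_1 = \A[E,x,S]$ and $t_2 = \B[F,K]$ then $t_1 \not\leqTerm t_2$, so $t_1 \not\eqTerm t_2$. Hence $t_1 \eqTerm t_2$ forces $t_1$ and $t_2$ to be of the same constructor. In the $\B$-$\B$ case, from $\B[E,S] \leqTerm \B[F,K] \iff F \subset E \land S \leqslant K$ and the symmetric statement $\B[F,K] \leqTerm \B[E,S] \iff E \subset F \land K \leqslant S$, equivalence gives $E \subset F$, $F \subset E$, $S \leqslant K$ and $K \leqslant S$, hence $E = F$ and $S = K$, i.e.\ $t_1 = t_2$. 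In the $\A$-$\A$ case, from $\A[E,x,S] \leqTerm \A[F,y,K] \iff F \subset E \land x = y \land S \leqslant K$ and its symmetric version, equivalence yields $E = F$, $x = y$ and $S = K$, again $t_1 = t_2$. I should note that the notion ``$\subset$'' used in the comparison theorem must be the non-strict inclusion for this argument to go through (otherwise $E \subset F \subset E$ would be contradictory rather than forcing equality); I would rely on that reading, which is clearly the intended one.

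The only mild subtlety is making sure the case analysis is exhaustive and that the two ``mixed'' orderings ($\B$ vs.\ $\A$) cannot sneak in an equivalence: if $t_1 = \B[E,S]$ and $t_2 = \A[F,x,K]$ with $t_2 \in \SubTerm$, then $t_2 \not\leqTerm t_1$ by the first clause, so $t_2 \not\eqTerm t_1$; symmetrically for the other orientation. So the $\B$-$\A$ combination is also excluded, and the only surviving cases are $\B$-$\B$ and $\A$-$\A$, both handled above. I do not expect any real obstacle here — this corollary is essentially a bookkeeping consequence of \cref{th-comparison_sublevel}; the genuine difficulty lies in that theorem itself, not in deducing the corollary. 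The one thing to be careful about in writing it up is to invoke the side conditions defining $\SubTerm$ only where needed (the antisymmetry arguments above do not actually use them, but they are harmless to keep in scope), and to state explicitly that equivalence is antisymmetric before splitting into cases.
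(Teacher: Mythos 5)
Your proof is correct and follows the same route as the paper: the paper's own (terse) argument is precisely to unfold $t_1 \eqTerm t_2$ as $t_1 \leqTerm t_2 \land t_2 \leqTerm t_1$ and then conclude by the case analysis of \cref{th-comparison_sublevel}. Your write-up just makes the four-way case split and the antisymmetry bookkeeping explicit, including the correct reading of $\subset$ as non-strict inclusion.
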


And we can show the uniqueness of the minimal representation. 
First, we show that two equivalent minimal representations have the same
$\A$. 

\begin{proposition} \label{prop-normal_a}
	Let $t_1, t_2 \in \NFTerm$ such that $t_1 \eqTerm t_2$.
	Then
	\[
	\A[E, x, S] \in t_1 \iff \A[E, x, S] \in t_2.
	\]
\end{proposition}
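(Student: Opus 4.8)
The plan is to fix two minimal representations $t_1 = \max[u_1, \ldots, u_n]$ and $t_2 = \max[v_1, \ldots, v_m]$ with $t_1 \eqTerm t_2$, and show that every $\A[E, x, S]$ among the $u_i$ also occurs among the $v_j$ (the converse being symmetric). So suppose $\A[E, x, S] \in t_1$. Since $\A[E, x, S] \leqTerm t_1 \eqTerm t_2 = \max[v_1, \ldots, v_m]$, the first step is to argue that $\A[E, x, S] \leqTerm v_j$ for \emph{some} single $j$. This is the key reduction: a sublevel $\A[E, x, S]$ cannot be dominated by a maximum of several sublevels without being dominated by one of them individually. I would justify this by choosing a valuation $\sigma$ that makes $\A[E, x, S]$ as ``hard to dominate'' as possible — concretely, a valuation sending all variables in $E$ to large values and $x$ in particular very large (and everything outside $E$ also nonzero), so that $\valuation{\A[E,x,S]}{\sigma}$ grows linearly in $\sigma(x)$ while only those $v_j$ whose value also grows in $\sigma(x)$ can keep up; a standard pigeonhole / limiting argument then isolates a single $j$ with $\A[E,x,S] \leqTerm v_j$.

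**Using the comparison theorem.** Once we have $\A[E, x, S] \leqTerm v_j$, \cref{th-comparison_sublevel} tells us exactly what $v_j$ can be: the first inequality of that theorem rules out $v_j$ being a $\B$, so $v_j = \A[F, y, K]$ with $F \subset E$, $x = y$, and $S \leqslant K$. Now apply the same argument in the other direction: $\A[F, x, K] = v_j \leqTerm t_2 \eqTerm t_1 = \max[u_1, \ldots, u_n]$, so $\A[F, x, K] \leqTerm u_i$ for some $i$, and again by \cref{th-comparison_sublevel} we get $u_i = \A[G, x, L]$ with $G \subset F$ and $K \leqslant L$. Chaining the containments and inequalities: $G \subset F \subset E$ and $S \leqslant K \leqslant L$, so $\A[E, x, S] \leqTerm \A[G, x, L] = u_i$ (using the theorem once more, since $G \subset E$, same variable, $S \leqslant L$).

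**Closing with minimality.** At this point both $\A[E, x, S]$ and $u_i = \A[G, x, L]$ are among the $u_k$'s of the minimal representation $t_1$, and $\A[E, x, S] \leqTerm u_i$. By clause (2) of \cref{def-normal}, distinct sublevels in a minimal representation are incomparable, so we must have $\A[E, x, S] = u_i$, i.e.\ $E = G$ and $S = L$. Combined with $G \subset F \subset E$ this forces $F = E$, and with $S \leqslant K \leqslant L = S$ it forces $K = S$. Hence $v_j = \A[F, x, K] = \A[E, x, S]$, which shows $\A[E, x, S] \in t_2$. By symmetry the two sets of $\A$-sublevels coincide.

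**Main obstacle.** The routine part is the bookkeeping with \cref{th-comparison_sublevel}; the genuine content — and the step I expect to require the most care — is the very first reduction, namely that $\A[E, x, S] \leqTerm \max[v_1, \ldots, v_m]$ implies $\A[E, x, S] \leqTerm v_j$ for a single $j$. One has to exhibit (or characterize) a valuation witnessing this, and be careful that the chosen $j$ does not depend on $\sigma$: the clean way is to note that $\valuation{\A[E,x,S]}{\sigma}$, along the family of valuations sending $E$ to values $\geqslant 1$, is an unbounded affine function of $\sigma(x)$, and a finite maximum of the $\valuation{v_j}{\sigma}$ dominates it only if some individual $\valuation{v_j}{\sigma}$ already does so cofinally, whence $\A[E,x,S] \leqTerm v_j$ everywhere by a separate check at the remaining valuations (those hitting $0$ on $E$, where $\A[E,x,S]$ evaluates to $0$ and the inequality is trivial).
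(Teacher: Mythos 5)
Your overall skeleton is sound and in fact mirrors the paper's own machinery: the reduction ``$u \leqTerm \max[v_1,\ldots,v_m]$ implies $u \leqTerm v_j$ for a single $j$'' is exactly \cref{prop-independence}, and your round trip $t_1 \to t_2 \to t_1$ followed by the incomparability clause of \cref{def-normal} is how the paper rederives uniqueness from that lemma (the paper's own proof of \cref{prop-normal_a} instead inlines a single tailored valuation and handles $F \subsetneq E$ by a descent). The problem is the justification you give for the crucial first step. You propose a valuation that sends $x$ to a very large value, the rest of $E$ to large values, \emph{and everything outside $E$ to nonzero values}. With such a valuation, a sublevel $\A[F, x, K]$ with $F \not\subseteq E$ never vanishes, so it can dominate $\A[E,x,S]$ cofinally along your family without satisfying $\A[E,x,S] \leqTerm \A[F,x,K]$ globally. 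Concretely, $\A[\set{x}, x, 0]$ and $\A[\set{x,z}, x, 0]$ agree on every valuation with $\sigma(z) \neq 0$, yet $\A[\set{x},x,0] \not\leqTerm \A[\set{x,z},x,0]$ (take $\sigma(z) = 0$, $\sigma(x) = 1$). Your proposed ``separate check at the remaining valuations'' only covers valuations that kill $E$; it does not cover valuations that kill $F \setminus E$ while leaving $E$ nonzero, which is precisely where the inequality can fail. So as written, the $v_j$ you isolate need not satisfy $\A[E,x,S] \leqTerm v_j$, and the subsequent appeal to \cref{th-comparison_sublevel} (which is what gives you $F \subseteq E$) is unjustified.

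The fix is small but essential, and it is what both \cref{prop-independence} and the paper's proof of \cref{prop-normal_a} do: send every variable \emph{outside} $E$ to $0$ (and, say, $E \setminus \set{x}$ to $1$ and $x$ to a value exceeding all constants occurring in $t_1$ and $t_2$). Then any sublevel of $t_2$ whose variable set is not contained in $E$ evaluates to $0$ and cannot account for the value $\sigma(x) + S$, so the witnessing sublevel is forced to be some $\A[F, x, K]$ with $F \subseteq E$ and $K \geqslant S$ directly from the valuation; from there your chaining and minimality argument goes through unchanged.
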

\begin{proof}
	Let us note 
	\begin{gather*}
		t_1 = \max[\A[E_0, x_0, S_0], \ldots, \A[E_n, x_n, S_n], 
		\B[G_0, T_0], \ldots, \B[G_p, T_p]]\\
		t_2 = \max[\A[F_0, y_0, K_0], \ldots, \A[F_m, y_m, K_m],
		\B[H_0, L_0], \ldots, \B[H_q, L_q]].
	\end{gather*}
	Let $\A[E, x, S]$ be an sublevel of $t_1$. We consider $\sigma$ such that
	\[
	\sigma(y) = \begin{cases*}
		\max[S_0, \ldots, S_n, K_0, \ldots, K_m, T_0, \ldots, T_p, L_0, \ldots, L_q] + 1
		& if $y = x$\\
	 1 & if $y \in E \setminus \set{x}$\\
	 0 & else
	\end{cases*}
	\]
	We have $\valuation{t_1}{\sigma} = \valuation{\A[E, x, S]}{\sigma} = S + \sigma(x)$ and then 
	$\valuation{t_2}{\sigma} = S + \sigma(x)$. Then, there exists 
	$\A[F, y, K]$ in $t_2$ such that $F \subset E \cup \set{x} = E$ 
	(else $F$ contains a variable $z$ such that $\sigma(z) = 0$) and
	$\sigma(y) + K = \sigma(x) + S$ or there exists $\B[F, K]$ in $t_2$
	such that $K = \sigma(x) + S$. Since  
	$\sigma(x) > \max[S_0, \ldots, S_n, K_0, \ldots, K_m]$, we deduce that
	it is the first case and $y = x$.
	
	Then, there is $(F, x, S)$ in $t_2$ with $F \subset E$ and $\sigma(y) + K = \sigma(x) + S$. If $F \subsetneq E$, then by the same reasoning, we show that
	there exists $\A[G, x, S] \in t_1$ with $G \subset F \subsetneq E$. But,
	by \cref{def-normal}, it is impossible to have $\A[E, x, S]$ and $\A[G, x, S]$ in 
	$t_1$ with $G \subset E$ since they are comparable.
	
	Then $E = F$ and $\A[E, x, S]$ is also an element of $t_2$.
\end{proof}

And we show the same for the $\B$.

\begin{proposition} \label{prop-normal_b}
	Let $t_1, t_2 \in \NFTerm$ such that $t_1 \eqTerm t_2$.
	Then
	\[
	\B[E, S] \in t_1 \iff \B[E, S] \in t_2.
	\]
\end{proposition}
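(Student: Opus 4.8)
The plan is to mirror the proof of \cref{prop-normal_a}, the only new difficulty being that a $\B$-sublevel carries no variable that can be sent to infinity in order to make it dominate a maximum. Write
\begin{gather*}
	t_1 = \max[\A[E_0, x_0, S_0], \ldots, \A[E_n, x_n, S_n], \B[G_0, T_0], \ldots, \B[G_p, T_p]],\\
	t_2 = \max[\A[F_0, y_0, K_0], \ldots, \A[F_m, y_m, K_m], \B[H_0, L_0], \ldots, \B[H_q, L_q]],
\end{gather*}
all displayed sublevels being in $\SubTerm$. By symmetry it is enough to show $\B[E, S] \in t_1 \implies \B[E, S] \in t_2$, and I would first prove the weaker claim: \emph{if $\B[E, S] \in t_1$, then there is $F \subset E$ with $\B[F, S] \in t_2$.}

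For this claim I would use the $0/1$ valuation $\sigma$ defined by $\sigma(y) = 1$ for $y \in E$ and $\sigma(y) = 0$ otherwise, so that $\valuation{\B[E, S]}{\sigma} = S$ (recall $S > 0$). The key computation is to check, using \cref{th-comparison_sublevel} and the incomparability of distinct sublevels of the minimal representation $t_1$, that every other sublevel of $t_1$ has value $< S$ under $\sigma$: a $\B[G, T]$ or $\A[G, z, T]$ whose variable set $G$ is not contained in $E$ evaluates to $0 < S$; a $\B[G, T] \ne \B[E, S]$ with $G \subset E$ is incomparable with $\B[E, S]$, which by \cref{th-comparison_sublevel} forces $G \subsetneq E$ and $T < S$; and an $\A[G, z, T]$ with $G \subset E$ evaluates to $1 + T$ (since $z \in G \subset E$), and incomparability with $\B[E, S]$ forces $S > T + 1$. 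Hence $\valuation{t_1}{\sigma} = S$, so $\valuation{t_2}{\sigma} = S$ and some sublevel $v$ of $t_2$ attains $S$. If $v = \A[F, z, K]$, then $S > 0$ forces $F \subset E$, hence $z \in F \subset E$ and $K = S - 1$; then $\A[F, z, S-1] \in t_2$, so by \cref{prop-normal_a} also $\A[F, z, S-1] \in t_1$, contradicting the bound just obtained for $\A$-sublevels of $t_1$ with variable set inside $E$. Therefore $v = \B[F, K]$, and $S > 0$ forces $F \subset E$ and $K = S$, which establishes the claim.

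To upgrade $F \subset E$ to $F = E$, I would apply the claim once more, now to $t_2 \eqTerm t_1$ and $\B[F, S] \in t_2$, obtaining $\B[G, S] \in t_1$ with $G \subset F \subset E$. Both $\B[E, S]$ and $\B[G, S]$ are sublevels of the minimal representation $t_1$, and since $G \subset E$ and $S \le S$, \cref{th-comparison_sublevel} gives $\B[E, S] \leqTerm \B[G, S]$; as distinct sublevels of a minimal representation are incomparable, this forces $E = G$, hence $E \subset F \subset E$ and $F = E$. Thus $\B[E, S] \in t_2$, and the converse follows by exchanging $t_1$ and $t_2$. I expect the main obstacle to be the middle paragraph, and in particular the step excluding that the value $S$ is reached in $t_2$ by an $\A$-sublevel rather than by a $\B$-sublevel, which is exactly where \cref{prop-normal_a} is needed.
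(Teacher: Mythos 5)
Your proof is correct and follows essentially the same route as the paper: the same $0/1$ valuation sending $E$ to $1$ and everything else to $0$, the same use of \cref{th-comparison_sublevel} plus incomparability to get $\valuation{t_1}{\sigma} = S$, and the same appeal to \cref{prop-normal_a} to rule out an $\A$-sublevel of $t_2$ attaining the value $S$. The only structural difference is that where the paper rules out $F \subsetneq E$ by an induction on $E$ (with a separate base case $E = \emptyset$), you instead isolate the weaker existence claim and apply it a second time with $t_1$ and $t_2$ swapped, which pins down $F = E$ by incomparability; this is a slightly cleaner packaging of the same argument, and you also make explicit the computation $\valuation{t_1}{\sigma} = S$ that the paper merely asserts.
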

\begin{proof}
	Let us note 
	\begin{gather*}
		t_1 = \max[\A[E_0, x_0, S_0], \ldots, \A[E_n, x_n, S_n], 
		\B[G_0, T_0], \ldots, \B[G_p, T_p]]\\
		t_2 = \max[\A[F_0, y_0, K_0], \ldots, \A[F_m, y_m, K_m],
		\B[H_0, L_0], \ldots, \B[H_q, L_q]].
	\end{gather*}
		We show the result by induction on $E$. Let $\B[E, S]$ be a sublevel of $t_1$. 
		If $E = \emptyset$, we consider $\sigma$ the zero function. Then, 
		$\valuation{t_1}{\sigma} = S$, hence $ \valuation{t_2}{\sigma} = S$.
		Since $S > 0$, it follows that $\B[\emptyset, S]$ is a sublevel of $t_2$.
		
		In the induction case, we consider $\sigma$ such that 
		$\sigma(x) = 1$ if $x \in E$ and $\sigma(x) = 0$ otherwise,
		hence $\valuation{t_1}{\sigma} = S$. 
		Then, $\valuation{t_2}{\sigma} = S$ and since $S > 0$, 
		there exists $\A[F, x, K]$ in $t_2$ such that $F \subset E$ and 
		$\sigma(x) + K = S$ or there exists $\B[F, S]$ in $t_2$ such that $F \subset E$ 
		and $K = S$.
	
	In the first case, we have $x \in F \subset E$, then $\sigma(x) = 1$ and 
	$K = S - 1$. Then, by \cref{prop-normal_a}, $\A[F, x, S - 1] \in t_1$ which 
	is impossible by \cref{def-normal} since it would be comparable with 
	$\B[E, S] \in t_1$.
	
	Then, we have $\B[F, S] \in t_2$. If $F \subsetneq E$, we apply the induction
	hypothesis and obtain $\B[F, S] \in t_1$, impossible because it would be
	comparable with $\B[E, S]$.

	Then $E = F$ and $\B[E, S]$ is also an element of $t_2$.
\end{proof}

We immediately obtain that equivalence of minimal representations is the
syntactic equality.

\begin{proposition} \label{th-uniqueness}
	For all $t_1, t_2 \in \NFTerm$, $t_1 \eqTerm t_2 \iff t_1 = t_2$.
\end{proposition}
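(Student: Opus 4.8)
The plan is to obtain Proposition~\ref{th-uniqueness} as an immediate consequence of Propositions~\ref{prop-normal_a} and~\ref{prop-normal_b}, which together pin down the $\A$-sublevels and the $\B$-sublevels of a minimal representation up to equivalence. The reverse implication needs no work: if $t_1 = t_2$ syntactically, then $\valuation{t_1}{\sigma} = \valuation{t_2}{\sigma}$ for every valuation $\sigma$, so $t_1 \eqTerm t_2$.

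For the forward implication, suppose $t_1, t_2 \in \NFTerm$ with $t_1 \eqTerm t_2$. I would write $t_1 = \max[u_1, \ldots, u_n]$ and $t_2 = \max[v_1, \ldots, v_m]$ with all $u_i, v_j \in \SubTerm$ and, within each list, pairwise incomparable, as provided by \cref{def-normal}. Since $\max$ here ranges over a \emph{set} of sublevels, a minimal representation is determined exactly by the set $\set{u_1, \ldots, u_n}$ of its sublevels, so it suffices to prove $\set{u_1, \ldots, u_n} = \set{v_1, \ldots, v_m}$. Take any $u_i$. If $u_i = \A[E, x, S]$, then $\A[E, x, S] \in t_1$, and since $t_1 \eqTerm t_2$, \cref{prop-normal_a} gives $\A[E, x, S] \in t_2$, i.e.\ $u_i \in \set{v_1, \ldots, v_m}$; if $u_i = \B[E, S]$, the same argument with \cref{prop-normal_b} gives $u_i \in \set{v_1, \ldots, v_m}$. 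Hence $\set{u_1, \ldots, u_n} \subseteq \set{v_1, \ldots, v_m}$, and by symmetry (using $t_2 \eqTerm t_1$) the reverse inclusion also holds, so the two sets of sublevels coincide and $t_1 = t_2$.

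There is essentially no obstacle at this stage: the substantive content is entirely in \cref{prop-normal_a} and \cref{prop-normal_b}. The only point worth recording is the convention, already adopted implicitly throughout \cref{sec-level-transformation} and \cref{sec-uniqueness}, that the $\max$ in a representation is taken over a set of sublevels, so that equality of minimal representations means equality of the underlying sets of sublevels; under that convention the two propositions deliver exactly the matching of $\A$-sublevels and of $\B$-sublevels, and nothing further is needed.
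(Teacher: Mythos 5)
Your proof is correct and follows the paper's own route exactly: the reverse direction is trivial, and the forward direction is an immediate consequence of \cref{prop-normal_a,prop-normal_b}, which match the $\A$- and $\B$-sublevels of the two minimal representations. Your explicit remark that $\max$ is taken over a set of sublevels, so that equality of the underlying sets yields syntactic equality, is a reasonable clarification of a convention the paper leaves implicit.
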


And finally, we obtain the main theorem: the existence and uniquennes of
a minimal representation for each level. Before, we show the intuitive property
that the minimal representation of a maximum of sublevels is formed with 
elements of these sublevels.

\begin{proposition} \label{prop-expand_subterms}
	For all $u_1, \ldots, u_n \subset \SubTerm$, there exists 
	a unique $\set{v_1, \ldots, v_m} \subseteq \set{u_1, \ldots, u_n}$ such that
	$\max[u_1, \ldots, u_n] = \max[v_1, \ldots, v_m]$ and
	$\max[v_1, \ldots, v_m] \in \NFTerm$.
\end{proposition}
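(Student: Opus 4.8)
The plan is to separate the statement into existence and uniqueness, and to reduce each to results already established: existence to a direct pruning procedure on the finite set $\set{u_1,\dots,u_n}$, and uniqueness to the fact that minimal representations are determined up to syntactic equality.

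For existence, I would start from $U=\set{u_1,\dots,u_n}\subseteq\SubTerm$ and repeatedly remove elements: as long as $U$ contains two distinct elements $u,u'$ with $u\leqTerm u'$, replace $U$ by $U\setminus\set{u}$. Each such step preserves the value pointwise, since for every valuation $\sigma$ we have $\valuation{u}{\sigma}\leqslant\valuation{u'}{\sigma}$ with $u'\in U\setminus\set{u}$, hence $\valuation{\max U}{\sigma}=\valuation{\max(U\setminus\set{u})}{\sigma}$; in particular $\max(U\setminus\set{u})\eqTerm\max U$. Since each step strictly decreases the (finite) number of elements, the process stops at some $V=\set{v_1,\dots,v_m}$ with $V\subseteq\set{u_1,\dots,u_n}$, all $v_j\in\SubTerm$, $\max[v_1,\dots,v_m]\eqTerm\max[u_1,\dots,u_n]$, and no two distinct elements of $V$ comparable (otherwise the loop would not have halted). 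By \cref{def-normal} this is exactly the statement that $\max[v_1,\dots,v_m]\in\NFTerm$.

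For uniqueness, suppose $V=\set{v_1,\dots,v_m}$ and $W=\set{w_1,\dots,w_k}$ are both subsets of $\set{u_1,\dots,u_n}$ with $\max V,\max W\in\NFTerm$ and $\max V\eqTerm\max[u_1,\dots,u_n]\eqTerm\max W$. Then $\max V\eqTerm\max W$ with both members in $\NFTerm$, so by \cref{th-uniqueness} (which itself comes from \cref{prop-normal_a,prop-normal_b}, forcing the sets of $\A$- and $\B$-sublevels to coincide) we get $\max V=\max W$ as terms, i.e.\ $V=W$. This gives the uniqueness of the subset.

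I do not expect a real obstacle here: the hard work has already been done in \cref{th-comparison_sublevel} and \cref{th-uniqueness}. The only points needing a little care are checking that the pruning step genuinely preserves $\eqTerm$ and that "pairwise incomparable" lines up verbatim with \cref{def-normal} (including the degenerate case of the empty max, which is $0$ and vacuously lies in $\NFTerm$), and reading the equality in the statement as $\eqTerm$ between the two maxima, since the $v_j$ are literally taken from the $u_i$ and the sets themselves are generally different.
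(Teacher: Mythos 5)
Your proposal is correct and matches the paper's own (commented-out) proof almost verbatim: the paper also prunes $E=\set{u_1,\ldots,u_n}$ by repeatedly removing an element dominated by another, and invokes \cref{th-uniqueness} for uniqueness. Your additional remarks (that each pruning step preserves the value under every valuation, and that the stated equality must be read as $\eqTerm$) are sound refinements of the same argument.
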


\begin{theorem}[Representation]\label{th-exist_normal}
	For all $t \in \Term$, there exists a unique 
	$\set{u_1, \ldots, u_n} \subset \SubTerm$ such that
	$t \eqTerm \max[u_1, \ldots, u_n]$.
	We say that $\max[u_1, \ldots, u_n]$ is the minimal representation
	of $t$ and we denote it as $\repr[t]$.
\end{theorem}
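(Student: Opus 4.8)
The plan is to assemble the theorem directly from the pieces already established. By \cref{th-sublevels}, any level $t$ can be written as $t \eqTerm \max[u_1, \ldots, u_n]$ with each $u_i \in \SubTerm$; this gives existence of \emph{some} representation as a maximum of sublevels. Then \cref{prop-expand_subterms} lets me pass from this (possibly redundant) family to a subset $\set{v_1, \ldots, v_m} \subseteq \set{u_1, \ldots, u_n}$ with $\max[u_1, \ldots, u_n] = \max[v_1, \ldots, v_m]$ and $\max[v_1, \ldots, v_m] \in \NFTerm$. Composing, $t \eqTerm \max[v_1, \ldots, v_m]$ with the latter a minimal representation, which establishes existence.

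For uniqueness, suppose $t \eqTerm \max[u_1, \ldots, u_n]$ and $t \eqTerm \max[w_1, \ldots, w_k]$ with both $\set{u_i}, \set{w_j} \subset \SubTerm$ and both maxima in $\NFTerm$ (the statement quantifies over $\set{u_1,\ldots,u_n}\subset\SubTerm$, and "minimal representation" per \cref{def-normal} forces the incomparability condition, so both are in $\NFTerm$). Then $\max[u_1,\ldots,u_n] \eqTerm \max[w_1,\ldots,w_k]$ by transitivity of $\eqTerm$, and \cref{th-uniqueness} gives $\max[u_1,\ldots,u_n] = \max[w_1,\ldots,w_k]$ syntactically. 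Since $\max$ is taken over a set of sublevels, syntactic equality of the two maxima means $\set{u_1,\ldots,u_n} = \set{w_1,\ldots,w_k}$, which is exactly the claimed uniqueness. The definition of $\repr[t]$ as this unique set then makes sense.

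I do not expect a genuine obstacle here: every ingredient is a previously stated result, and the argument is just "existence via \cref{th-sublevels} + \cref{prop-expand_subterms}, uniqueness via transitivity + \cref{th-uniqueness}". The only mild care needed is bookkeeping around the convention that $\max$ ranges over \emph{sets} of sublevels, so that syntactic equality of $\max$-terms is the same as set equality of their arguments, and noting that \cref{prop-expand_subterms}'s output, being in $\NFTerm$, is indeed a minimal representation in the sense of \cref{def-normal}. I would phrase the proof in two short paragraphs, one for each half, with explicit \cref references so the dependency chain is transparent.
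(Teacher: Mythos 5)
Your proof is correct and follows essentially the same route as the paper's: existence via \cref{th-sublevels} combined with \cref{prop-expand_subterms}, and uniqueness reduced to the syntactic-equality statement for equivalent minimal representations. If anything, your uniqueness step is slightly more careful than the paper's one-liner, since you invoke \cref{th-uniqueness} directly to cover representations built from arbitrary sublevels rather than only subsets of one fixed family.
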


Having a unique minimal representation is useful to have a faithful embedding of
$\Term$ in the $\lambdapimod$, which is done in \cref{sec-implementation}. Moreover,
this representation also gives us an easy way to compare two terms. 
Indeed, a sublevel can be compared to a level using the representation.

\begin{restatable}{lemma}{propindependence} \label{prop-independence}
	Let $u, v_1, \ldots, v_n \in \SubTerm$. Then $u \leqTerm \max[v_1, \ldots, v_n]$ if 
	and only if there exists $i$ such that $u \leqTerm v_i$.
\end{restatable}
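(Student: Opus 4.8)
The ``if'' direction is immediate: if $u \leqTerm v_i$, then since $\valuation{v_i}{\sigma} \leqslant \max[\valuation{v_1}{\sigma}, \ldots, \valuation{v_n}{\sigma}]$ for every valuation $\sigma$, transitivity of $\leqTerm$ gives $u \leqTerm \max[v_1, \ldots, v_n]$. The content is the converse, and the plan is to exhibit a single valuation $\sigma$ that isolates the contribution of $u$ inside the maximum, read off which $v_i$ must dominate $u$ at $\sigma$, and then upgrade that pointwise domination to $u \leqTerm v_i$ using \cref{th-comparison_sublevel}. Let $M$ be an integer larger than all the integer parameters occurring inside $v_1, \ldots, v_n$, and split on the shape of $u$ (recall that $v_i \in \SubTerm$, so the variable argument of an $\A$-sublevel always belongs to its set argument).

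\emph{Case $u = \A[E, x, S]$}, hence $x \in E$. Take $\sigma$ with $\sigma(x) = M + 2$, $\sigma(y) = 1$ for $y \in E \setminus \set{x}$, and $\sigma(y) = 0$ otherwise; then $\valuation{u}{\sigma} = S + M + 2$, so by hypothesis $\valuation{v_i}{\sigma} \geqslant S + M + 2$ for some $i$. A sublevel $\B[F, K]$ evaluates under $\sigma$ to at most $K \leqslant M < S + M + 2$, so $v_i = \A[F, y, K]$; since $\valuation{v_i}{\sigma} > 0$, the set $F$ contains no variable sent to $0$, i.e. $F \subseteq E$, hence $y \in F \subseteq E$. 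If $y \neq x$ then $\sigma(y) = 1$ and $\valuation{v_i}{\sigma} = 1 + K \leqslant 1 + M < S + M + 2$, a contradiction; so $y = x$, $\valuation{v_i}{\sigma} = (M + 2) + K$, and $S \leqslant K$. By \cref{th-comparison_sublevel}, $u = \A[E, x, S] \leqTerm \A[F, x, K] = v_i$. (The offset $M + 2$ rather than $M + 1$ is what makes the inequality ruling out $y \neq x$ strict, and it is genuinely needed: with $M + 1$ and $S = 0$, a sublevel $\A[F, y, M]$ with $y \neq x$ could tie with $u$ at $\sigma$ while failing $u \leqTerm v_i$.)

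\emph{Case $u = \B[E, S]$}, hence $S > 0$. Take $\sigma$ with $\sigma(y) = 1$ for $y \in E$ and $\sigma(y) = 0$ otherwise; then $\valuation{u}{\sigma} = S$, so $\valuation{v_i}{\sigma} \geqslant S > 0$ for some $i$. Positivity forces the set argument $F$ of $v_i$ to satisfy $F \subseteq E$ in either shape. If $v_i = \B[F, K]$ then $K = \valuation{v_i}{\sigma} \geqslant S$, and \cref{th-comparison_sublevel} gives $u \leqTerm v_i$. If $v_i = \A[F, y, K]$ then $y \in F \subseteq E$, so $\sigma(y) = 1$ and $1 + K \geqslant S$, i.e. $S \leqslant K + 1$, and \cref{th-comparison_sublevel} again gives $u = \B[E, S] \leqTerm \A[F, y, K] = v_i$.

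The main obstacle is choosing the valuations tightly enough that a single $\sigma$ simultaneously pins down the shape of the dominating $v_i$ and yields precisely the data ($F \subseteq E$, equality of variables, and the comparison of integer parameters) required to invoke \cref{th-comparison_sublevel}; the $\A[E, x, S]$ case is the delicate one, for the reason indicated in the parenthetical above.
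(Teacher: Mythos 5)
Your proof is correct and takes essentially the same route as the paper's: the same two valuations (one per shape of $u$, with the $M+2$ offset on $x$ in the $\A[E,x,S]$ case and the $0/1$ valuation in the $\B[E,S]$ case) combined with \cref{th-comparison_sublevel}. The only difference is presentational — you argue the forward implication directly by reading off which $v_i$ dominates $u$ at the chosen valuation, whereas the paper argues by contraposition, bounding every $\valuation{v_i}{\sigma}$ strictly below $\valuation{u}{\sigma}$.
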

	\begin{proof}
	The reverse implication is trivial. We show the direct implication by 
	contraposition. We suppose that 
	forall $i$, $u \not\leqTerm v_i$. 
	
	If $u = \B[E, S]$, we consider $\sigma$ such that $\sigma(x) = 1$ if $x \in E$ and 
	$0$ otherwise. Then, forall $v_i$, we have either
	\begin{itemize}
		\item $v_i = \A[F, x, K]$ or $v_i = \B[F, K]$ and $F \not\subset E$ hence
		$\valuation{v_i}{\sigma} = 0 < S + M + 2 = \valuation{u}{\sigma}$,
		\item or $v_i = \A[F, x, K]$ with $F \subseteq E$ and $K < S - 1$ hence
		$\valuation{v_i}{\sigma} = K + 1 < S = \valuation{u}{\sigma}$,
		\item or $v_i = \B[F, K]$ with $K < S - 1$ hence
		$\valuation{v_i}{\sigma} = K < S = \valuation{t}{\sigma}$.
	\end{itemize}
	Then $u \not\leqTerm \max[v_1, \ldots, v_n]$.
	
	If $u = \A[E, x, S]$, then, each $v_i$ is of the form $\A[F_i, x_i, K_i]$ or
	$\B[F_i, K_i]$, we consider $M$ the maximum of these $K_i$ and $\sigma$ such that 
	$\sigma(x) = M + 2$, $\sigma(y) = 1$ if $y \in E \setminus \set{x}$ and 
	$0$ otherwise. Then, forall $v_i$, we have either
	\begin{itemize}
		\item $v_i = \B[F, K]$ hence
		$\valuation{v_i}{\sigma} \leqslant K < S + M + 2 = \valuation{u}{\sigma}$,
		\item or $v_i = \A[F, y, K]$ and $F \not\subset E$ hence
		$\valuation{v_i}{\sigma} = 0 < S = \valuation{u}{\sigma}$,
		\item or $v_i = \A[F, y, K]$ with $F \subseteq E$ and $x \neq y$, hence
		$\valuation{v_i}{\sigma} = K + 1 < S + M + 2 = \valuation{u}{\sigma}$,
		\item or $v_i = \A[F, x, K]$ with $F \subseteq E$ and $K < S$, hence
		$\valuation{v_i}{\sigma} = K + M + 2 < S + M + 2 = \valuation{u}{\sigma}$.
	\end{itemize}
	Then $u \not\leqTerm \max[v_1, \ldots, v_n]$.
\end{proof}

And we use this lemma to compare two level, for instance by comparing each sublevel
of the minimal representation of the firt one to the second one. More generally,
two representation are compared the following way.

\begin{theorem} \label{th_comparison}
	Let $u_1, \ldots, u_n, v_1, \ldots, v_m \in \SubTerm$. Then, 
	$\max[u_1, \ldots, u_n] \leqTerm \max[v_1, \ldots, v_m]$ if and only if forall $i$, 
	there exists $j$ such that $u_i \leqTerm v_j$.
\end{theorem}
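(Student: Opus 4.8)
The plan is to reduce the statement about two arbitrary representations to the single-sublevel-versus-representation case already handled by \cref{prop-independence}, plus the elementary fact that $\max$ on naturals is monotone and idempotent in each argument. First I would observe the easy direction: suppose that for all $i$ there exists $j$ with $u_i \leqTerm v_j$. Then for every valuation $\sigma$ we have $\valuation{u_i}{\sigma} \leqslant \valuation{v_j}{\sigma} \leqslant \max[\valuation{v_1}{\sigma}, \ldots, \valuation{v_m}{\sigma}] = \valuation{\max[v_1, \ldots, v_m]}{\sigma}$, and taking the maximum over $i$ on the left gives $\valuation{\max[u_1, \ldots, u_n]}{\sigma} \leqslant \valuation{\max[v_1, \ldots, v_m]}{\sigma}$; since $\sigma$ is arbitrary this is exactly $\max[u_1, \ldots, u_n] \leqTerm \max[v_1, \ldots, v_m]$.

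For the converse, assume $\max[u_1, \ldots, u_n] \leqTerm \max[v_1, \ldots, v_m]$. Fix an index $i$. Since $\valuation{u_i}{\sigma} \leqslant \valuation{\max[u_1,\ldots,u_n]}{\sigma} \leqslant \valuation{\max[v_1,\ldots,v_m]}{\sigma}$ for every $\sigma$, we get $u_i \leqTerm \max[v_1, \ldots, v_m]$ directly from the definition of $\leqTerm$. Now \cref{prop-independence} applies: it yields a $j$ with $u_i \leqTerm v_j$. As $i$ was arbitrary, this establishes the claim.

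I do not expect any genuine obstacle here: both directions are short, and the only non-trivial ingredient — the fact that a single sublevel dominated by a maximum of sublevels is already dominated by one of them — is precisely \cref{prop-independence}, whose proof does the real work via carefully chosen valuations. The one point to be careful about is the degenerate case where the index set on one side is empty (i.e.\ $n = 0$ or $m = 0$), corresponding to the level $0 \eqTerm \max[\emptyset]$: if $n = 0$ the left-hand side is $0$ and the "for all $i$" condition is vacuously true, consistent with $0 \leqTerm t$ for every $t$; if $m = 0$ then $\max[v_1, \ldots, v_m] \eqTerm 0$ forces each $u_i \eqTerm 0$, but no $u_i \in \SubTerm$ evaluates identically to $0$ (a $\B[E,S]$ has $S > 0$ and an $\A[E,x,S]$ is positive whenever $x$ and $E$ are sent to positive values), so the left side must also be empty — again consistent. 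Apart from noting this, the proof is a direct combination of monotonicity of $\max$ with \cref{prop-independence}.
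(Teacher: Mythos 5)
Your proof is correct and follows essentially the same route as the paper: the reverse direction is the trivial monotonicity of $\max$, and the forward direction reduces to $u_i \leqTerm \max[v_1, \ldots, v_m]$ for each $i$ and then invokes \cref{prop-independence}. The extra remark on the degenerate empty cases is a harmless (and accurate) addition.
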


One can note that the \cref{prop-independence} gives us a new proof of the uniqueness property stated in \cref{th-uniqueness}.
\begin{proof}
	Let $u = \max[u_1, \ldots, u_n]$ and $v = \max[v_1, \ldots, v_m]$ be two minimal representations such that $u \eqTerm v$. We want to show that forall $i$, there exists $j$
	such that $u_i = v_j$.

	We have $u_i \leqTerm u \leqTerm v$, hence by \cref{prop-independence}, there exists $v_j$ such that $u_i \leqTerm v_j$. In the same way, there exists $u_k$ such that 
	$v_j \leqTerm u_k$. Then, by \cref{def-normal}, $i = k$ (because $u_1, \ldots, u_n$ are incomparable), and then $u_i \eqTerm v_j$ hence $u_i = v_j$ by \cref{th-sub_uniqueness}.
\end{proof}

This shows that there is a link between the \cref{prop-independence} and the uniqueness
property. In fact, this lemma should be understood as an \emph{independence} lemma.
Indeed, if we consider $\max[u_1, \ldots, u_n]$ as a linear combination of $u_1, \ldots, u_n$,
then this lemma states that the only way to be smaller than a linear combination is to depend 
and be smaller than of one of the element of this combination.

This analogy provides a new point of view on our work: $\SubTerm$ is a \enquote{linearly independent} family (uniqueness of the minimal representation) which generates at least $\Term$
(for instance, $\max[\A[\set{x}, y, 0]]$ or $\max[\B[\set{x}, 1]]$ are not equivalent to any level).


\section{A rewriting system for this universe representation}
\label{sec-implementation}

This section is dedicated to the implementation of this representation in the $\lambdapimod$. 

\subsection{Basic tools}

Here, we define the very basic term that we will used. The booleans and the natural
numbers, to begin with, are necesary.

\begin{definition}[Booleans]
	We define a type $\Bool$, with constructors $\true \colon \Bool$, $\false \colon \Bool$ and functions
	$\ands \colon \Bool \to \Bool$, $\ors \colon \Bool \to \Bool$ and 
	$\nots \colon \Bool \to \Bool$.
	$\Bool$ is interpreted as the Booleans, $\true$, $\false$, $\ands$, $\ors$ and $\nots$
	as $\top$, $\bot$, the conjonction, the disjonction and the negation.
\end{definition}

\begin{definition}[Natural numbers]
We define a type $\Nat$, with constructors $\zero \colon \Nat$, $\s \colon \mathrm{\Nat} \to \Nat$ and  functions $\plus \colon \Nat \to \Nat \to \Nat$, $\leqs{\Nat} \colon \Nat \to \Nat \to \Bool$, $\eqs{\Nat} \colon \Nat \to \Nat \to \Bool$, $\lqs{\Nat} \colon \Nat \to \Nat \to \Bool$ (with infix notation) and $\maxN \colon \Nat \to \Nat \to \Nat$.
$\Nat$ is interpreted as $\N$, $\zero$ as $0$, $\s$, $\plus$, $\leqs{\Nat}$, $\eqs{\Nat}$ and $\maxN$ 
as $0$, $s$, $+$, $\leq$, $=$ and $\max$.
\end{definition}

Moreover, we define a \emph{if-then-else} structure. It is not really necesary,
but will be very convenient to facilitate the writing of some rules.

\begin{definition}
	Let $T$ be a type. We define the function $\iteop_T \colon \Bool \to T \to T \to T$
	such that $\forall u, v \in T$, 
	$\app{\app{\app{\iteop_T}{\true}}{u}}{v} \rewrites u$ and 
	$\app{\app{\app{\iteop_T}{\false}}{u}}{v} \rewrites v$.
	For convenience reasons, we will denote $\app{\app{\app{\iteop_T}{b}}{u}}{v}$ by
	$\app{\app{\app{\app{\app{\ifs}{b}}{\thens}}{u}}{\elses}}{v}$.
\end{definition}

And finally, we show how to define a type of sets for all ordered types. It will be used
for set of natural numbers (to define the sublevels), but also for set of 
sublevels (to define the representations).

\begin{definition}[Sets] \label{prop_set}
	Let $T$ be a type equipped with a total order function $\order{T} \colon T \to T \to \Bool$.
	Then, we define a type $\Set{T}$ corresponding to the finite set of elements
	of $T$ with a constructor $\nil{T} \colon \Set{T}$ and the functions (all with infix notation)
	$\add{T} \colon \Set{T} \to T \to \Set{T}$,  
	$\union{T} \colon \Set{T} \to \Set{T} \to \Set{T}$,
	$\included{T} \colon T \to \Set{T} \to \Bool$,
	$\sub{T} \colon \Set{T} \to \Set{T} \to \Bool$ and
	$\eqs{\Set{T}} \colon \Set{T} \to \Set{T} \to \Bool$.
	
	$\Set{T}$ is interpreted as the set of finite subsets of $T$,
	$\nil{T}$ as the empty set, $\add{T}$ as the function that adds an element to
	a set, $\union{T}$, $\included{T}$, $\sub{T}$ and $\eqs{\Set{T}}$ as
	$\cup$, $\in$, $\subseteq$ and $=$.

	For convenience reasons, we will denote the term 
	$\nil{T} \add{T} x_1 \add{T} \cdots \add{T} x_n$
	by $\setS{T}{x_1, \ldots, x_n}$. In particular, 
	$\setS{T}{x}$ will represent a singleton.
\end{definition}

	$\Set{T}$ is implemented as a sorted list of elements of $T$ with the
	constructors $\nil{T} \colon \Set{T}$ and $\cons{T} \colon T \to \Set{T} \to \Set{T}$,
	and the function $\add{T}$ adds an element to a list while keeping the uniqueness and 
	order properties. Then, a set will be build through $\nil{T}$ and $\add{T}$ and
	the constructor $\cons{T}$ will only be used in patterns of rewrite rules.

\begin{definition}[Set order] \label{prop-set_order}
	Let $T$ be a type with a total order function $\order{T}$. Then, we define a 
	total order $\order{\Set{T}}$ by considering the lexicographic order on 
	sorted word of $T$. Besides, we define the corresponding strict total order 
	$\strictOrder{\Set{T}}$.
\end{definition}

We do not give the rules for these elements since they are quite basic.
However, the \Dedukti{} implementation is available if necesary.


\subsection{Level encoding}

In order to implement the sets of sublevels and to compare two sublevels, we should be able to compare and sort level variables. That is why we use a deep encoding where each variable is encoded as a natural number. We denote by $\gamma \colon \X \to \N$ a bijection that associates each variable to a natural number.

\begin{definition} \label{def-basic_levels}
	We define $\Level$, the type of the levels together with the constructors 
	$\zeroL \colon \Level$, $\sL \colon \Level \to \Level$, $\maxL \colon \Level \to \Level \to \Level$, $\ruleL \colon \Level \to \Level \to \Level$ and $\varL \colon \Nat \to \Level$.
We define inductively a translation function $\translation{\cdot} \colon \Term \to \Level$
with 
\begin{gather*}
\translation{0} = \zeroL \qquad \translation{s(t)} = \app{\sL}{t} \qquad
\translation{x} = \app{\varL}{\translation{\gamma(x)}_{\Nat}}\\
\translation{\max[u, v]} = \app{\app{\maxL}{\translation{u}}}{\translation{v}} \qquad
\translation{\prodrule[u, v]} = \app{\app{\ruleL}{\translation{u}}}{\translation{v}}
\end{gather*}
\end{definition}

\begin{definition}
	We define $\SLevel$, the type of the sublevels, with the constructors 
	$\As \colon \NatSet \to \Nat \to \Nat \to \SLevel$ and 
	$\Bs \colon \NatSet \to \Nat \to \SLevel$. We define a translation function $\translation{\cdot}_{\Sub} \colon \SubTerm \to \SLevel$ 
	by $\translation{\A[E, x, S]}_{\Sub} = 
	\app{\As}{\app{\translation{E}}{\app{\translation{x}}{\translation{S}}}}$
	and
	$\translation{\B[E, S]}_{\Sub} = \app{\Bs}{\app{\translation{E}}{\translation{S}}}$.
\end{definition}

In order to define $\Set{\SLevel}$, we need a total order on $\SLevel$. We consider
the order where all the $\As$ are before the $\Bs$, and the $\As$ (respectively) the $\Bs$ are sorted according to the lexicographic order:
\begin{itemize}
	\item $\app{\As}{\app{E}{\app{x}{S}}} \leqslant \app{\Bs}{\app{F}{K}}$,
	\item $\app{\As}{\app{E}{\app{x}{S}}} \leqslant \app{\As}{\app{\translation{F}}{\app{\translation{y}}{\translation{K}}}}$ is the lexicographic
	order between $(E, x, S)$ and $(F, y, K)$ (which is a total order since 
	$\Nat$ and $\Set{\Nat}$ are equipped with a total order by \cref{prop-set_order}),
	\item $\app{\Bs}{\app{E}{S}} \leqslant \app{\Bs}{\app{F}{K}}$ is the lexicographic order between $(E, S)$ and $(F, K)$.
\end{itemize}

\begin{definition}[Total order on sublevels] \label{prop-total_order_sublevel}
	We define a total $\order{\SLevel}$ on $\SLevel$ (and its corresponding 
	strict order $\strictOrder{\SLevel}$).
Indeed, we consider 
Hence these rewrite rules.
\begin{align*}
(\app{\As}{\app{E}{\app{x}{S}}}) \order{\SLevel} (\app{\Bs}{\app{F}{K}}) &\rewriteDef \true\\
(\app{\As}{\app{E}{\app{x}{S}}}) \order{\SLevel} (\app{\As}{\app{F}{\app{y}{K}}}) 
&\rewriteDef \app{\ors}{(E \strictOrder{\Set{\Nat}} F)}{}\\ 
& \phantom{\rewriteDef{}} \app{\ands}
			{\app{(E \eqs{\Set{\Nat}} F)}
			({(\app{\ors}
				{\app{(x \lqs{\Nat} y)}
				{{(\app{\ands}{\app{(x \eqs{\Nat} y)}{(S \leqs{\Nat} K)}})}}})}				
			})	
	\\
\app{\Bs}{\app{E}{S}} \order{\SLevel} \app{\Bs}{\app{F}{K}} &\rewriteDef 
	\ors[
		E \strictOrder{\Set{\Nat}} F,
		\ands[
			E \eqs{\Set{\Nat}} F, 
			S \leqs{\Nat} K
		]
	]
\end{align*}
\end{definition}

And we can then define the representations.

\begin{definition}
	We define a function $\maxS \colon \SLSet \to \Level$ that embeds a set 
	of $\SLevel$ into a $\Level$ and a translation function 
	$\translation{\cdot}_{\NF} \colon \NFTerm \to \Level$ defined by 
	$\translation{\max[u_1, \ldots, u_n]}_{\NF} = 
	\maxS[\setS{T}{\translation{u_1}_{\Sub}, \ldots, \translation{u_n}_{\Sub}}]$.
\end{definition}

Now, we have defined all the types and the elements of $\NFTerm$ have translations
in $\lambdapimod$. The next step is to provide rewrite rules that transforms a level
into its minimal representation. The cases of $\zeroL$ and of variables are easy.
\[
\zeroL \rewriteDef \maxS[\nil {\SLevel}] \qquad
\varL[x] \rewriteDef \maxS[\setS{\SLevel}{\As[\setS{\Nat}{x}, 0, x]}]
\]

For the other cases, and in particular for the $\max$, the sublevel comparison will be 
useful, then we define ite.

\begin{definition}[Sublevels comparison]
We define $\leqs{\SLevel} \colon \SLevel \to \SLevel \to \Bool$
interpreted as $\leqTerm$ 
with these rewrite rules.
\begin{align*} 
	\As[E, x, S] \leqs{\SLevel} \Bs[F, K] &\rewriteDef 
	\false & \tag{1}\\
	\Bs[E, S] \leqs{\SLevel} \Bs[F, K] &\rewriteDef 
	\ands[F \sub{\Nat} E, S \leqs{\Nat} K] & \tag{2}\\
	\Bs[E, \sL[S]] \leqs{\SLevel}  \As[F, y, K] &\rewriteDef
	\ands[F \sub{\Nat} E, S \leqs{\Nat} K] & \tag{3}\\
	\As[E, x, S] \leqs{\SLevel} \As[F, y, K] &\rewriteDef
	\ands[F \sub{\Nat} E, \ands[x \eqs{\Nat} y, S \leqs{\Nat} K]]
	& \tag{4}
\end{align*}
\end{definition}

\begin{proposition}
	Let $u, v \in \SubTerm$. Then, 
	$u \leqTerm v \iff \translation{u} \leqs{\SLevel} \translation{v} \rewrites^* \true$.
\end{proposition}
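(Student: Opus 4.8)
The plan is a case analysis on the shapes of $u$ and $v$. Since $u, v \in \SubTerm$, each of them is either of the form $\A[E, x, S]$ with $x \in E$ or of the form $\B[E, S]$ with $S > 0$, which yields four cases, one per rule (1)--(4) and one per clause of \cref{th-comparison_sublevel}. In each case the head of $\translation{u}_\Sub \leqs{\SLevel} \translation{v}_\Sub$ matches the left-hand side of exactly one rule, because the four rules are left-linear and their left-hand sides are distinguished by whether each operand of $\leqs{\SLevel}$ is headed by $\As$ or by $\Bs$ (and, for rule (3), by whether the second argument of $\Bs$ is a successor). So one rewrite step produces the corresponding right-hand side, whose remaining redexes all lie inside the auxiliary functions $\sub{\Nat}$, $\eqs{\Nat}$, $\leqs{\Nat}$ and $\ands$.

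I would then invoke the correctness of these basic operations: they compute $\subseteq$, $=$, $\leqslant$ and conjunction respectively and are total, and $\gamma$ is a bijection, so for instance $\translation{F} \sub{\Nat} \translation{E}$ evaluates to $\true$ exactly when $F \subseteq E$, and $\translation{x} \eqs{\Nat} \translation{y}$ evaluates to $\true$ exactly when $x = y$. Hence, in each case the reduct evaluates to $\true$ precisely when the associated set-and-arithmetic side-condition holds, and to $\false$ otherwise. It then remains to check that this side-condition coincides with the one given by \cref{th-comparison_sublevel}: for $u = \A[E,x,S]$ and $v = \B[F,K]$, rule (1) gives $\false$, matching $\A[E,x,S] \not\leqTerm \B[F,K]$; for $u = \B[E,S]$ and $v = \B[F,K]$, rule (2) gives $F \subseteq E \land S \leqslant K$; and for $u = \A[E,x,S]$ and $v = \A[F,y,K]$, rule (4) gives $F \subseteq E \land x = y \land S \leqslant K$ --- all three matching \cref{th-comparison_sublevel}.

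The one case that needs a little care is $u = \B[E, S]$, $v = \A[F, y, K]$. Here $S > 0$ because $\B[E, S] \in \SubTerm$, so $\translation{S}$ is headed by a successor and rule (3) --- whose left-hand side matches only when the second argument of $\Bs$ is a successor --- does apply, reducing the term to $\ands[\translation{F} \sub{\Nat} \translation{E}, \translation{S-1} \leqs{\Nat} \translation{K}]$. This evaluates to $\true$ iff $F \subseteq E$ and $S - 1 \leqslant K$, that is iff $F \subseteq E$ and $S \leqslant K + 1$, which is exactly the clause of \cref{th-comparison_sublevel} for $\B \leqTerm \A$. This shift from $S \leqslant K + 1$ to $S - 1 \leqslant K$, which is legitimate precisely because the $S > 0$ guard built into the definition of $\SubTerm$ forces $\translation{S}$ to be headed by a successor, is the point to get right; everything else is bookkeeping already discharged by \cref{th-comparison_sublevel} and the assumed correctness of the basic machinery.
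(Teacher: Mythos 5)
Your proof is correct and follows essentially the same route as the paper, whose entire argument is the one-line observation that each rewrite rule $(i)$ corresponds to case $i$ of \cref{th-comparison_sublevel}. Your elaboration of rule (3) --- noting that the pattern requires the second argument of $\Bs$ to be a successor, which is guaranteed by the $S > 0$ condition in the definition of $\SubTerm$, and that this turns $S \leqslant K+1$ into $S-1 \leqslant K$ --- is a worthwhile detail the paper leaves implicit.
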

\begin{proof}
	Each rule ($i$) corresponds to the case $i$ of the \cref{th-comparison_sublevel}.
\end{proof}


And we can give the remaining rules.

\subsection{The successor} \label{subsec-succ}

We show how to compute the minimal representation of $s(\max(u_1, \ldots, u_n))$. 

\begin{definition}
	We define $\succSub$ on the sublevels by $\succSub(\A[E, x, S]) = \A[E, x, S + 1]$ and $\succSub(\B[E, S]) = \B[E, S + 1]$. Then $\succSub(u) \eqTerm s(u)$.
\end{definition}

\begin{proposition}\label{prop-compute_succ}
	$\repr[s(\max[\emptyset])] = \max[\B[\emptyset], s(0)]$ and for all $n > 0$ and 
	$t = \max[u_1, \ldots, u_n] \in \NFTerm$,
	$\repr[s(t)] = \max[\succSub(u_1), \ldots, \succSub(u_n)]$.
	
\end{proposition}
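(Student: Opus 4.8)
The plan is to treat the two cases separately; in each I would exhibit an explicit minimal representation of $s(t)$ and then invoke the uniqueness of minimal representations (\cref{th-exist_normal}). For $t = \max[\emptyset]$, recall that $\max[\emptyset] \eqTerm 0$, so $\valuation{s(\max[\emptyset])}{\sigma} = 1$ for every valuation $\sigma$; the sublevel $\B[\emptyset, s(0)]$ belongs to $\SubTerm$ (its integer component $s(0)$ is positive) and also takes the value $1$ under every $\sigma$, whence $s(\max[\emptyset]) \eqTerm \max[\B[\emptyset, s(0)]]$. A maximum of a single sublevel is vacuously a minimal representation, so by \cref{th-exist_normal} it is $\repr[s(\max[\emptyset])]$.

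Now fix $n > 0$ and $t = \max[u_1, \ldots, u_n] \in \NFTerm$. First I would verify that $s(t) \eqTerm \max[\succSub(u_1), \ldots, \succSub(u_n)]$: for any valuation $\sigma$ we have $\valuation{\succSub(u_i)}{\sigma} = \valuation{u_i}{\sigma} + 1$ since $\succSub(u_i) \eqTerm s(u_i)$, and as $\set{u_1, \ldots, u_n}$ is non-empty, $\max_i \bigl(\valuation{u_i}{\sigma} + 1\bigr) = \bigl(\max_i \valuation{u_i}{\sigma}\bigr) + 1 = \valuation{s(t)}{\sigma}$ — this last equality is exactly what fails when $n = 0$. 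Next, each $\succSub(u_i)$ is again in $\SubTerm$: $\succSub$ only increments the integer component, so from $\A[E, x, S] \in \SubTerm$ (that is, $x \in E$) we get $\A[E, x, S+1] \in \SubTerm$, and from $\B[E, S] \in \SubTerm$ (that is, $S > 0$) we get $\B[E, S+1] \in \SubTerm$.

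The remaining, and main, point is that $\max[\succSub(u_1), \ldots, \succSub(u_n)] \in \NFTerm$, i.e.\ that the $\succSub(u_i)$ are pairwise incomparable. I would establish the stronger fact that for all $u, v \in \SubTerm$, $\succSub(u) \leqTerm \succSub(v)$ if and only if $u \leqTerm v$, by inspecting the four clauses of \cref{th-comparison_sublevel}: in the $\A$--$\A$ and $\B$--$\B$ clauses the numeric condition $S \leqslant K$ is unaffected by replacing $S, K$ with $S+1, K+1$; in the clause $\B[E,S] \leqTerm \A[F,x,K] \iff F \subset E \land S \leqslant K + 1$ the asymmetric constant $+1$ also survives, as $\succSub$ turns the condition into $S+1 \leqslant (K+1)+1$; and $\A \not\leqTerm \B$ is insensitive to $\succSub$. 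Since the $u_i$ are pairwise incomparable (as $t \in \NFTerm$), so are the $\succSub(u_i)$, and hence $\max[\succSub(u_1), \ldots, \succSub(u_n)]$ is a minimal representation. Combined with the equivalence established above, \cref{th-exist_normal} yields $\repr[s(t)] = \max[\succSub(u_1), \ldots, \succSub(u_n)]$. I expect the only subtlety to be the mixed clause $\B \leqTerm \A$ with its built-in $+1$, but as noted it is preserved by the shift, so the whole verification is routine.
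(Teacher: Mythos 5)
Your proof is correct and follows essentially the same route as the paper's: establish $s(t) \eqTerm \max[\succSub(u_1), \ldots, \succSub(u_n)]$ by evaluating under an arbitrary valuation (noting the $n>0$ hypothesis is what lets the $+1$ commute with the maximum), then show minimality from the fact that $\succSub$ preserves incomparability, and conclude by uniqueness of the minimal representation. Your additional checks --- that each $\succSub(u_i)$ stays in $\SubTerm$, the explicit treatment of the empty case, and the clause-by-clause verification (including the $\B \leqTerm \A$ clause with its built-in $+1$) that $\succSub(u) \leqTerm \succSub(v) \iff u \leqTerm v$ --- are details the paper leaves implicit, but they do not change the argument.
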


We create a function $\sSL \colon \SLSet \to \SLSet$ corresponding to $\succSub$.
\begin{gather*}
	\sSL[\nil{\SLevel}] \rewriteDef \nil{\SLevel}  \qquad
	\sSL[\Bs[E, S] \cons{\SLevel} q] \rewriteDef \sSL[q] \add{\SLevel} \Bs[E, \s[S]] \\
	\sSL[\As[E, x, S] \cons{\SLevel} q] \rewriteDef \sSL[q] \add{\SLevel} \As[E, x, \s[S]]
\shortintertext{And we compute the minimal representation of a successor according to \cref{prop-compute_succ}.}
	\sL[\maxS[\nil{\SLevel}]] \rewriteDef \maxS[\Bs[\nil{\SLevel}, \s[0]]] \qquad 
\sL[\maxS[u \cons{\SLevel} q]] \rewriteDef \maxS[\sSL[u \cons{\SLevel} q]]
\end{gather*}


\begin{proposition} \label{prop-succ_sound}
	For all $t \in \NFTerm$,  
	$\sL[\translation{t}_{\NF}] \rewrites^* \translation{\repr[s(t)]}_{\NF}$.
\end{proposition}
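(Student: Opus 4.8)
The plan is to follow the two-case split of \cref{prop-compute_succ} and to check that the rewrite rules for $\sL$ and $\sSL$ implement exactly the operation it describes. Write $t = \max[u_1, \ldots, u_n] \in \NFTerm$. In the degenerate case $n = 0$ we have $\translation{t}_{\NF} = \maxS[\nil{\SLevel}]$, and a single application of the rule $\sL[\maxS[\nil{\SLevel}]] \rewriteDef \maxS[\Bs[\nil{\SLevel}, \s[0]]]$ produces the encoding of $\repr[s(\max[\emptyset])]$ as given by \cref{prop-compute_succ}, so that case is immediate.

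The interesting case is $n > 0$. Here $\translation{t}_{\NF}$ is $\maxS[\ell]$, where $\ell$ is the list representing the set $\{\translation{u_1}_{\Sub}, \ldots, \translation{u_n}_{\Sub}\}$; since $n > 0$ this list is nonempty, so the rule $\sL[\maxS[u \cons{\SLevel} q]] \rewriteDef \maxS[\sSL[u \cons{\SLevel} q]]$ fires and it remains to analyse $\sSL$. The key auxiliary statement I would prove is: for any list $\ell$ whose elements encode sublevels $w_1, \ldots, w_k \in \SubTerm$, $\sSL[\ell]$ rewrites to the encoding of the set $\{\succSub(w_1), \ldots, \succSub(w_k)\}$. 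This goes by structural induction on $\ell$: the base case $\ell = \nil{\SLevel}$ is the first $\sSL$-rule; in the step, whether the head is a $\Bs[E,S]$ or an $\As[E,x,S]$, the matching rule rewrites $\sSL[\mathit{head} \cons{\SLevel} q]$ to $\sSL[q] \add{\SLevel} \mathit{head}'$, where $\mathit{head}'$ increments the integer component and is therefore exactly the encoding of $\succSub$ applied to the sublevel coded by $\mathit{head}$ (using $\succSub(\B[E,S]) = \B[E,S+1]$ and $\succSub(\A[E,x,S]) = \A[E,x,S+1]$, together with the definition of $\translation{\cdot}_{\Sub}$). Applying the induction hypothesis to $\sSL[q]$ and then the specification of $\add{\SLevel}$ yields the image set.

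With this lemma, $\maxS[\sSL[\ell]] \rewrites^* \maxS[\setS{\SLevel}{\translation{\succSub(u_1)}_{\Sub}, \ldots, \translation{\succSub(u_n)}_{\Sub}}]$, and since \cref{prop-compute_succ} tells us $\repr[s(t)] = \max[\succSub(u_1), \ldots, \succSub(u_n)]$, the right-hand side is by definition $\translation{\repr[s(t)]}_{\NF}$, which closes the argument. I expect the only real obstacle to be the bookkeeping that mediates between the set-theoretic phrasing of \cref{prop-compute_succ} and the list-based recursion of $\sSL$: one has to be careful that $\sSL$ visits each coded sublevel exactly once and reinserts its successor via $\add{\SLevel}$, and that the resulting set does not depend on the traversal order of $\ell$ nor on duplicate insertions — which holds because $\add{\SLevel}$ is specified as inserting one element into a finite set, hence its successive applications are commutative and idempotent. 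Note that no separate verification that the output is again a minimal representation is needed, since that is already part of \cref{prop-compute_succ}.
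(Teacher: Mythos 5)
Your proof is correct and is exactly the argument the paper intends: the paper states this proposition without an explicit proof, relying on \cref{prop-compute_succ} and the rewrite rules for $\sL$ and $\sSL$ in precisely the way you spell out. The split on $n=0$ versus $n>0$, the structural induction showing that $\sSL$ computes the encoding of the image of the coded set under $\succSub$, and the appeal to \cref{prop-compute_succ} for the fact that the result is again the minimal representation are all the right ingredients, and your remarks on the list-versus-set bookkeeping are the only genuinely delicate points.
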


\subsection{The maximum} \label{subsec-max}

For the maximum, we define $\max_{\Sub}$, that adds a sublevel to a minimal representation.

\begin{definition}
	We define inductively $\maxSub \colon \setPart[\SubTerm] \times \SubTerm \to \setPart[\SubTerm]$ inductively defined by 
	$\maxSub[\emptyset, u] = \set{u}$ and
	\[
	\maxSub[\set{u_1, \ldots, u_n} \cup \set{u}, v] = \begin{cases*}
		\set{u_1, \ldots, u_n, u}        & if $v \leqTerm u$,\\
		\set{u_1, \ldots, u_n, v}        & if $u \leqTerm v$,\\
		\maxSub[\set{u_1, \ldots, u_n}, v] \cup \set{u} & else.
	\end{cases*}
	\]
\end{definition}

\begin{proposition} \label{prop-compute_max}
	For all minimal representation $t = \max[u_1, \ldots, u_n] \in \NFTerm$ and 
	$v \in \SubTerm$, $\repr[\max[u_1, \ldots, u_n, v]] = f(\set{u_1, \ldots, u_n}, v)$.
\end{proposition}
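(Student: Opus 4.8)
The plan is to exploit the uniqueness built into \cref{prop-expand_subterms} and \cref{th-exist_normal}: the minimal representation $\repr[\max[u_1, \ldots, u_n, v]]$ is the \emph{only} subset of $\set{u_1, \ldots, u_n, v}$ whose maximum is pairwise incomparable and equivalent to $\max[u_1, \ldots, u_n, v]$. Hence it suffices to prove that the set produced by the function $\maxSub$ defined just above (written $f$ in the statement) is such a subset, that is, that $\maxSub[\set{u_1, \ldots, u_n}, v]$ (i)~is included in $\set{u_1, \ldots, u_n, v} \subseteq \SubTerm$, (ii)~is pairwise incomparable, and (iii)~its maximum is $\eqTerm \max[u_1, \ldots, u_n, v]$.

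I would prove (i)--(iii) simultaneously by induction on $n$, following exactly the case split of the definition of $\maxSub$. The base case $n = 0$ is trivial, since $\maxSub[\emptyset, v] = \set{v}$, the singleton is vacuously incomparable, and $\max[v] \eqTerm \max[v]$. For the inductive step, decompose the argument set as $S' \cup \set{u}$ with $u \notin S'$ (so $S'$ has $n - 1$ elements), as in the definition, and distinguish the three branches. Property (i) is a routine induction through the three branches. For (iii), the only tool needed besides the induction hypothesis is the elementary observation, immediate from the valuation characterisation of $\eqTerm$, that removing from a maximum a sublevel which is $\leqTerm$ one of the other arguments does not change the value; this handles the branch $v \leqTerm u$ directly, and the other two branches by combining it with the induction hypothesis applied to the recursive call. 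The sublevel comparisons invoked by the branch guards are precisely the ones decided by \cref{th-comparison_sublevel}.

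The delicate point, and the one I expect to be the main obstacle, is property (ii) in the branches where $v$ must be inserted while a dominated $u$ is deleted: one has to rule out that some pair of the surviving sublevels becomes comparable. The argument is that, by the induction hypothesis, the recursively computed set is already pairwise incomparable and, by (i), all its members lie in $S' \cup \set{v}$; it then remains to check that the element $u$ kept aside is incomparable with each of them, which follows because $u$ is incomparable with every $u_i$ (as $\max[u_1, \ldots, u_n] \in \NFTerm$, its sublevels are pairwise incomparable by \cref{def-normal}) and $u$ is incomparable with $v$ exactly in the branch where the recursive call is taken. In other words, the only genuinely new comparisons that could appear are those involving $v$, and the guards of the case split together with the inherited incomparability of $\set{u_1, \ldots, u_n}$ are exactly what is needed to exclude them; getting this bookkeeping right (and noticing that, a priori, the recursion depends on the choice of the set-aside element $u$, its independence being recovered a posteriori from \cref{th-exist_normal}) is where the care is required.
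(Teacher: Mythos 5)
Your overall route is the same as the paper's: induction on $n$ following the three branches of the definition of $\maxSub$, with the statement reduced, via the uniqueness part of \cref{prop-expand_subterms} and \cref{th-exist_normal}, to checking that the computed set is a subset of $\set{u_1, \ldots, u_n, v}$, is pairwise incomparable, and has a maximum equivalent to $\max[u_1, \ldots, u_n, v]$. The branch $v \leqTerm u$ and the ``else'' branch are handled correctly in your sketch; in particular your treatment of the recursive branch (the recursively computed set is pairwise incomparable and contained in $S' \cup \set{v}$, and the set-aside $u$ is incomparable with all of $S'$ by minimality of $t$ and with $v$ by the guard) is exactly what is needed there.

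The gap is the branch $u \leqTerm v$, precisely the ``delicate point'' you flag and then dismiss with ``the guards of the case split together with the inherited incomparability \ldots{} are exactly what is needed''. They are not. In that branch $\maxSub$ returns $\set{u_1, \ldots, u_n, v}$ at once, without examining $u_1, \ldots, u_n$; inherited incomparability rules out $v \leqTerm u_i$ (that would give $u \leqTerm u_i$), but nothing rules out $u_i \leqTerm v$: by \cref{th-comparison_sublevel}, two sublevels both below $v$ need not be comparable to one another. Concretely, take $u_1 = \A[\set{y,a}, y, 0]$, $u = \A[\set{y,b}, y, 0]$ and $v = \A[\set{y}, y, 0]$. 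Then $u_1$ and $u$ are incomparable (neither of $\set{y,a}$, $\set{y,b}$ contains the other), so $\max[u_1, u] \in \NFTerm$, yet $u_1 \leqTerm v$ and $u \leqTerm v$, so that $\repr[\max[u_1, u, v]] = \max[v]$ while $\maxSub[\set{u_1, u}, v] = \set{u_1, v}$, which contains the comparable pair $u_1 \leqTerm v$ and is not even a minimal representation. So property (ii) genuinely fails in this branch, whichever element is set aside. (The paper's own argument for this branch claims that $u_i \leqTerm v$ together with $u \leqTerm v$ would force $u_i$ and $u$ to be comparable; the same example refutes that claim.) To repair the proof one must repair the function: after deleting a dominated element the recursion has to continue on the remaining elements rather than stop, and the corresponding rewrite rule for $\maxhelper$ needs the same fix.
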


We implement $\maxSub$ as a function 
$\maxhelper \colon \SLSet \to \SLevel \to \SLSet$ with these rewrite rules.
\begin{align*} 
\maxhelper[\nil{\SLevel}, u] 
	&\rewriteDef \setS{\SLevel}{u}\\ 	
\maxhelper[u \cons{\SLevel} E , v] 
	&\rewriteDef \ifs v \leqs{\SLevel} u \thens E \add{\SLevel} u\\
	&\phantom{\rewriteDef{} } \elses \ifs u \leqs{\SLevel} v \thens E \add{\SLevel} v
	\elses \maxhelper[E, v] \add{\SLevel} u  
\end{align*}
And we compute the minimal representation for $\maxL$ according to 
\cref{prop-compute_max}.
\begin{align*}
	\maxL[\maxS[E], \maxS[\nil{\SLevel}]] &\rewriteDef \maxS[E]\\
	\maxL[\maxS[E], \maxS[u \cons{\SLevel} F]] &\rewriteDef
	\maxL[\maxhelper[E, u], \maxS[F]]
\end{align*}

\begin{proposition} \label{prop-max_sound}
	For all $t_1, t_2 \in \NFTerm$,  
	$\maxL[\translation{t_1}_{\NF}, \translation{t_2}_{\NF}] \rewrites^* \translation{\repr[\max(t_1, t_2)]}_{\NF}$.
\end{proposition}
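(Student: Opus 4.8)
The plan is to reduce this statement to the purely combinatorial fact already proved, \cref{prop-compute_max}, which says that $\repr[\max[u_1,\ldots,u_n,v]] = \maxSub[\set{u_1,\ldots,u_n},v]$ whenever $\max[u_1,\ldots,u_n]$ is a minimal representation. It then suffices to check that the rewrite rules defining $\maxL$ fold the second argument into the first sublevel by sublevel, exactly as $\maxSub$ prescribes. So fix minimal representations $t_1 = \max[u_1,\ldots,u_n]$ and $t_2 = \max[v_1,\ldots,v_m]$, so that $\translation{t_1}_{\NF} = \maxS[E]$ and $\translation{t_2}_{\NF} = \maxS[F]$ where $E$ and $F$ are the sorted lists encoding the sets of $\SLevel$-translations of the $u_i$ and the $v_j$ respectively.

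First I would prove a soundness lemma for $\maxhelper$: for every finite $S \subseteq \SubTerm$ and every $w \in \SubTerm$, writing $\translation{S}$ for the sorted list encoding $S$, one has $\maxhelper[\translation{S}, \translation{w}_{\Sub}] \rewrites^* \translation{\maxSub[S,w]}$. This goes by induction on the list $\translation{S}$. The empty case is the first $\maxhelper$ rule. In the cons case the head is some $\translation{u}_{\Sub}$, and using the soundness of $\leqs{\SLevel}$ (the preceding proposition, which mirrors \cref{th-comparison_sublevel}) together with the reduction rules of $\iteop$, the three branches of the $\ifs\ \thens\ \elses$ correspond precisely to the three cases $w \leqTerm u$, $u \leqTerm w$, and "$u,w$ incomparable" in the definition of $\maxSub$ on $S' \cup \set{u}$; the last branch is closed by the induction hypothesis and the correctness of $\add{\SLevel}$ for inserting an element into a sorted list.

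Then I would prove the statement by induction on $m$. If $m = 0$, then $F = \nil{\SLevel}$, the first $\maxL$ rule gives $\maxL[\maxS[E],\maxS[\nil{\SLevel}]] \rewrites \maxS[E] = \translation{t_1}_{\NF}$, and $\repr[\max(t_1,\max[\emptyset])] = \repr[t_1] = t_1$ by \cref{th-exist_normal}. If $m \geq 1$, let $w'$ be the head of $F$ (its least element, the translation of a sublevel $w$) and $F'$ its tail; then $F'$ encodes the set of $t_2$'s sublevels with $w$ removed, which is still pairwise incomparable, so $\maxS[F'] = \translation{t_2'}_{\NF}$ for that smaller minimal representation $t_2'$. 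The second $\maxL$ rule, followed by the $\maxhelper$ lemma and \cref{prop-compute_max}, rewrites $\maxL[\maxS[E],\maxS[F]]$ to $\maxL[\translation{t_1'}_{\NF},\translation{t_2'}_{\NF}]$ where $t_1' = \repr[\max[u_1,\ldots,u_n,w]]$. Applying the induction hypothesis to $t_1'$ and $t_2'$ takes this to $\translation{\repr[\max(t_1',t_2')]}_{\NF}$. Finally $\max(t_1',t_2') \eqTerm \max(t_1,t_2)$, since both are, up to $\eqTerm$, the maximum of all the $u_i$ and all the $v_j$ (re-inserting $w$ into the sublevels of $t_2'$ yields the sublevels of $t_2$), so $\repr[\max(t_1',t_2')] = \repr[\max(t_1,t_2)]$ by \cref{th-exist_normal}, which concludes.

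The main obstacle I anticipate is not the mathematics but the bookkeeping between the set view and the sorted-list view: one needs that the head of the list encoding a finite set is its least element, that the tail encodes the rest, and that $\add{\SLevel}$ re-inserts an element correctly into a sorted list. These are all properties of the $\Set$ implementation of \cref{prop_set}; they are intuitively clear but must be stated precisely for the two inductions on lists to go through. Once they are granted, the genuine content is entirely in \cref{prop-compute_max}, and everything else is checking that the rules mirror it.
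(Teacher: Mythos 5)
Your proof is correct and follows exactly the route the paper intends (the paper leaves this proposition unproved, but introduces the rules with ``we compute the minimal representation for $\maxL$ according to \cref{prop-compute_max}''): a soundness lemma showing $\maxhelper$ implements $\maxSub$, then an induction on the second argument that peels off one sublevel at a time and invokes \cref{prop-compute_max} together with the uniqueness of minimal representations. The bookkeeping caveats you flag (head/tail of the sorted-list encoding, correctness of $\add{\SLevel}$, and that $\leqs{\SLevel}$ reduces to $\false$ in the negative case so the $\iteop$ branches fire) are precisely the right ones to discharge.
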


\subsection{The rule} \label{subsec-max}

To begin, we study $\prodrule[u, v]$ where $u, v \in \SubTerm$.

\begin{proposition} \label{prop-rule_sub}
	Let $f(E, S)$ be either $\A[E, x, S]$ or $\B[E, S]$ and $g(F, K)$ be either 
	$\A[F, x, K]$ or $\B[F, K]$. Then 
	\[\prodrule[f(E,S), g(F, K)] \eqTerm \max[f(E \cup F, S), g(F, K)].
	\]	
\end{proposition}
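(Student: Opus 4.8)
The statement is a semantic identity, so the plan is to unfold the definition of $\eqTerm$: fix an arbitrary valuation $\sigma$ and prove that $\valuation{\prodrule[f(E,S), g(F,K)]}{\sigma}$ equals $\valuation{\max[f(E \cup F, S), g(F,K)]}{\sigma}$. Using the semantics of $\prodrule$ and of $\max$, the first is $\imax[\valuation{f(E,S)}{\sigma}, \valuation{g(F,K)}{\sigma}]$ and the second is $\max[\valuation{f(E \cup F, S)}{\sigma}, \valuation{g(F,K)}{\sigma}]$, so everything comes down to the definition of $\imax$ versus $\max$, and the decisive datum is whether $\valuation{g(F,K)}{\sigma}$ is zero or positive.

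First I would record a small remark: recalling that we work within $\SubTerm$ (so $\A[F,x,K] \in \SubTerm$ forces $x \in F$ and $\B[F,K] \in \SubTerm$ forces $K > 0$), one has
\[
\valuation{g(F,K)}{\sigma} = 0 \iff \text{there exists } y \in F \text{ with } \sigma(y) = 0 .
\]
Indeed, if such a $y$ exists the value is $0$ by \cref{def-sublevel}; and if no variable of $F$ vanishes then either $\valuation{g(F,K)}{\sigma} = K > 0$ or $\valuation{g(F,K)}{\sigma} = \sigma(x) + K \geqslant 1$ since $x \in F$. In the same way, straight from \cref{def-sublevel}: if no variable of $F$ vanishes under $\sigma$ then $\valuation{f(E \cup F, S)}{\sigma} = \valuation{f(E,S)}{\sigma}$ (enlarging the variable set of a sublevel by variables that $\sigma$ does not send to $0$ changes nothing), whereas if some $y \in F$ has $\sigma(y) = 0$ then $\valuation{f(E \cup F, S)}{\sigma} = 0$ because $y \in E \cup F$.

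I would then split on the remark. If some variable of $F$ vanishes under $\sigma$, then $\valuation{g(F,K)}{\sigma} = 0$, so the left-hand side is $\imax[\,\cdot\,, 0] = 0$, while the right-hand side is $\max[0, 0] = 0$; they agree. If no variable of $F$ vanishes, then $\valuation{g(F,K)}{\sigma} > 0$, hence $\imax[a, \valuation{g(F,K)}{\sigma}] = \max[a, \valuation{g(F,K)}{\sigma}]$ for every natural number $a$, and $\valuation{f(E \cup F, S)}{\sigma} = \valuation{f(E,S)}{\sigma}$ by the remark; so both sides equal $\max[\valuation{f(E,S)}{\sigma}, \valuation{g(F,K)}{\sigma}]$.

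The arithmetic here is routine; the one delicate point — and the only place where the hypothesis that $f(E,S)$ and $g(F,K)$ are \emph{sublevels} rather than arbitrary $\A$- and $\B$-terms is really used — is the equivalence $\valuation{g(F,K)}{\sigma} = 0 \iff \exists\, y \in F,\ \sigma(y) = 0$. Without $K > 0$ (for a $\B$) or $x \in F$ (for an $\A$), the value of $g$ could be $0$ with no variable of $F$ vanishing, and then $\valuation{f(E \cup F, S)}{\sigma}$ need not be $0$, so the identity would break; ruling out this degenerate situation is the actual content of the proof.
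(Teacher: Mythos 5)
Your proof is correct and follows essentially the same route as the paper's: fix a valuation, split on whether some variable of $F$ vanishes, and in the non-vanishing case use that $\valuation{g(F,K)}{\sigma} > 0$ (so $\imax$ collapses to $\max$) together with $\valuation{f(E \cup F, S)}{\sigma} = \valuation{f(E,S)}{\sigma}$. Your explicit remark that the $\SubTerm$ conditions ($x \in F$ for an $\A$, $K > 0$ for a $\B$) are what make $\valuation{g(F,K)}{\sigma} = 0$ equivalent to the vanishing of some variable of $F$ is a welcome clarification of a point the paper's proof only states for the $\B$ case.
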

\begin{proof}
	We note $u = f(E, x, S)$ and $v = g(F, K)$. Let $\sigma$ be a substitution. 
	\begin{itemize}
		\item If there exists $y \in F$ such that $\sigma(y) = 0$, then 
		$
		\valuation{\prodrule[u, v]}{\sigma} = 0 = 
		\valuation{\max[f(E \cup F, S), v]}{\sigma}.
		$
		\item Else,$\valuation{v}{\sigma} = K > 0$ and then
		$\valuation{\prodrule[u, v]}{\sigma} = \max[\valuation{u}{\sigma}, K]$.
		Besides, since $\sigma(y) \neq 0$ forall $y \in F$, there exists $y \in E \cup F$ such that
		$\sigma(y) = 0$ if and only if there exists $y \in E$ such that $\sigma(y) = 0$, hence 
		$\valuation{u}{\sigma} = \valuation{f(E \cup F, S)}{\sigma}$.
	\end{itemize}
	Hence the result.
\end{proof}
We implement it as a function $\ruleSL \colon \SLevel \to \SLevel \to \Level$. 
\begin{align*}
	\ruleSL[\As[E, x, S], \Bs[F, K]] 
	&\rewriteDef	\maxL[\maxS[\As[E \union{\Nat} F, x, S]], \maxS[\Bs[F, K]]]\\
	\ruleSL[\Bs[E, S], \Bs[F, K]] 
	&\rewriteDef	\maxL[\maxS[\Bs[E \union{\Nat} F, S]], \maxS[\Bs[F, K]]]\\
	\ruleSL[\Bs[E, S], \As[F, x, K]] 
	&\rewriteDef	\maxL[\maxS[\Bs[E \union{\Nat} F, S]], \maxS[\As[F, x, K]]]\\
	\ruleSL[\As[E, x, S], \As[F, y, K]] 
	&\rewriteDef	\maxL[\maxS[\As[E \union{\Nat} F, x, S]], \maxS[\As[F, y, K]]]\\
\end{align*}
Then, following the equalities $\prodrule[0, t] \eqTerm t$ and 
$\prodrule[t, 0] \eqTerm 0$, and \cref{prop-max_right,prop-max_left},
we define $\rulehelper \colon \SLevel \to \Level \to \Level$ and add these rewrite rules.
\begin{align*}
\ruleL[\maxS[\nil{\SLevel}], t]           &\rewriteDef t\\
\ruleL[\maxS[u \cons{\SLevel} q], t]      &\rewriteDef\maxL[\rulehelper[u, t], \ruleL[q, t]]\\
\rulehelper[u, \maxS[\nil{\SLevel}]]      &\rewriteDef \maxS[\nil{\SLevel}]\\
\rulehelper[u, \maxS[v \cons{\SLevel} q]] &\rewriteDef \maxL[\ruleSL[u, v], \rulehelper[u, q]]
\end{align*}

\begin{proposition} \label{prop-rule_sound}
		For all $t_1, t_2 \in \NFTerm$,  
	$\ruleL[\translation{t_1}_{\NF}, \translation{t_2}_{\NF}] \rewrites^* \translation{\repr[\prodrule(t_1, t_2)]}_{\NF}$.
\end{proposition}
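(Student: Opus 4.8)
The plan is to lift the soundness of $\maxL$ (\cref{prop-max_sound}) together with the distributivity laws of \cref{sec-level-transformation} through the three auxiliary functions $\ruleSL$, $\rulehelper$, $\ruleL$, in that order. \textbf{Step 1 (soundness of $\ruleSL$).} First I would check that for all $u, v \in \SubTerm$ we have $\ruleSL[\translation{u}_{\Sub}, \translation{v}_{\Sub}] \rewrites^* \translation{\repr[\prodrule(u,v)]}_{\NF}$. Each of the four rules for $\ruleSL$ fires on the translations and produces $\maxL[\maxS[w_1], \maxS[w_2]]$, where $\max[w_1, w_2]$ is exactly the right-hand side given by \cref{prop-rule_sub}. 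One only has to observe that $w_1, w_2 \in \SubTerm$ — in the $\A$-cases the distinguished variable still lies in $E \cup F$ since it lies in $E$, and the condition $S>0$ for a $\B$-subterm is inherited from $v \in \SubTerm$ — so $\maxS[w_1]$ and $\maxS[w_2]$ are translations of elements of $\NFTerm$ (a singleton is trivially a minimal representation). Then \cref{prop-max_sound} gives $\maxL[\maxS[w_1], \maxS[w_2]] \rewrites^* \translation{\repr[\max(w_1,w_2)]}_{\NF}$, and since $\max[w_1, w_2] \eqTerm \prodrule[u, v]$ by \cref{prop-rule_sub}, the uniqueness in \cref{th-exist_normal} yields $\repr[\max(w_1,w_2)] = \repr[\prodrule(u,v)]$.

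\textbf{Step 2 (soundness of $\rulehelper$).} Next I would show, by induction on the sorted list of sublevels of $t_2 = \max[v_1,\ldots,v_m] \in \NFTerm$, that $\rulehelper[\translation{u}_{\Sub}, \translation{t_2}_{\NF}] \rewrites^* \translation{\repr[\prodrule(u,t_2)]}_{\NF}$ for every $u \in \SubTerm$. When $m = 0$, $\rulehelper[\translation{u}_{\Sub}, \maxS[\nil{\SLevel}]] \rewrites \maxS[\nil{\SLevel}] = \translation{\repr[0]}_{\NF}$, and $\prodrule[u, 0] \eqTerm 0$ by \cref{prop-r_def}. Otherwise the rule splits off the head $v$ and produces $\maxL[\ruleSL[\translation{u}_{\Sub}, \translation{v}_{\Sub}], \rulehelper[\translation{u}_{\Sub}, \translation{t_2'}_{\NF}]]$ where $t_2'$ is the maximum of the remaining sublevels, which is again in $\NFTerm$ because pairwise incomparability is inherited. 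By Step 1 and the induction hypothesis this rewrites to $\maxL[\translation{\repr[\prodrule(u,v)]}_{\NF}, \translation{\repr[\prodrule(u,t_2')]}_{\NF}]$, which by \cref{prop-max_sound} rewrites to $\translation{\repr[\max(\prodrule(u,v), \prodrule(u,t_2'))]}_{\NF}$; and $\max[\prodrule[u,v], \prodrule[u,t_2']] \eqTerm \prodrule[u, \max[v, t_2']] \eqTerm \prodrule[u, t_2]$ by iterating \cref{prop-max_right}, so again \cref{th-exist_normal} identifies the two minimal representations.

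\textbf{Step 3 (the statement).} Finally I would prove the proposition by the same induction on the sorted list of sublevels of $t_1$. If $t_1 = \max[\emptyset]$, then $\ruleL[\maxS[\nil{\SLevel}], \translation{t_2}_{\NF}] \rewrites \translation{t_2}_{\NF}$, and since $\imax[0, j] = j$ for all $j$ we get $\prodrule[t_1, t_2] \eqTerm t_2$, hence $\repr[\prodrule(t_1,t_2)] = \repr[t_2] = t_2$ (as $t_2 \in \NFTerm$ is its own minimal representation). Otherwise the rule splits off a head sublevel $u$ and yields $\maxL[\rulehelper[\translation{u}_{\Sub}, \translation{t_2}_{\NF}], \ruleL[\translation{t_1'}_{\NF}, \translation{t_2}_{\NF}]]$, with $t_1'$ the maximum of the remaining sublevels (again in $\NFTerm$); by Step 2 and the induction hypothesis this rewrites to $\maxL[\translation{\repr[\prodrule(u,t_2)]}_{\NF}, \translation{\repr[\prodrule(t_1',t_2)]}_{\NF}]$, hence by \cref{prop-max_sound} to $\translation{\repr[\max(\prodrule(u,t_2), \prodrule(t_1',t_2))]}_{\NF}$, and $\max[\prodrule[u,t_2], \prodrule[t_1',t_2]] \eqTerm \prodrule[\max[u,t_1'], t_2] \eqTerm \prodrule[t_1, t_2]$ by iterating \cref{prop-max_left}.

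\textbf{Main obstacle.} All the genuine content is already available: the distributivity laws \cref{prop-max_right,prop-max_left}, the sublevel computation \cref{prop-rule_sub}, and the soundness of $\maxL$ from \cref{prop-max_sound}; the work here is disciplined bookkeeping. The point requiring care is that at each decomposition of a translated normal form $\maxS[\dots]$ into head and tail, the tail must again represent an element of $\NFTerm$ (so that \cref{prop-max_sound} and the induction hypotheses apply), the sublevels produced by $\ruleSL$ must still meet the membership conditions of $\SubTerm$ (namely $x \in E$ and $S > 0$), and the set union $\union{\Nat}$ must denote the actual union $\translation{E \cup F}$ so that \cref{prop-rule_sub} transfers verbatim. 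A minor technicality is that $\translation{t}_{\NF}$ presents its argument through $\nil{\SLevel}$ and $\add{\SLevel}$, so a few preliminary reductions are needed to expose the $\cons{\SLevel}$ pattern the rules match on, which is harmless since only a $\rewrites^*$ statement is claimed.
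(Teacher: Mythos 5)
Your proof is correct and follows the same three-step route as the paper: soundness of $\ruleSL$ on sublevels via \cref{prop-rule_sub} and \cref{prop-max_sound}, then soundness of $\rulehelper$ by induction on the second argument using \cref{prop-max_right}, then the statement by induction on the first argument using \cref{prop-max_left} and $\prodrule[0,t] \eqTerm t$; you merely spell out the bookkeeping (tails staying in $\NFTerm$, membership conditions for $\SubTerm$) that the paper leaves implicit. One cosmetic slip: in Step~1 the condition $S>0$ for the produced subterm $\B[E \cup F, S]$ is inherited from $u \in \SubTerm$, not from $v$.
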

\begin{proof}
	By \cref{prop-rule_sub,prop-max_sound}, for all $u, v \in \SubTerm$, 
	$\ruleSL[\translation{u}_{\Sub}, \translation{v}_{\Sub}] \rewrites^* 
	\translation{\repr[\prodrule[u, v]]}$.
	Then, by induction on $t \in \NFTerm$ and using \cref{prop-max_sound},
	we show that forall $u \in \SubTerm$,  
	$\rulehelper[\translation{u}_{\Sub}, \translation{t}_{\NF}] \rewrites^* 
	\translation{\repr[\prodrule[u, t]]}_{\NF}$.
	And finally, we show the result by induction on $t_2$ using \cref{prop-max_right,prop-max_left} 
	and the equivalences $\prodrule[0, t] \eqTerm t$ and $\prodrule[t, 0] \eqTerm 0$.
\end{proof}

\subsection{Implementing the substitution}

Since we use a deep encoding, $\beta$-reduction cannot be used for substitution. Then, we implement an substitution function.
First, we implement $\evalS \colon \SLevel \to \Nat \to \Nat \to \Level$ for the sublevels following the semantic given in the \cref{def-sublevel}.
\begin{align*}
\evalS[\Bs[E, S], y, n] 
& \rewriteDef  \ifs \ands[y \included{\Nat} E, n \eqs{\Nat} 0] \thens \maxS[\nil{\SLevel}]\\
& \phantom{\rewriteDef{}} \elses \maxS[\Bs[E \del{\Nat} y, S]]\\
\evalS[\As[E, x, S], y, n] 
&\rewriteDef  \ifs \ands[y \included{\Nat} E, n \eqs{\Nat} 0] \thens \maxS[\nil{\SLevel}]\\
& \phantom{\rewriteDef{}} \elses \ifs x \eqs{\Nat} y \thens \maxS[\Bs[\nil{\Nat}, S \plus n]]\\
& \phantom{\rewriteDef{}} \elses \maxS[\As[E \del{\Nat} y, x, S]] 
\end{align*}
Then, we create a function $\evalL \colon \Level \to \Nat \to \Nat \to \Level$
that evaluate a level using the fact that 
$\subst{\max[u_1, \ldots, u_n]}{\set{x \mapsto n}} = 
		\max[\subst{u_1}{\set{x \mapsto n}}, \ldots, \subst{u_n}{\set{x \mapsto n}}]$.
\[
\evalL[\nil{\SLevel}, y, n] \rewriteDef \nil{\SLevel}\qquad
\evalL[u \cons{\SLevel} q, y, n] \rewriteDef \maxL[\evalS[u, y, n], \evalL[q, y, n]]
\]

\begin{proposition}
	Let $t \in \Term$. Then, the normal form of $\translation{\subst{t}{\set{x \mapsto u}}}$ is 
	$\evalL[\translation{t}, \translation{u}]$. 
\end{proposition}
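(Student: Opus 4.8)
Unfolding the statement, write $u = n \in \N$ for the concrete level being substituted and read its right-hand side as $\evalL[\translation{t}, \gamma(x), n]$ -- the two $\Nat$-arguments of $\evalL$ being the code $\gamma(x)$ of $x$ and the value $n$. The plan is to prove two independent halves: (i) the rewrite rules of the previous subsections normalise the naive translation $\translation{t}$ of any level to the translation $\translation{\repr[t]}_{\NF}$ of its minimal representation; and (ii) $\evalL$, applied to such a minimal representation together with $\gamma(x)$ and $n$, rewrites to $\translation{\repr[\subst{t}{\set{x \mapsto n}}]}_{\NF}$. Granting both, $\evalL[\translation{t}, \gamma(x), n]$ reduces by (i) to $\evalL[\translation{\repr[t]}_{\NF}, \gamma(x), n]$, then by (ii) to $\translation{\repr[\subst{t}{\set{x \mapsto n}}]}_{\NF}$; and by (i) again this last term, being the translation of a minimal representation, is the normal form of $\translation{\subst{t}{\set{x \mapsto n}}}$.

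Half (i) is an induction on $t \in \Term$. The base cases are the rules for $\zeroL$ and $\varL$, which normalise to $\maxS[\nil{\SLevel}] = \translation{\repr[0]}_{\NF}$ and to $\translation{\repr[x]}_{\NF}$, since $\repr[0] = \max[\emptyset]$ and $\repr[x] = \max[\A[\set{x}, x, 0]]$. The inductive cases for $\sL$, $\maxL$ and $\ruleL$ are precisely \cref{prop-succ_sound}, \cref{prop-max_sound} and \cref{prop-rule_sound}; to push the induction hypotheses through subterms one uses that $\repr$ is well defined on $\eqTerm$-classes (\cref{th-exist_normal}), so that e.g. $\repr[\max[\repr[v], \repr[w]]] = \repr[\max[v, w]]$.

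For half (ii), the two rules defining $\evalL$ make it distribute over the maximum, mirroring the semantic identity $\valuation{\subst{\max[u_1, \ldots, u_m]}{\set{x \mapsto n}}}{\sigma} = \max_{i} \valuation{\subst{u_i}{\set{x \mapsto n}}}{\sigma}$; hence it suffices to treat a single sublevel. Concretely, I would show that for every $u \in \SubTerm$ the term $\evalS[\translation{u}_{\Sub}, \gamma(x), n]$ rewrites to a level whose value under every valuation $\sigma$ equals $\valuation{\subst{u}{\set{x \mapsto n}}}{\sigma}$, by a case analysis matching the conditional branches of the $\evalS$ rules against \cref{def-sublevel}: for $u = \B[E, S]$, if $x \in E$ and $n = 0$ the zero-guard of $\B$ fires and $\subst{u}{\set{x \mapsto n}} \eqTerm 0$, and otherwise substituting $x$ by $n$ turns that guard into ``some $y \in E \setminus \set{x}$ has value $0$'', i.e. the result is $\B[E \setminus \set{x}, S]$; for $u = \A[E, x', S]$ the argument is analogous, with the extra subcase $x' = x$, where fixing $x'$ to the concrete value $n$ collapses the value $\valuation{x'}{\sigma} + S$ (taken when no zero-guard fires) to the constant $n + S$, yielding a $\B$-sublevel. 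Once this per-sublevel correctness holds, the $\maxL$'s introduced by $\evalL$ fold the individual results into a single minimal representation via \cref{prop-max_sound}, and by the uniqueness in \cref{th-exist_normal} this representation must be $\translation{\repr[\subst{t}{\set{x \mapsto n}}]}_{\NF}$.

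The main obstacle is the per-sublevel verification of $\evalS$ against \cref{def-sublevel}: one has to track precisely how the zero-guard on the variable set evolves when the substituted variable is removed from it, and keep the bookkeeping straight between the raw translation $\translation{\cdot}$, the sublevel translation $\translation{\cdot}_{\Sub}$ and the normal-form translation $\translation{\cdot}_{\NF}$, together with the $\gamma$-correspondence between membership and deletion on sets of variables and on sets of their $\Nat$-codes. Everything else is either a direct appeal to the soundness lemmas already established or routine first-order rewriting.
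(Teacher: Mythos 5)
The paper states this proposition without proof, so there is nothing to compare your argument against; on its own terms, your architecture is the natural one and is essentially sound: normalise $\translation{t}$ to $\translation{\repr[t]}_{\NF}$ using the soundness lemmas, evaluate sublevel by sublevel with $\evalS$, fold the results with $\maxL$ via \cref{prop-max_sound}, and conclude by the uniqueness of minimal representations (\cref{th-exist_normal}), using that $\subst{v}{\set{x \mapsto n}}$ under $\sigma$ is just $v$ under $\sigma[x \mapsto n]$, so $\eqTerm$ is stable under substitution. Your reading of the (under-specified) statement as concerning $\evalL[\translation{t}, \gamma(x), n]$ for a concrete value $n$ is also the only sensible one.

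The genuine gap is exactly at the point you flag as "the main obstacle" and then do not carry out: the per-sublevel verification of $\evalS$ against \cref{def-sublevel} in the case $u = \A[E, x, S]$ evaluated at the variable $x$ itself. You say this case "collapses the value to the constant $n + S$, yielding a $\B$-sublevel", but you do not say \emph{which} $\B$-sublevel, and this is where the claim breaks down. Semantically, for $n \neq 0$ one has $\subst{\A[E, x, S]}{\set{x \mapsto n}} \eqTerm \B[E \setminus \set{x}, S + n]$: the zero-guard on the remaining variables of $E$ must survive. The rule of the paper, however, produces $\maxS[\Bs[\nil{\Nat}, S \plus n]]$, i.e.\ the translation of $\B[\emptyset, S + n]$, which is not $\eqTerm$-equal to $\B[E \setminus \set{x}, S+n]$ unless $E = \set{x}$ (take any $z \in E \setminus \set{x}$ and a valuation with $\sigma(z) = 0$). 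So the key intermediate claim of your half (ii) --- that $\evalS[\translation{u}_{\Sub}, \gamma(x), n]$ rewrites to a level with the same value as $\subst{u}{\set{x \mapsto n}}$ under every valuation --- is false for the rules as written, and the case analysis you defer to would not close. To make the proof go through you must either restrict to sublevels with $E = \set{x}$ (which is not the general case) or correct the rule to $\maxS[\Bs[E \del{\Nat} y, S \plus n]]$ and verify that the result still lands in $\SubTerm$ (it does: the guard of the outer conditional forces $n \neq 0$ there, hence $S \plus n > 0$). The remaining branches ($\B[E,S]$, and $\A[E,x',S]$ with $x' \neq x$) do match the semantics, as you describe.
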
 


\subsection{Properties of the rewrite system}

The rewrite system that we designed have strong properties. First, one can note that it
is does not use any higher-order rewrite rule, hence it can be implemented in a first
order system.

\begin{theorem}
	The rewrite system is confluent and strongly normalizing.
\end{theorem}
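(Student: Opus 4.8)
The plan is to dispatch the two properties with the standard toolbox of first-order term rewriting, which applies since (as already observed) every rule is a first-order rule: confluence will follow from orthogonality, and strong normalization from a single recursive path ordering with argument statuses that orients all rules.

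\textbf{Confluence.} First I would check left-linearity: every left-hand side is a (nested) constructor pattern in which each variable occurs once — immediate for all rules, including those whose matched variables are duplicated on the right, such as the second $\maxhelper$ rule or the $\evalS$ rules. Then I would argue there are no critical pairs. Rules with distinct head symbols cannot overlap; for a fixed symbol, its rules are separated by disjoint outermost constructors in one argument ($\nil{\SLevel}$ vs.\ $\cons{\SLevel}$ for $\sL$, $\maxL$, $\sSL$, $\maxhelper$, $\ruleL$, $\rulehelper$, $\evalL$; $\As$ vs.\ $\Bs$ for $\ruleSL$, $\leqs{\SLevel}$, $\evalS$, $\order{\SLevel}$; $\true$ vs.\ $\false$ for the conditional; $\zero$ vs.\ $\s$ and $\nil{}$ vs.\ $\cons{}$ for the arithmetic and set operations), so no two of them unify; and no rule-bearing symbol ever occurs below the root of another rule's left-hand side — the only symbols appearing nested in left-hand sides are the wrapper $\maxS$, which carries no rule, and the constructors of $\Level$, so there is no rule-into-rule overlap. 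Hence the system is orthogonal, and therefore confluent \cite{terese}; this does not even use termination, and alternatively, once strong normalization is in hand, confluence follows from Newman's lemma since there are no critical pairs to join.

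\textbf{Strong normalization.} I would orient the system with a recursive path ordering whose precedence is the order induced by the call graph: $\evalL \succ \evalS$, $\evalL \succ \maxL$, $\ruleL \succ \rulehelper \succ \ruleSL \succ \maxL \succ \maxhelper$, $\sL \succ \sSL$, with $\maxhelper$, $\sSL$, the insertion $\add{\SLevel}$, the sublevel comparison $\leqs{\SLevel}$ and the total order $\order{\SLevel}$ above the arithmetic and set operations, which are above the boolean operations, which are above $\maxS$ and the pure constructors. Using the standard argument status everywhere except for $\maxL$ — whose status compares the second argument first — every rule's left-hand side dominates its right-hand side: the non-recursive cases reduce to subterm comparisons together with the precedence, and every recursive call is made on a structurally smaller argument (a shorter list of sublevels, or a smaller natural), which is exactly the argument that the chosen status inspects. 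The omitted rules for booleans, naturals and finite sets are the usual structurally recursive definitions and slot into the bottom of the precedence without difficulty.

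\textbf{The main obstacle.} The single rule not handled by a naive size argument or a plain lexicographic path order is
$\maxL[\maxS[E], \maxS[u \cons{\SLevel} F]] \rewriteDef \maxL[\maxhelper[E, u], \maxS[F]]$,
where the \emph{first} argument is replaced by $\maxhelper[E,u]$, which may be a strictly larger term than $\maxS[E]$, while the progress is carried entirely by the \emph{second} argument, whose underlying list $F$ shrinks. This forces the two design choices that are the crux of the proof: $\maxL$ must be given the status that compares its second argument first, and $\maxhelper$ must sit strictly below $\maxL$ in the precedence so that $\maxL[\maxS[E], \maxS[u \cons{\SLevel} F]]$ dominates $\maxhelper[E,u]$ via the subterm clause; the analogous constraint then propagates to $\ruleSL$, $\rulehelper$ and $\ruleL$, which feed their outputs into $\maxL$. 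Once the precedence is checked to be acyclic under these constraints, the remainder is the routine (if lengthy) verification of the other rules and of the basic layer.
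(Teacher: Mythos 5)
Your proposal is correct in outline but takes a genuinely different route from the paper: the paper does not give a mathematical argument at all, it simply reports that termination was certified by two automated checkers (\TTTT{} and SizeChangeTool) and confluence by CSI, whereas you give a pen-and-paper proof via orthogonality and a lexicographic/recursive path ordering with argument statuses. Your analysis correctly isolates the one rule that defeats a naive ordering, namely $\maxL[\maxS[E], \maxS[u \cons{\SLevel} F]] \rewriteDef \maxL[\maxhelper[E, u], \maxS[F]]$, and the fix (compare the second argument of $\maxL$ first, put $\maxhelper$ strictly below $\maxL$ so that the first argument of the recursive call is handled by the subterm-plus-precedence clause) does orient it; the induced precedence is acyclic, so the termination argument goes through for all the displayed rules. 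What your approach buys is insight and independence from external tools; what the paper's approach buys is a machine-checked verification of the \emph{actual} rule set, including the rules for booleans, naturals and finite sets that the paper deliberately omits and on which your argument silently relies (you assume they are left-linear, non-overlapping and structurally recursive --- true of the standard definitions, but unverifiable from the text, and an equality like $x \eqs{\Nat} x \rewriteDef \true$ would break left-linearity). Two small imprecisions to fix: the constructors of $\Level$ ($\zeroL$, $\varL$, $\sL$, $\maxL$, $\ruleL$) do carry rules, so your blanket claim that nested left-hand-side symbols are rule-free is wrong as stated --- what saves the argument is that none of these symbols occurs \emph{below the root} of any left-hand side, the nested symbols being only $\maxS$, $\nil{\SLevel}$, $\cons{\SLevel}$, $\As$, $\Bs$ and the $\Nat$ constructors; and rule (3) of $\leqs{\SLevel}$, read literally, matches $\sL$ (which has rules) under $\Bs$, creating an overlap --- this is evidently a typo for the $\Nat$ successor $\s$, but it is worth flagging rather than passing over.
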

\begin{proof}
	The termination has been proved with two termination checkers, \TTTT{} \cite{TTT2} and 
	 SizeChangeTool \cite{SCT}, and the confluence with CSI \cite{CSI}.
\end{proof}

And of course, we show that it is sound relatively to the minimal representation.

\begin{theorem}[Soundness] \label{th-soundness}
	Let $t \in \Term$. Then, the normal form of $\translation{t}$ is 
	$\translation{\repr[t]}_{\NF}$. 
\end{theorem}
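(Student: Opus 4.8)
The plan is to proceed by structural induction on $t \in \Term$, following the inductive definition of the translation $\translation{\cdot}$ and showing at each step that the rewrite system computes the minimal representation of the corresponding level. The base cases are $t = 0$ and $t = x$: for $0$, the rule $\zeroL \rewriteDef \maxS[\nil{\SLevel}]$ gives exactly $\translation{\repr[0]}_{\NF}$ since $\repr[0] = \max[\emptyset]$; for a variable $x$, the rule $\varL[x] \rewriteDef \maxS[\setS{\SLevel}{\As[\setS{\Nat}{x}, 0, x]}]$ produces $\translation{\repr[x]}_{\NF}$, using that $\A[\set{x}, x, 0]$ is the sublevel representing $x$ (it lies in $\SubTerm$ since $x \in \set{x}$). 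Each inductive case then combines the induction hypothesis for the immediate subterms with the corresponding soundness proposition already established for the relevant operation.

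Concretely, for $t = s(t')$: by the induction hypothesis $\translation{t'}$ has normal form $\translation{\repr[t']}_{\NF}$, so $\translation{s(t')} = \sL[\translation{t'}]$ rewrites to $\sL[\translation{\repr[t']}_{\NF}]$, and by \cref{prop-succ_sound} this rewrites to $\translation{\repr[s(\repr[t'])]}_{\NF}$; since $\repr[t'] \eqTerm t'$ we have $\repr[s(\repr[t'])] = \repr[s(t')]$ by uniqueness of the minimal representation (\cref{th-exist_normal}). The case $t = \max[u, v]$ is analogous using \cref{prop-max_sound}, and $t = \prodrule[u, v]$ uses \cref{prop-rule_sound}; in each case the key closing observation is that $\repr$ only depends on the $\eqTerm$-class of its argument, so replacing a subterm by its minimal representation does not change the final answer. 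One should also remark that the auxiliary functions ($\sSL$, $\maxhelper$, $\rulehelper$, $\evalS$, $\evalL$, $\leqs{\SLevel}$, $\order{\SLevel}$, and the set operations) are confluent and strongly normalizing, which has been verified, so "the normal form" is well-defined; this is exactly the content of the preceding theorem.

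The main obstacle is ensuring that the rewrite system is \emph{confluent}, so that the normal form of $\translation{t}$ is genuinely unique and the inductive argument can freely choose to normalize subterms first. Since the statement of this theorem has been phrased in terms of "the normal form", this relies on the confluence-and-termination theorem immediately above (proved by the external checkers CSI, \TTTT{}, and SizeChangeTool). Granting that, the induction is routine: every constructor of $\Term$ has a matching rewrite rule whose right-hand side invokes an operation whose soundness proposition (\cref{prop-succ_sound,prop-max_sound,prop-rule_sound}) has already been established, and the compatibility of $\repr$ with $\eqTerm$ glues the steps together. A small additional point to check is that the translation of a set $\setS{\SLevel}{\ldots}$ built via $\add{\SLevel}$ normalizes to the same canonical sorted list regardless of insertion order, which follows from the set implementation keeping the uniqueness and order invariants.
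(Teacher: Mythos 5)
Your proposal is correct and follows essentially the same route as the paper: induction on $t$, the two base rewrite rules for $\zeroL$ and $\varL$, the soundness propositions \cref{prop-succ_sound,prop-max_sound,prop-rule_sound} for the inductive cases, and confluence/termination to conclude. The only cosmetic difference is that the paper closes by observing that no rule applies to a $\maxS$-headed term (so the reduct is itself irreducible), whereas you lean on confluence for uniqueness of the normal form; both points are needed and your write-up covers the argument adequately.
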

\begin{proof}
	We show that $t \rewrites^* \translation{\repr[t]}_{\NF}$ by induction on $t$: $\varL[x] \rewrites \maxS[\setS{\SLevel}{\As[\setS{\Nat}{x}, 0, x]}]$, $0 \rewrites \maxS{\nil{\SLevel}}$, 
	and we show the cases $s$, $\max$ and $\prodrule$ using \cref{prop-max_sound,prop-rule_sound,prop-succ_sound}.
	Since no rewrite rule can be applied to $\maxS$, the normal
	form of $\translation{t}$ is $\translation{\repr[t]}_{\NF}$.
\end{proof}

\cref{th-soundness,th-exist_normal} gives us that the translations of two equivalent levels
are convertible, and even more, they have the same normal form. In other words, this embedding
faithfully represent the level equivalences.

Besides, one could note some drawbacks of this embedding. First, we do not have a back translation
from $\lambdapimod \to \Level$. There are two main reason for this.
\begin{enumerate}
	\item $\As$ and $\Bs$ are not exact translations of $\A$ and $\B$. 
	\item $\NFTerm$ and $\Term$ are not equivalent.
\end{enumerate}
The first reason is linked to the restrictions that we added to $\A$ and $\B$. Here, it is possible
to write terms $\app{\Bs}{\app{E}{\zero}}$ or $\app{\As}{\app{E}{\app{x}{k}}}$ where $x$ is not an
element of $E$. In the same way, it is possible to write $\app{\maxS}{L}$ while two sublevels of
$L$ are comparable. 

A solution could be to add a dependent term as argument, to check these conditions. For instance,
$\As$ would have the type 
$(E \colon \Set{\Nat}) \to (x \colon \Nat) \to \app{\Prf}{(x \included{\Nat} E)} \to \Nat \to \SLevel$ where $\Prf \colon \Bool \to \mathtt{Type}$ represents the proof of a proposition. We
declare $\I \colon \app{\Prf}{\true}$ and we use $\app{\As}{\app{E}{\app{x}{\app{k}{\I}}}}$
using the fact that $x \included{\Nat} E$ reduces to $\true$ if and only if $x$ is an element
of $E$.

The second reason is not related to the embedding, but to the representation that we introduced.
Indeed, we already note that some minimal representations are not equivalent to any level
($\max[\A[\set{x}, y, 0]]$ or $\max[\B[\set{x}, 1]]$ as examples). Then, it could be
a good idea to find a characterization of the elements of $\NFTerm$ that actually correspond to
levels.


\section{Conclusion}

We introduced a new representation of the levels of the impredicative PTS where equivalent levels have the same representation. It provides us an easy procedure decision for the inequality problem in the 
$\imax$-sucessor algebra, and it permits us to get a sound encoding of these levels in the $\lambdapimod$, in the sense that equivalent levels have convertible translations. Moreover, this encoding corresponds to a first-order, confluent and strongly normalizing rewrite system, and in particular it permits to decide level equality. 

This encoding of the levels permits to encode $\ICC$ with universe polymorphism. Besides,
we still have to study how this encoding behaves well together with encodings of inductive types or cumulativity in order to get a better encoding of \Coq. The ideas mentioned at the end of
\cref{sec-implementation} are also interesting. In particular, the characterization of 
the elements of $\NFTerm$ that are actually levels would lead to a better understanding 
of the $\imax$-successor grammar.

Finally, this idea of representation, with the linear algebra analogy, could certainly be
adapted to some sets of terms built over a maximum, a supremum or other similar operations.
In addition to providing decision procedures, it would permit to define a concept similar
to the basis of a vector space on these spaces on these sets, and then it seems to be 
an interesting direction to explore.

\bibliography{stage.bib}

\appendix


\section{Proofs of \cref{sec-level-transformation}}

\propmaxright*

\begin{proof}
	Let $\sigma$ be a valuation, $t = \prodrule[u, \max[v, w]]$, $t_1 = \prodrule[u, v]$
	and $t_2 = \prodrule[u, w]$.
	\begin{itemize}
		\item If $\valuation{v}{\sigma} = \valuation{w}{\sigma} = 0$, then
		$\valuation{\max[t_1, t_2]}{\sigma} = 0
		= \valuation{t}{\sigma}$.
		\item If $\valuation{v}{\sigma} \neq 0$ and $\valuation{w}{\sigma} = 0$,
		then
		$
		\valuation{\max[t_1, t_2]}{\sigma} = 
		\max[\valuation{u}{\sigma}, \valuation{v}{\sigma}] =
		\valuation{t}{\sigma}.
		$
		\item If $\valuation{v}{\sigma} = 0$ and $\valuation{w}{\sigma} \neq 0$,
		then
		$
		\valuation{\max[t_1, t_2]}{\sigma} = 
		\max[\valuation{u}{\sigma}, \valuation{w}{\sigma}] =
		\valuation{t}{\sigma}.
		$
		\item Else,
		$\valuation{\max[t_1, t_2]}{\sigma} = 
		\max[\valuation{u}{\sigma}, \valuation{v}{\sigma}, \valuation{w}{\sigma}] =
		\valuation{t}{\sigma}$.
	\end{itemize}
\end{proof}

\propmaxleft*

\begin{proof}
	Let $\sigma$ be a valuation. 
	\begin{itemize}
		\item If $\valuation{w}{\sigma} = 0$, then
		$
		\valuation{\max[\prodrule[u, w], \prodrule[u, w]]}{\sigma} 
		= 0
		=\valuation{\prodrule[\max[u, v], w]}{\sigma}.
		$
		\item Else,
		$
		\valuation{\max[\prodrule[u, w], \prodrule[v, w]]}{\sigma} 
		= \max[\valuation{u}{\sigma}, \valuation{v}{\sigma}, \valuation{w}{\sigma}]
		= \valuation{\prodrule[\max[u, v], w]}{\sigma}.
		$	
	\end{itemize}
\end{proof}

\proprright*
\begin{proof}
	Let $\sigma$ be a valuation. 
	\begin{itemize}
		\item If $\valuation{w}{\sigma} = 0$, then $
		\valuation{\max[\prodrule[u, w], \prodrule[u, w]]}{\sigma} 
		= 0
		= \valuation{\prodrule[u, \prodrule[v, w]]}{\sigma}.
		$
		\item Else, $
		\valuation{\max[\prodrule[u, w], \prodrule[v, w]]}{\sigma} 
		= \max[\valuation{u}{\sigma}, \valuation{v}{\sigma}, \valuation{w}{\sigma}] 
		= \valuation{\prodrule[u, \prodrule[v, w]]}{\sigma}.
		$
	\end{itemize}
\end{proof}

\propdistrplus*
\begin{proof}
	Let $\sigma$ be a valuation. 
	\begin{itemize}
		\item If $\valuation{w}{\sigma} = 0$, then
		$\valuation{s(\prodrule[v, w])}{\sigma} = s(0) =
		\valuation{\max[s(w), \prodrule[s(v), w]]}{\sigma}$.
		\item Else $\valuation{s(\prodrule[v, w])}{\sigma} = 
		s(\max[\valuation{v}{\sigma}, \valuation{w}{\sigma}]) =
		\valuation{\max[s(w), \prodrule[s(v), w]]}{\sigma}$.
	\end{itemize}
\end{proof}

\propassym*
\begin{proof}
	The two cases are very similar. We show the result for the first one.
	Let $\sigma$ be a valuation. If forall $1 \leqslant i \leqslant n$,
	$\sigma(x_i) \neq 0$, then
	\begin{gather*}
		\valuation{t}{\sigma} = \max[\sigma(x_n), \ldots, \sigma(x_1), k + \sigma(y)] \\
		\forall 1 < i \leqslant n, 
		\valuation{\A[\set{x_n, \ldots, x_{i + 1}}, x_i, 0]}{\sigma} = \sigma(x_i)\\
		\valuation{\A[\set{x_n, \ldots, x_1}, y, k]}{\sigma} = k + \sigma(y)
	\end{gather*}
	and else, we take the largest $1 \leqslant i \leqslant n$ such that $\sigma(x_i) = 0$, then
	\begin{align*}
		&\valuation{t}{\sigma} = \max[\sigma(x_n), \ldots, \sigma(x_{i + 1})] &&
		\forall 1 < j \leqslant i, 
		\valuation{\A[\set{x_n, \ldots, x_{j + 1}}, x_j, 0]}{\sigma} = 0\\
		&\valuation{\A[\set{x_n, \ldots, x_1}, y, k]}{\sigma} = 0 &&
		\forall i < j \leqslant n,
		\valuation{\A[\set{x_n, \ldots, x_{j + 1}}, x_j, 0]}{\sigma} = \sigma(x_j)
	\end{align*}
	hence the equality.
\end{proof}

\propavariable*
\begin{proof}
	Let $\sigma$ be a valuation, $t = \A[E, x, S]$, $u = \A[E \cup \set{x}, x, S]$ and $v = \B[E, S]$.
	\begin{itemize}
		\item If there exists $y \in E$ such that $\sigma(y) = 0$, then
		$\valuation{t}{\sigma} = \valuation{u}{\sigma} = \valuation{v}{\sigma} = 0$.
		\item Else, if $\sigma(x) = 0$, then
		$\valuation{t}{\sigma} = S$,
		$\valuation{u}{\sigma} = 0$ and
		$\valuation{v}{\sigma} = S$.
		\item Else, $\sigma(x) \neq 0$, and then
		$\valuation{t}{\sigma} = \sigma(x) + S$,
		$\valuation{u}{\sigma} = \sigma(x) + S$ and 
		$\valuation{v}{\sigma} = S$.
	\end{itemize}
	Hence the result.
\end{proof}

\section{Proofs of \cref{sec-uniqueness}}

\thcomparisonsublevel*
\begin{proof}
	With $\sigma$ such that $\sigma(x) = K + 1$ and $\sigma(y) = 1$ if $y \neq x$,
	we show the first case. Indeed, $\valuation{\A[E, x, S]}{\sigma} = K + 1 + S > K = \valuation{\B[F, K]}{\sigma}$
	hence $\A[E, x, S] \not\leqTerm \B[F, K]$.
	The cases 2, 3 and 4 corresponds to \cref{prop-compare_aa,prop-compare_ab,prop-compare_bb}
	proved below.
\end{proof}

	\begin{proposition} \label{prop-compare_aa}
		Let $E, F \subset \X$, $x \in E$, $y \in F$ and $S, K \in \N$. Then
		\[
		\A[E, x, S] \leqTerm \A[F, y, K] \iff F \subset E \land x = y \land S \leqslant K.
		\]
	\end{proposition}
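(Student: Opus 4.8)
The plan is to prove both implications directly from the semantics of $\A$ given in \cref{def-sublevel}, by exhibiting suitable valuations; recall that here $x \in E$ and $y \in F$ by the standing hypothesis.

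For the right-to-left direction, I would assume $F \subset E$, $x = y$ and $S \leqslant K$, take an arbitrary valuation $\sigma$, and split on whether some variable of $E$ is sent to $0$. If so, then $\valuation{\A[E, x, S]}{\sigma} = 0$ and the inequality is immediate. If not, then since $F \subset E$ no variable of $F$ is sent to $0$ either, so $\valuation{\A[E, x, S]}{\sigma} = \sigma(x) + S$ and $\valuation{\A[F, y, K]}{\sigma} = \sigma(y) + K = \sigma(x) + K$, and $S \leqslant K$ finishes it.

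For the left-to-right direction, I would establish the three conjuncts in order, each time assuming the negation and building a valuation witnessing $\A[E, x, S] \not\leqTerm \A[F, y, K]$. First, if $F \not\subset E$, pick $z \in F \setminus E$ and let $\sigma$ send $z$ to $0$ and every other variable (in particular $x$, which lies in $E$) to a large value; then $\valuation{\A[E, x, S]}{\sigma} = \sigma(x) + S > 0 = \valuation{\A[F, y, K]}{\sigma}$. So $F \subset E$. Next, if $x \neq y$ then $y \in E \setminus \set{x}$; let $\sigma$ send $x$ to a value exceeding $K + 1$, every other variable of $E$ (including $y$) to $1$, and the rest to $0$; then the left-hand value is $\sigma(x) + S$ while the right-hand value is $\sigma(y) + K = 1 + K$, a contradiction. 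So $x = y$. Finally, with $x = y$ and $F \subset E$ in hand, let $\sigma$ send every variable of $E$ to $1$ and the rest to $0$; then the left-hand value is $1 + S$ and the right-hand value is $1 + K$, so $S \leqslant K$.

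I do not expect a genuine obstacle here: the argument is just a careful choice of valuations. The one point requiring attention is the bookkeeping of which variables receive value $0$ versus a nonzero value, and the order of the three steps — $F \subset E$ must be derived first so that it can be used when handling the $x = y$ and $S \leqslant K$ cases (and it is this ordering that also makes the analogous cases 2 and 3 of \cref{th-comparison_sublevel} go through smoothly).
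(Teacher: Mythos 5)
Your proof is correct and follows essentially the same route as the paper's: the easy direction by splitting on whether a variable of the relevant set is sent to $0$, and the converse by exhibiting a separating valuation for each conjunct. The only (harmless) difference is that you derive the conjuncts sequentially, reusing $F \subset E$, whereas the paper refutes each conjunct's negation independently by assigning nonzero values to all other variables.
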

	\begin{proof}
		We note $t_1 = \A[E, x, S]$ and $t_2 = \A[F, y, K]$. Let us suppose $F \subset E$, $x = y$ and $S \leqslant K$. Let $\sigma$ be a substitution. 
		\begin{itemize}
			\item If there exists $y \in F$ such that $\sigma(y) = 0$, then 
			$\valuation{t_2}{\sigma} = 0$ and since $F \subset E$, 
			$\valuation{t_1}{\sigma} = 0$.
			\item Else, $\valuation{t_1}{\sigma} \leqslant \sigma(x) + S \leqslant \sigma(x) + K 
			= \valuation{t_2}{\sigma}$.
		\end{itemize}
		In both cases, $\valuation{t_1}{\sigma} \leqslant \valuation{t_2}{\sigma}$ 
		hence $t_1 \leqTerm t_2$.
		
		Now, we show the other implication by contraposition.
		\begin{itemize}
			\item If there exists $z \in F$ such that $z \not \in E$, we take $\sigma$ such that 
			$\sigma(z) = 0$ and forall $j \neq z$, $\sigma(j) = 1$. We note that $z \neq x$ 
			(since $z\not\in E$ and $x \in E$) hence $\sigma(x) = 1$. Then,
			$
			\valuation{t_1}{\sigma} = S + 1 > 0 = \valuation{t_2}{\sigma}.
			$
			\item If $x \neq y$ we take $\sigma$ such that $\sigma(x) = K + 2$, $\sigma(y) = 1$ and 
			forall $z \neq x$ and $z \neq y$, $\sigma(z) = 1$. Then,
			$
			\valuation{t_1}{\sigma} = K + S + 2 > K + 1 = \valuation{t_2}{\sigma}.
			$
			\item If $S > K$ we take $\sigma$ such that forall $z$, $\sigma(z) = 1$. Then,
			$
			\valuation{t_1}{\sigma} = S + 1 > K + 1 = \valuation{t_2}{\sigma}.
			$
		\end{itemize}
	\end{proof}
	
	\begin{proposition} \label{prop-compare_bb}
		Let $E, F \subset \X$ and $S, K \in \N$. Then
		\[
		\B[E, S] \leqTerm \B[F, K] \iff F \subset E \land S \leqslant K.
		\]
	\end{proposition}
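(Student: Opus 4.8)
The plan is to mirror the proof of \cref{prop-compare_aa}: establish the ``if'' direction by a direct case analysis on the valuation semantics of $\B$ given in \cref{def-sublevel}, and the ``only if'' direction by contraposition, exhibiting for each way the right-hand side can fail an explicit valuation that separates the two sublevels.

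For the implication $F \subset E \land S \leqslant K \implies \B[E, S] \leqTerm \B[F, K]$, I would fix an arbitrary valuation $\sigma$ and split on whether some $y \in F$ has $\sigma(y) = 0$. If such a $y$ exists, then $F \subset E$ forces $y \in E$ as well, so $\valuation{\B[E, S]}{\sigma} = 0 = \valuation{\B[F, K]}{\sigma}$. Otherwise $\valuation{\B[F, K]}{\sigma} = K$, while $\valuation{\B[E, S]}{\sigma}$ is either $0$ or $S$, and both are $\leqslant K$ since $S \leqslant K$; hence $\valuation{\B[E, S]}{\sigma} \leqslant \valuation{\B[F, K]}{\sigma}$ in all cases, which is exactly $\B[E, S] \leqTerm \B[F, K]$.

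For the converse I would argue by contraposition, assuming $\lnot(F \subset E \land S \leqslant K)$. If $F \not\subset E$, pick $z \in F \setminus E$ and set $\sigma(z) = 0$ and $\sigma(j) = 1$ for every $j \neq z$; then $z \in F$ gives $\valuation{\B[F, K]}{\sigma} = 0$, whereas no variable of $E$ is mapped to $0$ (only $z$ is, and $z \notin E$), so $\valuation{\B[E, S]}{\sigma} = S > 0$, using that $\B[E, S] \in \SubTerm$ and hence $S > 0$. If instead $S > K$, take $\sigma$ constantly equal to $1$: no variable is zero, so $\valuation{\B[E, S]}{\sigma} = S > K = \valuation{\B[F, K]}{\sigma}$. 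In both cases $\B[E, S] \not\leqTerm \B[F, K]$, completing the contrapositive.

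There is essentially no hard step here; the computation is purely a matter of unfolding \cref{def-sublevel}. The one point to be careful about is the appeal to $S > 0$ in the first contrapositive case: this is what turns the inequality into a strict one, and it is legitimate precisely because the statement concerns elements of $\SubTerm$, where $\B[E, S] \in \SubTerm \iff S > 0$. Without it the equivalence would genuinely fail, since $\B[E, 0] \eqTerm 0 \leqTerm \B[F, K]$ irrespective of whether $F \subset E$.
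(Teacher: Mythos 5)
Your proof is correct and follows essentially the same route as the paper's: a direct case split on whether some variable of $F$ is sent to $0$ for the forward direction, and explicit separating valuations (one variable of $F \setminus E$ set to $0$, or the constant-$1$ valuation) for the contrapositive. Your explicit remark that $S > 0$ (from $\B[E,S] \in \SubTerm$) is what makes the first separating case work is a point the paper uses only implicitly, and is worth making.
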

	\begin{proof}
		We note $t_1 = \B[E, S]$ and $t_2 = \B[F, K]$. Let us suppose $F \subset E$ and $S \leqslant K$.
		Let $\sigma$ be a substitution. 
		\begin{itemize}
			\item If there exists $y \in F$ such that $\sigma(y) = 0$, then $\valuation{t_2}{\sigma} = 0$
			and since $F \subset E$, $\valuation{t_1}{\sigma} = 0$.
			\item Else, $\valuation{t_1}{\sigma} \leqslant K \leqslant S = \valuation{t_2}{\sigma}$.
		\end{itemize}
		In both cases, $\valuation{t_1}{\sigma} \leqslant \valuation{t_2}{\sigma}$ 
		hence $t_1 \leqTerm t_2$.
		
		Now, we show the other implication by contraposition.
		\begin{itemize}
			\item If there exists $y \in F$ such that $y \not \in E$, we take $\sigma$ such that 
			$\sigma(y) = 0$ and forall $z \neq y$, $\sigma(z) = 1$. Then,
			$
			\valuation{t_1}{\sigma} = S > 0 = \valuation{t_2}{\sigma}.
			$
			\item If $S < K$ we take $\sigma$ such that forall $y$, $\sigma(y) = 1$. Then,
			$
			\valuation{t_2}{\sigma} = K > S = \valuation{t_1}{\sigma}.
			$
		\end{itemize}
	\end{proof}

	\begin{proposition} \label{prop-compare_ab}
		Let $E, F \subset \X$, $x \in E$ and $K, S \in \N$. Then
		\[
		\B[E, S] \leqTerm \A[F, x, K] \iff (F \subset E \land S \leqslant K + 1).
		\]
	\end{proposition}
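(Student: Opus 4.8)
The plan is to prove both implications by reasoning pointwise over valuations, exactly in the style of the proofs of \cref{prop-compare_aa,prop-compare_bb}. Write $t_1 = \B[E, S]$ and $t_2 = \A[F, x, K]$, and recall their semantics from \cref{def-sublevel}: $\valuation{t_1}{\sigma}$ is $0$ if some $y \in E$ has $\sigma(y) = 0$ and $S$ otherwise, and $\valuation{t_2}{\sigma}$ is $0$ if some $y \in F$ has $\sigma(y) = 0$ and $\sigma(x) + K$ otherwise.

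For the implication from right to left I would assume $F \subset E$ and $S \leqslant K + 1$ and fix a valuation $\sigma$. If some $y \in F$ has $\sigma(y) = 0$, then $\valuation{t_2}{\sigma} = 0$ and, since $F \subset E$, that same $y$ forces $\valuation{t_1}{\sigma} = 0$, so the inequality holds. Otherwise $\valuation{t_2}{\sigma} = \sigma(x) + K$; if in addition some $y \in E$ has $\sigma(y) = 0$ then $\valuation{t_1}{\sigma} = 0 \leqslant \sigma(x) + K$, and if not, then in particular $\sigma(x) \neq 0$ because $x \in E$, hence $\sigma(x) \geqslant 1$ and $\valuation{t_1}{\sigma} = S \leqslant K + 1 \leqslant \sigma(x) + K = \valuation{t_2}{\sigma}$. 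In every case $\valuation{t_1}{\sigma} \leqslant \valuation{t_2}{\sigma}$, so $t_1 \leqTerm t_2$.

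For the converse I would argue by contraposition, producing in each of the two ways the right-hand side can fail a valuation that separates the two terms. If $F \not\subset E$, pick $z \in F \setminus E$ and set $\sigma(z) = 0$, $\sigma(w) = 1$ for $w \neq z$; then no variable of $E$ is sent to $0$ (as $z \notin E$), so $\valuation{t_1}{\sigma} = S > 0$, while $z \in F$ gives $\valuation{t_2}{\sigma} = 0$, hence $t_1 \not\leqTerm t_2$. If instead $S > K + 1$, take $\sigma$ constantly equal to $1$; then $\valuation{t_1}{\sigma} = S$ and $\valuation{t_2}{\sigma} = \sigma(x) + K = K + 1 < S$, so again $t_1 \not\leqTerm t_2$.

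The one genuinely delicate point — the step I expect to matter most — is the last case of the forward direction: the hypothesis $x \in E$ is exactly what is needed there, since it is what rules out $\sigma(x) = 0$ precisely when $\valuation{t_1}{\sigma} = S$, thereby upgrading the weak bound $S \leqslant K + 1$ into the pointwise inequality $S \leqslant \sigma(x) + K$. Correspondingly, the ``$+1$'' here (as opposed to the bare ``$S \leqslant K$'' in \cref{prop-compare_bb}) is precisely this extra unit coming from $\sigma(x) \geqslant 1$. Everything else is a routine unfolding of \cref{def-sublevel}, following the template already used in the appendix.
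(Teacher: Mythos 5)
Your proof is correct and follows essentially the same route as the paper's: a pointwise case analysis over valuations for the right-to-left direction, and contraposition with explicitly constructed separating valuations for the converse. The only (harmless) divergence is that you obtain $\sigma(x) \geqslant 1$ from the stated hypothesis $x \in E$ via an extra sub-case, whereas the paper derives it from $x \in F$ (the well-formedness condition of the $\A$-sublevel, which is what the statement presumably intends); both yield the same key bound $\valuation{t_1}{\sigma} \leqslant K + 1 \leqslant \sigma(x) + K$.
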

	\begin{proof}
		We note $t_1 = \B[E, S]$ and $t_2 = \A[F, x, K]$. Let us suppose $F \subset E$ and $S \leqslant K + 1$.
		Let $\sigma$ be a substitution. 
		\begin{itemize}
			\item If there exists $y \in F$ such that $\sigma(y) = 0$, then $\valuation{t_2}{\sigma} = 0$
			and since $F \subset E$, $\valuation{t_1}{\sigma} = 0$.
			\item Else, $\sigma(x) \geq 1$ (because $x \in F$) and then 
			$\valuation{t_2}{\sigma} = \sigma(x) + K
			\geq 1 + K \geq S 
			\geq \valuation{t_1}{\sigma}$.
		\end{itemize}
		In both cases, $\valuation{t_1}{\sigma} \leqslant \valuation{t_2}{\sigma}$ 
		hence $t_1 \leqTerm t_2$.
		
		Now, we show the other implication by contraposition. First, we note that $S > 0$.
		\begin{itemize}
			\item If there exists $y \in F$ such that $y \not \in E$, we take $\sigma$ such that 
			$\sigma(y) = 0$ and forall $z \neq y$, $\sigma(z) = 1$. Then,
			$
			\valuation{t_1}{\sigma} = K > 0 = \valuation{t_2}{\sigma}.
			$
			\item If $S > K + 1$ we take $\sigma$ such that forall $y$, $\sigma(y) = 1$. Then,
			$
			\valuation{t_1}{\sigma} = S > K + 1 = \valuation{t_1}{\sigma}.
			$
		\end{itemize}
	\end{proof}

\end{document}

DÉPLACER QUAND JE VAIS PARLER D'INDÉPENDENCE À LA FIN

Intuitively, this last condition corresponds to some independence of the subterms. 

Indeed, if we consider the predicative case, this condition is ensured by the fact that
if $k + x$ is an element of $\set{u_1, \ldots, u_n}$, then the only way to have an equality is to have $k + x$ in $\set{v_1, \ldots, v_m}$ because $k + x$ cannot 
In the predicative case, if $k + x$ is one

\end{document}